\algnewcommand{\IIf}[1]{\State\algorithmicif\ #1\ \algorithmicthen}
\algnewcommand{\IElse}{\algorithmicelse }
\algnewcommand{\EndIIf}{ \unskip\ \ }
\tikzstyle{state}=[circle,inner sep=0pt,minimum size=6mm]
\tikzstyle{stateP}=[rectangle,inner sep=0pt,minimum size=6mm]
\tikzstyle{statebis}=[diamond,thick,inner sep=0pt,minimum size=6mm]
\tikzstyle{small-state}=[circle,thick,inner sep=0pt,minimum size=5mm]
\tikzstyle{small-statebis}=[diamond,thick,inner sep=0pt,minimum size=5mm]
\tikzstyle{init-left}=[pin={[pin distance=7pt,pin edge={<-,black,thick}]left:}]
\tikzstyle{init-right}=[pin={[pin distance=7pt,pin edge={<-,black,thick}]right:}]
\tikzstyle{init-below}=[pin={[pin distance=7pt,pin edge={<-,black,thick}]below:}]
\tikzstyle{vertex}=[double,rounded corners,thick,inner sep=0pt,minimum
\tikzstyle{vertexbis}=[diamond,double,thick,inner sep=0pt,minimum size=4mm]
\tikzstyle{petri-p}=[circle,thick,inner sep=0pt,minimum size=5mm]
\tikzstyle{petri-t}=[rectangle,thick,inner sep=0pt,minimum width=7mm,minimum height=1mm]
\tikzstyle{petri-t2}=[rectangle,thick,inner sep=0pt,minimum width=1mm,minimum height=7mm]
\tikzstyle{petri-tok}=[circle,inner sep=0pt,minimum size=4pt, color=black,fill=black]
\tikzstyle{gate}=[rectangle,thick,inner sep=0pt,minimum size=6mm,draw=black]
\newcommand{\nat}{\mathbb{N}}
\newcommand{\rel}{\mathbb{Z}}
\newcommand{\tuple}[1]{\langle #1 \rangle}
\newcommand{\set}[1]{\{ #1 \}}
\newcommand{\intset}[2]{[#1..#2]}
\newcommand{\trans}{\rightarrow}
\newcommand{\labeltrans}[1]{\xrightarrow{#1}}
\newcommand{\ttrans}{\Rightarrow}
\newcommand{\eventually}{\diamondsuit}
\newcommand{\always}{\Box}
\newcommand{\Interp}[1]{\llbracket #1 \rrbracket}
\newcommand{\vect}[1]{\mathtt{\textbf{#1}}}
\newcommand{\Dist}[1]{{\cal D}(#1)}
\newcommand{\Plays}{\Omega}
\newcommand{\astrat}{\sigma}
\newcommand{\Strats}{\Sigma}
\newcommand{\Outcome}[1]{\mathtt{Plays}(#1)}
\newcommand{\aevent}{{\cal A}}
\newcommand{\Prob}{\mathbb{P}}
\newcommand{\Sup}{\mbox{sup}}
\newcommand{\Enabled}[1]{\mathtt{En}(#1)}
\newcommand{\InvPre}[1]{\mathtt{InvPre}(#1)}
\newcommand{\guardmucalcul}{L^{\textit{sv}}_\mu}
\newcommand{\aplay}{\rho}
\newcommand{\last}[1]{\mathit{last}(#1)}
\newcommand{\env}{\varepsilon}
\newcommand{\Var}{\mathcal{X}}
\newcommand{\tickNO}{\times}
\newcommand{\tickOK}{\checkmark}
\title{Qualitative Analysis of VASS-Induced MDPs\thanks{Technical Report EDI-INF-RR-1422 of 
the School of Informatics at the University of Edinburgh, UK. 
(http://www.inf.ed.ac.uk/publications/report/). Full version (including proofs) of
material presented at FoSSaCS 2016 (Eindhoven, The Netherlands). arXiv.org
1512.08824 - CC BY 4.0.}}
\author{Parosh Aziz Abdulla\inst{1}\thanks{Supported by UPMARC, Uppsala Programming for Multicore Architectures Research Center.} 
\and Radu Ciobanu\inst{2} \and Richard Mayr\inst{2}\thanks{Supported by EPSRC grant EP/M027651/1.} \and
  Arnaud Sangnier\inst{3} \and Jeremy Sproston\inst{4}\thanks{Supported 
by the MIUR-PRIN project CINA and 
the EU ARTEMIS Joint Undertaking under grant agreement no.  332933 (HoliDes).}}
\institute{Uppsala University, Sweden \and University of Edinburgh, UK
  \and LIAFA, Univ Paris Diderot, Sorbonne Paris Cit\'e, CNRS, France
  \and University of Turin, Italy}
\begin{document}

\pagestyle{plain}

\maketitle

\begin{abstract}
We consider infinite-state Markov decision processes (MDPs) that are
induced by extensions of vector addition systems with states (VASS).
Verification conditions for these MDPs are described by reachability and
B\"uchi objectives w.r.t.\ given sets of control-states.
We study the decidability of some qualitative versions of these objectives, i.e.,
the decidability of whether such objectives can be achieved surely, 
almost-surely, or limit-surely.
While most such problems are undecidable in general, some are decidable for
large subclasses in which either only the controller or only the random 
environment can change the counter values (while the other side can only
change control-states).

\end{abstract}

\section{Introduction}

Markov decision processes (MDPs) \cite{Puterman:book,Filar_Vrieze:book} 
are a formal model for games on directed graphs, where certain decisions
are taken by a strategic player (a.k.a. Player 1, or controller)
while others are taken randomly (a.k.a. by nature, or the environment)
according to pre-defined probability distributions.
MDPs are thus a subclass of general 2-player stochastic games, and they are 
equivalent to 1.5-player games in the terminology of
\cite{chatterjee03simple}. They are also called ``games against nature''.

A run of the MDP consists of a sequence of visited states
and transitions on the graph. Properties of the system are expressed via
properties of the induced runs. The most basic objectives are reachability
(is a certain (set of) control-state(s) eventually visited?) and
B\"uchi objectives (is a certain (set of) control-state(s) visited
infinitely often?).

Since a strategy of Player 1 induces a probability distribution of runs
of the MDP, the objective of an MDP is defined in terms of this distribution,
e.g., if the probability of satisfying a reachability/B\"uchi objective is at
least a given constant. The special case where this constant is 1 is a key 
example of a qualitative objective. Here one asks whether Player 1 has a strategy that achieves
an objective surely (all runs satisfy the property) or 
almost-surely (the probability of the runs satisfying the property is $1$).

Most classical work on algorithms for MDPs and stochastic games has focused
on finite-state systems (e.g.,
\cite{Filar_Vrieze:book,shapley-1953-stochastic,condon-1992-ic-complexity}),
but more recently several classes of infinite-state systems have been
considered as well. For instance, MDPs and stochastic games on infinite-state probabilistic recursive systems (i.e.,
probabilistic pushdown automata with unbounded stacks) \cite{Etessami:Yannakakis:ICALP05}
and on one-counter systems \cite{BBEKW10,BBE10} have been studied. Another infinite-state probabilistic model, which is incomparable to 
recursive systems, is a suitable probabilistic extension of Vector Addition
Systems with States (VASS; a.k.a. Petri nets),
which have a finite number of unbounded counters holding natural numbers.

\smallskip

{\bf\noindent Our contribution.}
We study the decidability of probability-1 qualitative reachability and B\"uchi objectives 
for infinite-state MDPs that are induced by suitable probabilistic extensions
of VASS that we call VASS-MDPs.
(Most quantitative objectives in probabilistic VASS are either undecidable, or the solution
is at least not effectively expressible in $(\mathbb{R},+,*,\le)$
\cite{abdulla-decisive-07}.)
It is easy to show that, for general VASS-MDPs, even the simplest of these
problems, (almost) sure reachability, is undecidable.
Thus we consider two monotone subclasses: 1-VASS-MDPs and P-VASS-MDPs.
In 1-VASS-MDPs, only Player 1 can modify counter values while the
probabilistic player
can only change control-states, whereas for P-VASS-MDPs it is vice-versa.
Still these two models induce infinite-state MDPs.
Unlike for finite-state MDPs, it is possible that the value of the MDP,
in the game theoretic sense, is 1, even though there is no single strategy
that achieves value 1. For example, there can exist a family of strategies
$\astrat_\epsilon$ for every $\epsilon >0$, where playing $\astrat_\epsilon$
ensures a probability $\ge 1-\epsilon$ of reaching a given target state,
but no strategy ensures probability $1$.
In this case, one says that the reachability property holds limit-surely, but
not almost-surely (i.e., unlike in finite-state MDPs, almost-surely and 
limit-surely do not coincide in infinite-state MDPs).

We show that even for P-VASS-MDPs, all sure/almost-sure/limit-sure 
reachability/B\"uchi problems are still undecidable. 
However, in the deadlock-free subclass of P-VASS-MDPs, 
the sure reachability/B\"uchi problems become decidable (while the other
problems remain undecidable). In contrast, for 1-VASS-MDPs, the sure/almost-sure/limit-sure 
reachability problem and the sure/almost-sure B\"uchi
problem are decidable.

Our decidability results rely on two different techniques. For the
sure and almost sure problems, we prove that we can reduce them to the
model-checking problem over VASS of a restricted fragment of the
modal $\mu$-calculus that has been proved to be decidable in
\cite{abdulla-solving-13}. For the limit-sure reachability problem in
1-VASS-MDP, we use an algorithm which at each iteration reduces the
dimension of the considered VASS while preserving the limit-sure
reachability properties.

Although we do not consider the class of qualitative objectives referring to 
the probability of (repeated) reachability being strictly greater than $0$,
we observe that reachability on VASS-MDPs in such a setting
is equivalent to reachability on standard VASS (though this correspondence
does not hold for repeated reachability).

{\bf\noindent Outline.}
In Section~\ref{sec:models} we define basic notations and how VASS induce
Markov decision processes.
In Sections~\ref{sec:p-vass-mdp} and 
\ref{sec:1-VASS-MDP}
we consider verification problems for
P-VASS-MDP and 1-VASS-MDP, respectively.
In Section~\ref{sec:conclusion} we summarize the decidability results
(Table~\ref{tab:summary})
and outline future work.

\section{Models and verification problems}\label{sec:models}

Let $\nat$ (resp. $\rel$) denote the set of nonnegative integers (resp. integers). For two integers $i,j$ such that $i \leq j$ we use $\intset{i}{j}$ to represent the set $\set{k \in \rel \mid i \leq k \leq j}$. Given a set $X$ and $n \in \nat \setminus \set{0}$,  $X^n$ is the set of $n$-dimensional vectors with values in $X$.  
We use $\vect{0}$ to denote the vector such that $\vect{0}(i)=0$ for all $i \in \intset{1}{n}$. The classical order on $\rel^n$ is denoted $\leq$ and is defined by $\vect{v} \leq \vect{w}$ if and only if $\vect{v}(i) \leq \vect{w}(i)$ for all $i \in \intset{1}{n}$. We also define the operation $+$ over $n$-dimensional vectors of integers in the classical way (i.e., for $\vect{v}$, $\vect{v}' \in \rel^n$, $\vect{v} + \vect{v}'$ is defined by $(\vect{v}+\vect{v}')(i)=\vect{v}(i)+\vect{v}'(i)$ for all $i \in \intset{1}{n}$). Given a set $S$, we use $S^\ast$ (respectively $S^\omega$) to denote the set of finite (respectively infinite) sequences of elements of $S$. We now recall the notion of well-quasi-ordering (which we abbreviate as wqo). A quasi-order $(A,\preceq)$ is a wqo if for every infinite sequence of elements $a_1,a_2,\ldots$ in $A$, there exist two indices $i<j$ such that $a_i \preceq a_j$. For $n>0$, $(\nat^n,\leq)$ is a wqo. Given a set $A$ with an ordering $\preceq$ and a subset $B \subseteq A$, the set $B$ is said to be \emph{upward closed} in $A$ if $a_1 \in B$, $a_2 \in A$ and $a_1 \preceq a_2$ implies $a_2 \in B$.

\subsection{Markov decision processes}
A probability distribution on a countable set $X$ is a function $f: X \mapsto [0,1]$ such that $\sum_{x \in X}f(x)=1$. We use  $\Dist{X}$ to denote the set of all probability distributions on $X$. 
We first recall the definition of Markov decision processes.

\begin{definition}[MDPs]
A Markov decision process (MDP) $M$ is a tuple $\tuple{C,C_1,\linebreak[0] C_P,A,\trans,p}$ where: $C$ is a countable set of configurations partitioned into $C_1$ and $C_P$ (that is $C=C_1 \cup C_P$ and $C_1 \cap C_P=\emptyset$); $A$ is a set of actions; $\trans \subseteq C \times A \times C$ is a transition relation; $p: C_P \mapsto \Dist{C}$ is a partial function which assigns to some configurations in $C_P$ probability distributions on $C$ such that $p(c)(c')>0$ if and only if $c \labeltrans{a} c'$ for some $a \in A$.
\end{definition}

Note that our definition is equivalent as seeing MDPs as games played between a nondeterministic player (Player 1) and a probabilistic player (Player P). The set $C_1$ contains the nondeterministic configurations (or configurations of Player 1)  and the set $C_P$ contains the probabilistic configurations (or configurations of Player P). Given two configurations $c,c'$ in $C$, we write $c\trans c'$ whenever there exists $a \in A$ such that $c \labeltrans{a} c'$. We will  say that a configuration $c \in C$ is a \emph{deadlock} if there does not exist $c' \in C$ such that $c \trans c'$. We use $C^{df}_1$ (resp. $C^{df}_P$), to denote the configurations of Player 1 (resp. of Player P) which are not a deadlock ($df$ stands here  for deadlock free).

A \emph{play} of the MDP $M=\tuple{C,C_1,C_P,A,\trans,p}$ is either an infinite sequence of the form $c_0 \labeltrans{a_0} c_1 \labeltrans{a_1} c_2 \cdots$  or a finite sequence $c_0 \labeltrans{a_0} c_1 \labeltrans{a_1} c_2 \cdots \labeltrans{a_{k-1}} c_k$. We call the first kind of play an \emph{infinite play}, and the second one a \emph{finite play}. A play is said to be maximal whenever it is infinite or it ends in a deadlock configuration. These latter plays are called deadlocked plays. We use $\Plays$ to denote the set of maximal plays.  For a finite play $\aplay=c_0 \labeltrans{a_0} c_1 \labeltrans{a_1} c_2 \cdots \labeltrans{a_{k-1}} c_k$, let $c_k=\last{\aplay}$.  We use $\Plays^{df}_1$ to denote the set of finite plays $\aplay$ such that  $\last{\aplay} \in C^{df}_1$.

A \emph{strategy} for Player 1 is a function $\astrat : \Plays^{df}_1 \mapsto C$ such that, for all $\rho \in \Plays^{df}_1$ and $c \in C$, if $\astrat(\rho)=c$ then $\last{\aplay} \trans c$. Intuitively, given a finite play $\aplay$, which represents the history of the game so far, the strategy represents the choice of Player 1 among the different possible successor configurations from $\last{\aplay}$. We use $\Strats$ to denote the set of all strategies for Player 1. Given a strategy $\astrat \in \Strats$, an infinite play $c_0 \labeltrans{a_0} c_1 \labeltrans{a_1} c_2 \cdots$ \emph{respects} $\astrat$ if for every $k \in \nat$, we have that if $c_k \in C_1$ then $c_{k+1}=\astrat(c_0 \labeltrans{a_0} c_1 \labeltrans{a_1} c_2 \cdots c_{k})$ and if $c_k \in C_P$ then $p(c_k)(c_{k+1}) >0$. We define finite plays that respect $\astrat$ similarly. Let $\Outcome{M,c,\sigma} \subseteq \Plays$ be the set of all maximal plays of $M$ that start from $c$ and that respect $\sigma$.  


Note that once a starting configuration $c_0 \in C$ and a strategy $\sigma$ have been chosen, the MDP is reduced to an ordinary stochastic process. We define an event $\aevent \subseteq \Omega$  as a measurable set of plays and we use $\Prob(M,c,\sigma,\aevent)$ to denote the probability of event $\aevent$ starting from $c \in C$ under strategy $\astrat $. The notation $\Prob^+(M,c,\aevent)$ will be used to represent the maximal probability of event $\aevent$ starting from $c$ which is defined as $\Prob^+(M,c,\aevent)=\Sup_{\astrat \in \Strats} \Prob(M,c,\sigma,\aevent)$.

\subsection{VASS-MDPs}

Probabilistic Vector Addition Systems with States have been studied, e.g., 
in \cite{abdulla-decisive-07}.
Here we extend this model with non-deterministic choices made by a controller.
We call this new model VASS-MDPs. We first recall the definition of Vector Addition Systems with States.

\begin{definition}[Vector Addition System with States]
For $n>0$, an $n$-dimensional Vector Addition System with States (VASS) is a tuple $S=\tuple{Q,T}$ where $Q$ is a finite set of control states and  $T \subseteq Q \times \rel^n \times Q$ is the transition relation labelled with vectors of integers.
\end{definition}

In the sequel, we will not always make precise the dimension of the considered VASS. Configurations of a VASS are pairs $\tuple{q,\vect{v}} \in Q \times \nat^n$. Given a configuration $\tuple{q,\vect{v}}$ and a transition $t=\tuple{q,\vect{z},q'}$ in $T$, we will say that $t$ is \emph{enabled} at $\tuple{q'',\vect{v}}$, if $q=q''$ and $\vect{v} + \vect{z} \geq \vect{0}$. Let then $\Enabled{q,\vect{v}}$ be the set $\set{t \in T \mid t \mbox{ is enabled at } \tuple{q,\vect{v})}}$. In case the transition $t=\tuple{q,\vect{z},q'}$ is enabled at $\tuple{q,\vect{v}}$, we define $t(q,\vect{v})=\tuple{q',\vect{v}'}$ where $\vect{v}'=\vect{v} + \vect{z}$. An $n$-dimensional VASS $S$ induces a labelled transition system $\tuple{C,T,\trans}$ where $C=Q \times \nat^n$ is the set of configurations and the transition relation $\trans \subseteq C \times T \times C$ is defined as follows:
$
\tuple{q,\vect{v}} \labeltrans{t} \tuple{q',\vect{v}'} \mbox{ iff } \tuple{q',\vect{v}'}=t(q,\vect{v})
$.
%
%
 VASS are sometimes seen as programs manipulating integer variables, a.k.a. counters. When a transition of a VASS changes the $i$-th value of a vector $\vect{v}$, we will sometimes say that it modifies the value of the $i$-th counter. We show now in which manner we add probability distributions to VASS.

\begin{definition}[VASS-MDP]
A VASS-MDP is a tuple $S=\tuple{Q,Q_1,Q_P,T,\tau}$ where $\tuple{Q,T}$ is a VASS for which the set of control states $Q$ is partitioned into $Q_1$  and $Q_P$, and  $\tau : T \mapsto \nat \setminus \set{0}$ is a partial function assigning to each transition a weight which is a positive natural number.
\end{definition}

Nondeterministic (resp. probabilistic) choices are made from control states in $Q_1$ (resp. $Q_P$).
The subset of transitions from control states of $Q_1$ (resp. control states of $Q_P$) is denoted by $T_1$ (resp. $T_P$). Hence $T =T_1 \cup T_P$ with $T_1 \subseteq Q_1 \times \rel^n \times Q$ and $T_P \subseteq Q_P \times \rel^n \times Q$. A VASS-MDP $S=\tuple{Q,Q_1,Q_P,T,\tau}$ induces an MDP $M_S=\tuple{C,C_1,C_P,T,\trans,p}$ where: $\tuple{C,T,\trans}$ is the labelled transition system associated with the VASS $\tuple{Q,T}$; $C_1=Q_1 \times \nat^n$ and $C_P=Q_P \times \nat^n$; and for all $c \in C^{df}_P$ and $c' \in C$, if $c \trans c'$, the probability of going from $c$ to $c'$ is defined by 
$p(c)(c')=(\sum_{\set{t \mid t(c)=c'}} \tau(t)) / (\sum_{t \in \Enabled{c}} \tau(t))$,
whereas if $c \not \trans c'$, we have $p(c)(c')=0$. 
Note that the MDP $M_S$ is well-defined: when defining $p(c)(c')$ in the case $c \trans c'$, there exists at least one transition in $\Enabled{c}$ and consequently the sum $\sum_{t \in \Enabled{c}} \tau(t)$ is never equal to $0$. Also, we could have restricted the weights to be assigned only to transitions leaving from a control state in $Q_P$ since we do not take into account the weights assigned to the other transitions. A VASS-MDP is deadlock free if its underlying VASS is deadlock free.

%
%
Finally, as in \cite{raskin-games-05} or \cite{abdulla-solving-13}, we will see that to gain decidability it is useful to restrict the power of the nondeterministic player or of the probabilistic player by restricting their ability to modify the counters' values and hence letting them only choose a control location. This leads to the two following definitions: a \emph{P-VASS-MDP} is a VASS-MDP $\tuple{Q,Q_1,Q_P,T,\tau}$ such that for all $\tuple{q,\vect{z},q'} \in T_1$, we have $\vect{z}=\vect{0}$ and a \emph{1-VASS-MDP} is a VASS-MDP $\tuple{Q,Q_1,Q_P,T,\tau}$ such that for all $\tuple{q,\vect{z},q'} \in T_P$, we have $\vect{z}=\vect{0}$. In other words, in a P-VASS-MDP, Player 1 cannot change the counter values when taking a transition and, in a 1-VASS-MDP, it is Player P which cannot perform such an action.

\subsection{Verification problems for VASS-MDPs}

We consider qualitative verification problems for
VASS-MDPs, taking as objectives control-state reachability and repeated
reachability.
To simplify the presentation, we consider a single target 
control-state $q_F \in Q$. However, our positive decidability results
easily carry over to sets of target control-states (while the negative ones
trivially do). Note however, that asking to reach a {\em fixed target
configuration} like $\tuple{q_F,\vect{0}}$ is a very different problem
(cf. \cite{abdulla-decisive-07}).

Let $S=\tuple{Q,Q_1,Q_P,T,\tau}$ be a VASS-MDP
and $M_S$ its associated MDP. Given a control state $q_F \in Q$, we denote by $\Interp{\eventually q_F}$ the set of infinite plays $c_0 \cdot c_1
\cdot \cdots$ 
and deadlocked plays $c_0 \cdot \cdots  \cdot c_l$ of $M_S$ for which there
exists an index $k \in \nat$ such that $c_k=\tuple{q_F,\vect{v}}$ for some $\vect{v}
\in \nat^n$. Similarly, $\Interp{\always \eventually q_F}$ characterizes the set of infinite plays $c_0 \cdot c_1
\cdot \cdots$ of $M_S$ for which the set $\set{i \in \nat \mid c_i=\tuple{q_F,\vect{v}} \mbox{ for some } \vect{v} \in \nat^n}$ is infinite. Since $M_S$ is an MDP with a countable number of configurations,
we know that the sets of plays $\Interp{\eventually q_F}$ and $\Interp{\always \eventually q_F}$ are measurable (for
more details see for instance \cite{baier-principles-08}), and are hence
events for $M_S$. 
Given an initial configuration $c_0 \in Q \times \nat^n$ and a control-state $q_F \in Q$, we consider the following questions:
\begin{enumerate}
\item The \emph{sure reachability problem}: Does there exist  a strategy $\sigma \in \Sigma$ such that \\ $\Outcome{\linebreak[0]M_S,c_0,\sigma} \subseteq \Interp{\eventually q_F}$?
\item The \emph{almost-sure reachability problem}: Does there exist  a strategy $\sigma \in \Sigma$ such that $\Prob(M_S,c_0,\sigma,\Interp{\eventually q_F})=1$?
\item The \emph{limit-sure reachability problem}: Does $\Prob^+(M_S,c_0,\Interp{\eventually q_F})=1$?
\item The \emph{sure repeated reachability problem}: Does there exist  a strategy $\sigma \in \Sigma$ such that  $\Outcome{\linebreak[0]M_S,c_0,\sigma} \subseteq \Interp{\always \eventually q_F}$?
\item The \emph{almost-sure repeated reachability problem}: Does there exist  a strategy $\sigma \in \Sigma$ such that $\Prob(M_S,c_0,\sigma,\Interp{\always \eventually q_F})=1$?
\item The \emph{limit-sure repeated reachability problem}: Does $\Prob^+(M_S,c_0,\Interp{\always \eventually q_F})=1$?
\end{enumerate}
Note that sure reachability implies almost-sure reachability, which itself implies limit-sure reachability, but not
vice-versa, as shown by the counterexamples in Figure~\ref{fig:example} (see also \cite{BBEKW10}).
The same holds for repeated reachability. For the sure problems, probabilities are not taken into
account, and thus these problems can be interpreted as the answer to a two
player reachability game played on the transition system of $S$. Such games have been studied for instance in \cite{raskin-games-05,abdulla-monotonic-08,abdulla-solving-13}. Finally, VASS-MDPs subsume deadlock-free VASS-MDPs and thus decidability (resp. undecidability) results carry over to the smaller (resp. larger) class.

\begin{figure}[htbp]
\begin{center}
\begin{tikzpicture}[node distance=1.7cm,>=angle 45]
\node[stateP] (1) [draw=black]{$q_0$}
  edge[<-,loop above] node[auto]{$0$} (1);
\node[state] (2) [draw=black,right of=1] {$q_F$}
  edge[<-] node[auto]{$0$} (1);
\node[state] (3) [draw=black,right of=2] {$q_1$}
  edge[<-,loop above] node[auto]{$+1$} (3);
\node[stateP] (4) [draw=black,right of=3] {$q_2$}
  edge[<-] node[auto]{$0$} (3)
  edge[<-,loop above] node[auto]{-1} (4);
\node[state] (5) [draw=black,right of=4] {$q_F$}
  edge[<-] node[auto]{$-1$} (4);
\end{tikzpicture}
\end{center}

\caption{Two $1$-dimensional VASS-MDPs. The circles (resp. squares) are
  the control states of Player 1 (resp. Player P). All transitions have the same weight $1$.
From $\tuple{q_0,0}$, the state $q_F$ is reached almost-surely, but not surely, due to
the possible run with an infinite loop at $q_0$ (which has probability zero).
From $\tuple{q_1,0}$, the state $q_F$ can be reached limit-surely (by a family of
strategies that repeats the loop at $q_1$ more and more often), but not
almost-surely (or surely), since every strategy has a chance of getting
stuck at state $q_2$ with counter value zero.
}
\label{fig:example}
\end{figure}

\vspace*{-10mm}

\subsection{Undecidability in the general case}

It was shown in \cite{abdulla-monotonic-08}  that the sure reachability
 problem is undecidable for 
 (2-dimensional) 
 two player VASS. From this we can deduce that the sure reachability
 problem is undecidable for VASS-MDPs. We now present a similar proof to show the
 undecidability of the almost-sure reachability problem for VASS-MDPs. 

For all of our undecidability results we use  reductions from the undecidable
control-state reachability problem for Minsky 
machines.
A Minsky machine is a tuple $\tuple{Q,T}$ where $Q$ is a finite set of states and
$T$ is a finite set of transitions manipulating two counters, say $x_1$ and $x_2$. Each transition is a triple of the form $\tuple{q,x_i=0?,q'}$ (counter $x_i$ is tested for 0) or $\tuple{q,x_i:=x_i+1,q'}$ (counter $x_i$ is incremented) or $\tuple{q,x_i:=x_i-1,q'}$ (counter $x_i$ is decremented) where
$q,q' \in Q$. Configurations of a Minsky machine are triples in
$Q \times \nat \times \nat$. The transition relation $\ttrans$ between
configurations of the Minsky machine is then defined in the obvious way. 
Given an initial state $q_I$ and a final state $q_F$, the control-state reachability problem
asks whether there exists a sequence of configurations $\tuple{q_I,0,0} \ttrans
\tuple{q_1,v_1,v'_1} \ttrans \ldots \ttrans \tuple{q_k,v_k,v'_k}$ with $q_k=q_F$. This problem is known to be undecidable \cite{minsky-computation-67}. W.l.o.g. we assume that Minsky machines are deadlock-free and deterministic (i.e., each configuration has always a unique successor) and that the only transition leaving $q_F$ is of the form $\tuple{q_F,x_1:=x_1+1,q_F}$.

\begin{figure}[htbp]
\begin{center}
\begin{tikzpicture}[node distance=1.7cm,>=angle 45]
\node[state] (1) [draw=black]{$q_1$};
\node[state] (2) [draw=black,right of=1] {$q_2$}
  edge[<-] node[auto]{$(1,0)$} (1);
\node[state] (3) [draw=black,right of=2]{$q_3$};
\node[state] (4) [draw=black,right of=3] {$q_4$}
  edge[<-] node[auto]{$(0,-1)$} (3);
\node[state] (5) [draw=black,right of=4] {$q_5$};
\node[stateP] (5bis) [draw=black,right of=5] {}
  edge[<-] node[auto]{$(0,0)$} (5);
\node[state] (6) [draw=black,right of=5bis] {$q_6$}
  edge[<-] node[auto]{$(0,0)$} (5bis);
\node[state] (b) [draw=black,above of=5bis,yshift=-15pt] {$\bot$}
  edge[<-] node[auto]{$(-1,0)$} (5bis)
edge[<-,loop right] node[auto]{$(0,0)$} (6);
\end{tikzpicture}
\end{center}
\caption{Encoding  $\tuple{q_1,x_1:=x_1+1,q_2}$ and $\tuple{q_3,x_2:=x_2-1,q_4}$ and $\tuple{q_5,x_1=0?,q_6}$}
\label{fig:encode-minsky-VASS-MDP}
\end{figure}

\vspace{-5mm}
We now show how to reduce the control-state reachability problem to the almost-sure and limit-sure reachability problems in deadlock-free VASS-MDPs. From a Minsky machine, we construct a deadlock-free $2$-dimensional VASS-MDP for which the control states of Player 1
are exactly the control states of the Minsky machine. The encoding is presented in
Figure~\ref{fig:encode-minsky-VASS-MDP} 
where the circles (resp. squares) are the control states of Player 1 (resp. Player
P), and for each edge the corresponding weight is $1$. The state $\bot$ is an absorbing state from which the unique outgoing transition is a self loop that does not affect the values of the counters. This encoding allows us to deduce our first result.
\begin{theorem}\label{thm:VASS-MDP-undec}
The sure, almost-sure and limit-sure (repeated) reachability problems are
undecidable problems for 2-dimensional deadlock-free VASS-MDPs.
\end{theorem}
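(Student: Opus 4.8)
The plan is to reduce the (undecidable) control-state reachability problem for deterministic deadlock-free Minsky machines to \emph{all six} problems at once, by means of the single construction sketched in Figure~\ref{fig:encode-minsky-VASS-MDP}. From a Minsky machine $\tuple{Q,T}$ with initial state $q_I$, target state $q_F$, and $\tuple{q_F,x_1:=x_1+1,q_F}$ as its only transition leaving $q_F$, I would build a $2$-dimensional VASS-MDP $S$ whose Player 1 control states are exactly $Q$: an increment $\tuple{q,x_i:=x_i+1,q'}$ becomes a single Player 1 edge from $q$ to $q'$ adding $1$ to counter $i$, a decrement becomes the analogous edge subtracting $1$, and a zero-test $\tuple{q,x_i=0?,q'}$ becomes the probabilistic gadget: from $q$ the play moves (deterministically, with label $\vect{0}$) to a Player P control state, from which, with equal weights, it either proceeds to $q'$ or tries the edge subtracting $1$ from counter $i$ into an absorbing sink $\bot\neq q_F$ carrying only a $\vect{0}$ self-loop. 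To make $S$ deadlock-free in the strict sense, every Player 1 state simulating a decrement additionally gets a $\vect{0}$-edge to $\bot$; this is harmless since $\bot$ is absorbing and is not a target. Put $c_0=\tuple{q_I,\vect{0}}$. I claim: (i) if the Minsky machine reaches $q_F$, then $q_F$ is reached \emph{surely} in $M_S$ from $c_0$; (ii) otherwise $\Prob^+(M_S,c_0,\Interp{\eventually q_F})\leq 1/2$.

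For (ii), the crucial observation is that the simulation is \emph{faithful} except for one possible slip: Player 1 may enter a zero-test gadget for counter $i$ while counter $i$ is positive (``cheating''). As long as no such slip has occurred and no escape edge has been taken, every Player P move is forced---inside a zero-test gadget the $-1$ edge to $\bot$ is disabled precisely when the tested counter is $0$---so the play is uniquely determined and equals the faithful simulation of the machine's unique run, which by hypothesis never visits $q_F$. Hence a strategy that never slips yields probability $0$ of $\Interp{\eventually q_F}$. For a strategy that does slip, look at its first slip: the prefix up to there is the faithful run, which has not visited $q_F$ (otherwise the machine would reach $q_F$), and at the slip the play reaches the absorbing $\bot$ with probability $\geq 1/2$ (probability $1/2$ if the slip is a cheated zero-test, probability $1$ if it is an escape edge), so it visits $q_F$ with probability $\leq 1/2$. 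Taking the supremum over all strategies gives (ii).

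For (i), Player 1 simply runs the faithful simulation; it never slips, so the induced maximal play is unique and reaches $q_F$, whence $\Outcome{M_S,c_0,\sigma}\subseteq\Interp{\eventually q_F}$. Deadlock-freeness of $S$ holds because $\bot$ and $q_F$ both carry self-loops (the latter being $\tuple{q_F,x_1:=x_1+1,q_F}$), increments and gadget entries are never blocked, and decrement states carry the added escape edge. The repeated-reachability variants use the same $S$: in case (i) the faithful play, once at $q_F$, loops there forever and thus satisfies $\Interp{\always\eventually q_F}$ surely; in case (ii) the faithful run never visits $q_F$ at all and any slip still sends the play to $\bot\neq q_F$ with probability $\geq 1/2$, so $\Prob^+(M_S,c_0,\Interp{\always\eventually q_F})\leq 1/2$. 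Since sure reachability implies almost-sure reachability implies limit-sure reachability (and likewise for repeated reachability), while $1/2<1$, in every case the answer to all three variants is ``yes'' iff the Minsky machine reaches $q_F$; undecidability of all six problems, i.e.\ of Theorem~\ref{thm:VASS-MDP-undec}, follows.

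The only genuine difficulty is the one intrinsic to simulating a Minsky machine by a VASS: a VASS cannot truly test a counter for zero, so the simulation is only a weak one and a run may ``pass'' a zero-test on a positive counter. The construction must (a) ensure this is the \emph{sole} possible deviation---which is why Player 1 is given no counter move beyond the forced increments and decrements and the useless escape edges---and (b) penalise such a deviation with a fixed positive probability of falling into $\bot$, thereby opening the gap $1/2<1$ that separates the reachable case from the unreachable one. Everything else---deadlock-freeness of $S$, the determinism and deadlock-freeness normalisations of the Minsky machine, measurability of the events---is routine bookkeeping.
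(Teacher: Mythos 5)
Your proposal is correct and follows essentially the same route as the paper: the identical Minsky-machine encoding in which increments/decrements are direct Player~1 edges and each zero-test routes through a Player~P gadget that, with probability $1/2$, punishes a cheat by decrementing into an absorbing sink $\bot$, yielding the gap $\Prob^+\leq 1/2 < 1$ in the negative case and a sure win in the positive case. The only (welcome) additions are that you derive the sure case from the same construction rather than citing prior work on two-player VASS games, and that you patch deadlock-freeness at decrement-only states with a $\vect{0}$-edge to $\bot$, a detail the paper leaves implicit.
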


In the special case of 1-dimensional VASS-MDPs, the sure and almost-sure
reachability problems are decidable 
\cite{BBEKW10}.

\vspace{-3mm}
\subsection{Model-checking $\mu$-calculus on single-sided VASS}

It is well-known that there is a strong connection between model-checking branching time logics and games, and in our case we have in fact undecidability results for simple reachability games played on a VASS and for the model-checking of 
VASS with expressive branching-time logics \cite{esparza-decidability-94}. However for this latter point, decidability can be regained by imposing some restrictions on the VASS structure \cite{abdulla-solving-13} as we will now recall. We say that a VASS $\tuple{Q,T}$ is $(Q_1,Q_2)$-single-sided iff
$Q_1$ and $Q_2$ represents a partition of the set of states $Q$ such that for all transitions $\tuple{q,\vect{z},q'}$ in $T$ with $q \in Q_2$, we have $\vect{z}=\vect{0}$; in other words only the transitions leaving a state from $Q_1$ are allowed to change the values of the counters. In \cite{abdulla-solving-13}, it has been shown that, thanks to a reduction to games played on a single-sided VASS  with parity objectives, a large fragment of the $\mu$-calculus called $\guardmucalcul$ has a decidable model-checking problem over single-sided VASS. The idea of this fragment is that the ``always" operator $\always$ is guarded with a predicate enforcing the current control states to belong to $Q_2$. Formally, the syntax of $\guardmucalcul$ for $(Q_1,Q_2)$-single-sided VASS is given by the following grammar:
$\phi ::=  q ~\mid~ X ~\mid~ \phi \wedge \phi ~\mid~
\phi \vee \phi ~\mid~ \eventually \phi ~\mid ~ Q_2 \wedge
\always \phi ~\mid~ \mu X.\phi ~\mid~ \nu X.\phi$,
where $Q_2$ stands for the formula $\bigvee_{q \in Q_2} q$ and $X$ belongs to a set of variables $\Var$. The semantics of $\guardmucalcul$ is defined as usual: it associates to a formula $\phi$ and to an environment $\env : \Var \rightarrow 2^C$  a subset of configurations $\Interp{\phi}_\env$. We use $\env_0$ to denote the environment which assigns the empty set to any variable. Given an environment $\env$,  a variable $X \in \Var$ and a subset of configurations $C$, we use $\env[X:=C]$ to represent the environment $\env'$ which is equal to $\env$ except on the variable $X$, where we have $\env'(X)=C$. Finally the notation  $\Interp{\phi}$ corresponds to the interpretation $\Interp{\phi}_{\env_0}$. 

The problem of model-checking single-sided VASS with $\guardmucalcul$ can then be defined as follows: given a single-sided VASS $\tuple{Q,T}$, an initial configuration $c_0$ and a formula $\phi$ of $\guardmucalcul$, do we have $c_0 \in \Interp{\phi}$? 

\begin{theorem}\cite{abdulla-solving-13}\label{thm-mucalcul}
Model-checking single-sided VASS wrt. $\guardmucalcul$  is decidable.
\end{theorem}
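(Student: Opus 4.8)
The plan is to follow the route indicated just before the statement (this is the argument of \cite{abdulla-solving-13}): reduce the model-checking question to a parity game played on a single-sided VASS, and then solve such games. First I would set up the standard $\mu$-calculus model-checking game for $\guardmucalcul$. Its positions are the pairs $\tuple{c,\psi}$ with $c$ a configuration of the given $(Q_1,Q_2)$-single-sided VASS and $\psi$ a subformula of the input $\phi$, the initial position being $\tuple{c_0,\phi}$. Verifier owns the positions $\tuple{c,\psi_1\vee\psi_2}$ and $\tuple{c,\eventually\psi}$, Refuter owns $\tuple{c,\psi_1\wedge\psi_2}$ and $\tuple{c,Q_2\wedge\always\psi}$; the fixpoint positions $\tuple{c,\mu X.\psi}$, $\tuple{c,\nu X.\psi}$, $\tuple{c,X}$ have a unique counter-preserving successor that unfolds the fixpoint; and a leaf $\tuple{\tuple{q,\vect v},q'}$ is won by Verifier iff $q=q'$. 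From $\tuple{c,\eventually\psi}$ Verifier moves to $\tuple{c',\psi}$ for some VASS-successor $c'$ of $c$, and from $\tuple{c,Q_2\wedge\always\psi}$ Refuter loses at once if the control-state of $c$ is not in $Q_2$ and otherwise moves to $\tuple{c',\psi}$ for a successor $c'$ of his choice. Priorities are read off from the alternation depth of the fixpoint variables in the usual way, so that an infinite play is won by Verifier iff the outermost fixpoint unfolded infinitely often is a $\nu$. The classical correctness proof of the model-checking game --- an induction on $\phi$ using the Tarski-fixpoint semantics of $\Interp{\cdot}$ --- then gives $c_0 \in \Interp{\phi}$ iff Verifier wins from $\tuple{c_0,\phi}$.

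Next I would observe that this game is itself played on a single-sided VASS, with Refuter in the role of the counter-blind player, and that this is precisely what the $Q_2$-guard on $\always$ is for. Refuter moves only from $\wedge$-positions, which leave the counters untouched, and from $Q_2$-guarded $\always$-positions, where the control-state lies in $Q_2$, so by single-sidedness of the input VASS the transition Refuter takes has update $\vect 0$. Hence no Refuter move changes the counters, and after cosmetic adjustments (making won leaves and deadlocked positions absorbing sinks of the appropriate priority) the arena is a single-sided VASS with a parity acceptance condition. Deciding the original model-checking question thus reduces to deciding the winner of a parity game on a single-sided VASS from a given position.

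Finally I would prove decidability of parity games on single-sided VASS, which is where I expect the real difficulty: unrestricted $\mu$-calculus model-checking on VASS is undecidable \cite{esparza-decidability-94}, so single-sidedness must be used in an essential way. The key monotonicity observation is that Refuter's transitions are counter-independent while enabledness of a VASS transition is upward-closed in the counter values; consequently, for a reachability-style objective the winning region of the counter-controlling player (Verifier) is upward-closed in $\nat^n$, and is therefore finitely representable by its minimal elements and computable by an attractor iteration whose termination is guaranteed by $(\nat^n,\le)$ being a wqo (Dickson's lemma), via an effective predecessor operation. One then lifts this from reachability to parity by a Zielonka-style recursion on the number of priorities: peel off the attractor to the highest priority, recurse on the remaining subgame --- again a single-sided VASS game, now restricted to a downward-closed set of configurations --- and recombine the winning regions. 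The delicate points, namely keeping every intermediate winning region effectively representable and proving termination of the nested computation, are exactly what the guardedness condition on $\always$ makes possible, and are the heart of \cite{abdulla-solving-13}; the first two steps above are essentially bookkeeping.
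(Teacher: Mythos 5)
This theorem is an imported result: the paper gives no proof of its own, only the citation to \cite{abdulla-solving-13} and a one-line description of the method (``a reduction to games played on a single-sided VASS with parity objectives''), which is exactly the route you sketch. Your outline --- model-checking game, observation that the $Q_2$-guard on $\always$ keeps Refuter counter-blind so the game arena is again single-sided, then decidability of such parity games via upward-closed winning regions and wqo arguments --- matches the approach of the cited work as the paper describes it, with the one minor remark that the actual termination mechanism in \cite{abdulla-solving-13} relies on repeated use of the Valk--Jantzen construction to compute minimal elements of upward-closed sets (as the paper notes in its conclusion) rather than a plain attractor iteration, a detail you correctly defer to the source.
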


\section{Verification of P-VASS-MDPs}
\label{sec:p-vass-mdp}

In \cite{abdulla-solving-13} it is proved that parity
games played on a single-sided deadlock-free VASS are decidable (this
entails the decidability
of model checking $\guardmucalcul$ over single-sided VASS). We will see here that in the case of P-VASS-MDPs, in which only
the probabilistic player can modify the counters, the decidability
status depends on the presence of deadlocks in the system.

\subsection{Undecidability in presence of deadlocks}

We point out that the reduction presented in Figure
\ref{fig:encode-minsky-VASS-MDP} to prove Theorem~\ref{thm:VASS-MDP-undec}
does not carry over to P-VASS-MDPs, because in that construction both players have
the ability to change the counter values. However, it is 
possible to perform a similar reduction leading to
the undecidability of verification problems for P-VASS-MDPs, the main
difference being that we crucially exploit the fact that the
P-VASS-MDP can contain deadlocks.

We now
explain the idea behind our encoding of Minsky machines into
P-VASS-MDPs. Intuitively, Player 1 chooses a transition of the Minsky
machine to simulate,
anticipating the modification of the counters values, and 
Player P is then in charge of performing the change. If Player 1
chooses a transition with a decrement and the accessed counter value is actually 0,
then Player P will be in a deadlock state and consequently the desired
control state will not be reached.  
Furthermore, if Player 1 decides to perform a zero-test when the counter
value is strictly positive, then Player P is able to punish this choice
by entering a deadlock state. Similarly to the proof of
Theorem~\ref{thm:VASS-MDP-undec}, 
Player P can test if the value of the counter is strictly greater than $0$ by
decrementing it. The encoding of the Minsky machine is presented in
Figure~\ref{fig:encode-minsky-pVASS-MDP-1}.  Note that no outgoing
edge of Player 1's states changes the counter values. Furthermore, we see that Player P reaches
the control state $\bot$ if and only if Player 1 chooses to take a
transition with a zero-test when the value of the tested counter is
not equal to $0$. Note that, with the encoding of the transition
$\tuple{q_3,x_2:=x_2-1,q_4}$, when
Player P is in the control state between $q_3$ and $q_4$, it can be in
a deadlock if the value of the second counter is not positive. In the
sequel we will see that in P-VASS-MDP without deadlocks the sure
reachability problem becomes decidable.

\begin{figure}[htbp]
\begin{center}
\begin{tikzpicture}[node distance=1.2cm,>=angle 45]
\node[state] (1) [draw=black]{$q_1$};
\node[stateP] (1bis) [draw=black,below of=1] {}
  edge[<-] node[auto]{$(0,0)$} (1);
\node[state] (2) [draw=black,below of=1bis]{$q_2$}
  edge[<-] node[auto]{$(1,0)$} (1bis);
\node[state] (3) [draw=black,right of=1,xshift=15pt]{$q_3$};
\node[stateP] (3bis) [draw=black,below of=3] {}
  edge[<-] node[auto]{$(0,0)$} (3);
\node[state] (4) [draw=black,below of=3bis]{$q_4$}
  edge[<-] node[auto]{$(0,-1)$} (3bis);
\node[state] (5) [draw=black,right of=3,xshift=15pt]{$q_5$};
\node[stateP] (5bis) [draw=black,below of=5] {}
  edge[<-] node[auto]{$(0,0)$} (5);
\node[state] (6) [draw=black,below of=5bis]{$q_6$}
  edge[<-] node[auto]{$(0,0)$} (5bis);
\node[state] (Bad) [draw=black,right of=5bis,xshift=15pt]{$\bot$}
  edge[<-] node[auto]{$(-1,0)$} (5bis)
  edge[<-,loop right] node[auto]{$(0,0)$} (5bis);
\end{tikzpicture}
\end{center}
\caption{Encoding  $\tuple{q_1,x_1:=x_1+1,q_2}$ and $\tuple{q_3,x_2:=x_2-1,q_4}$ and $\tuple{q_5,x_1=0?,q_6}$}
\label{fig:encode-minsky-pVASS-MDP-1}
\end{figure}

From this encoding we deduce the following result.

\begin{theorem}\label{thm-pvamdp-undec-sure}
The sure, almost sure and limit sure (repeated) reachability problems are
undecidable for 2-dimensional P-VASS-MDPs.
\end{theorem}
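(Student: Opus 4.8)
The plan is to reduce the control-state reachability problem for deterministic deadlock-free Minsky machines to each of the six problems, using the gadget of Figure~\ref{fig:encode-minsky-pVASS-MDP-1}. Given a Minsky machine $\tuple{Q,T}$ with initial state $q_I$ and final state $q_F$, I would build a $2$-dimensional P-VASS-MDP whose Player-1 states are exactly the control states of the machine, and whose Player-P states are the intermediate ``execution'' states introduced by the gadgets. The two counters of the P-VASS-MDP track the two counters $x_1,x_2$ of the machine. A Player-1 state $q$ with an outgoing machine transition $t$ has a $\vect{0}$-labelled edge to an intermediate Player-P state, from which Player~P is forced to apply the actual counter effect of $t$ (an increment edge, a decrement edge, or the zero-test punishment branch). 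Since the machine is deterministic, every Player-1 state $q \neq q_F$ that is not a zero-test source has exactly one outgoing edge, so Player~1 has no real choice; at a zero-test source $q$ with transition $\tuple{q,x_i=0?,q'}$, Player~1's single ``simulate'' edge leads to a Player-P state that can either proceed to $q'$ with a $\vect{0}$-labelled edge, or punish an incorrect guess by taking the $(-1,0)$-type edge to $\bot$ (possible precisely when the tested counter is positive), where $\bot$ is absorbing via a $\vect{0}$ self-loop. By the standing assumption, $q_F$ has the single outgoing self-loop $\tuple{q_F,x_1{:=}x_1{+}1,q_F}$, so once $q_F$ is reached it is revisited forever.

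Next I would verify the correctness of the reduction. The key invariant is that from the initial configuration $\tuple{q_I,\vect{0}}$ there is a unique maximal play that never enters $\bot$, and it faithfully simulates the unique run of the Minsky machine: Player~P is never in a deadlock along it (decrements are only forced when the machine would decrement a positive counter), and no zero-test punishment is triggered because the tests are correct. Every other maximal play from $\tuple{q_I,\vect{0}}$ either reaches $\bot$ (and then loops forever in $\bot$, never seeing $q_F$) or deadlocks at a Player-P decrement state (again never seeing $q_F$). Hence $q_F$ is reached by \emph{some} strategy from $\tuple{q_I,\vect{0}}$ iff the faithful simulation reaches $q_F$ iff the machine reaches $q_F$. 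I then argue that for this particular instance all six qualitative notions coincide: if the machine reaches $q_F$, the (essentially forced) faithful play reaches $q_F$ after finitely many steps and then loops at $q_F$ forever, so it is in $\Interp{\eventually q_F}$ and in $\Interp{\always\eventually q_F}$, and moreover Player~P's choices along the way are irrelevant to reaching $q_F$ (the only genuine Player-P choice is whether to punish, and not punishing is forced for the faithful play to continue), so the single strategy of Player~1 that always plays the ``simulate'' edge achieves $q_F$ \emph{surely}, hence the sure, almost-sure and limit-sure (repeated) reachability problems all answer ``yes''. Conversely, if the machine does not reach $q_F$, then no maximal play from $\tuple{q_I,\vect{0}}$ ever visits $q_F$, so every one of the six problems answers ``no''. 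Therefore a decision procedure for any of the six problems would decide control-state reachability for Minsky machines, which is impossible.

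A small subtlety I would address carefully is making the constructed P-VASS-MDP itself well-defined and, where needed, exhibiting that the \emph{other} (non-sure) problems remain undecidable despite the later decidability of sure reachability in the deadlock-free subclass: the point is precisely that this reduction crucially uses deadlocks at Player-P decrement states and at the $\bot$-punishment mechanism, so it does not contradict the deadlock-free result. I would also note that the construction stays within dimension $2$, since it only ever adds or subtracts $1$ from one of the two counters at a time, and that $\bot$ being absorbing ensures a play entering $\bot$ is maximal and avoids $q_F$. The main obstacle is not any deep argument but the bookkeeping needed to confirm the invariant that the faithful simulation is the unique play avoiding $\bot$ and deadlocks, and to check that Player~1 cannot gain anything by playing the zero-test ``simulate'' edge dishonestly --- here one uses that Player~P, not Player~1, controls the counter change, so a dishonest zero-test at a positive counter is always punishable by Player~P routing to $\bot$, and that an honest branch is always available when the test is correct. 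Once that invariant is in place, the collapse of all six qualitative notions for this instance is immediate and the theorem follows.
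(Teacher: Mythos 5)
Your construction is exactly the paper's (Figure~\ref{fig:encode-minsky-pVASS-MDP-1}: Player~1 selects the machine transition via a $\vect{0}$-labelled edge, Player~P executes the counter effect, a decrement on an empty counter deadlocks Player~P, and a dishonest zero-test can be diverted to the absorbing state $\bot$), and your ``yes'' direction is sound. The gap is in the ``no'' direction. Your key invariant --- that the faithful simulation is the unique maximal play avoiding $\bot$ and deadlocks, so that if the machine does not reach $q_F$ then \emph{no} maximal play from $\tuple{q_I,\vect{0}}$ visits $q_F$ --- is false. When Player~1 dishonestly enters a zero-test gadget while the tested counter is positive, the intermediate Player-P state has \emph{two} enabled transitions, and the branch in which Player~P takes the $(0,0)$-edge to the test's target (rather than the $(-1,0)$-edge to $\bot$) neither enters $\bot$ nor deadlocks; it continues, with unchanged counters, from a control state the machine would not have reached, and may perfectly well go on to visit $q_F$. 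Player~P is random, not adversarial, so ``punishable'' only means punished with probability $1/2$; cheating plays that reach $q_F$ do exist, and the claimed collapse of the six notions is not immediate. (For the \emph{sure} variants your conclusion is easily recovered: under any deviating strategy the $\bot$-branch or the deadlocked play is a failing maximal play, and under the non-deviating strategy the faithful play fails.)

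What is missing is the quantitative argument needed for the almost-sure and limit-sure variants, which is precisely where the construction has to do real work. If the machine does not reach $q_F$, then under any strategy the play up to the first deviation is the faithful simulation (along an honest prefix every intermediate P-state has a single enabled edge, so there is no probabilistic branching there), and it never visits $q_F$; a deviation into a decrement gadget with the counter at $0$ deadlocks, contributing probability $0$; and a deviation into a zero-test gadget with the counter positive is diverted to $\bot$ with probability $1/2$ (more generally, with a fixed positive probability determined by the weights). Hence every strategy reaches $q_F$ with probability at most $1/2$, so the almost-sure and limit-sure (repeated) reachability answers are ``no'', uniformly over strategies; together with the $q_F$ self-loop handling the repeated objectives and your correct ``yes'' direction, this yields the theorem. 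Without this bound, your proof as written establishes only the sure cases.
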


\subsection{Sure (repeated) reachability in deadlock-free P-VASS-MDPs}

Unlike in the case of general P-VASS-MDPs, we will see that the sure (repeated) reachability problem is decidable for
deadlock-free P-VASS-MDPs.
Let $S=\tuple{Q,Q_1,Q_P,T,\tau}$ be a deadlock-free P-VASS-MDP,
$M_S=(C,C_1,C_P,\trans,p)$ its associated MDP
and $q_F \in Q$ a control state. Note that because the P-VASS-MDP $S$ is
deadlock free, Player P cannot take the play to a deadlock  to avoid the control state
$q_F$, but he has to deal only with infinite plays. Since $S$ is a P-VASS-MDP, the VASS  
$\tuple{Q,T}$ is $(Q_P,Q_1)$-single-sided. 
In \cite{raskin-games-05,abdulla-monotonic-08}, it
has been shown that control-state reachability games on deadlock-free single-sided
VASS are decidable, and this result
has been extended to parity games in
\cite{abdulla-solving-13}. This implies the decidability of sure
(repeated) reachability in deadlock-free P-VASS-MDPs.
However, to obtain a generic way of verifying these systems, 
we construct a formula of $\guardmucalcul$ that characterizes the
sets of winning configurations and use then the result of Theorem~\ref{thm-mucalcul}.
Let $V^P_S$ be the set of configurations from which the answer to the
sure reachability problem (with $q_F$ as state to be reached) is
negative, i.e., $V^P_S=\set{c \in C \mid \nexists \sigma \in \Sigma
  \mbox{ s.t. } \Outcome{M_S,c,\sigma} \subseteq \Interp{\eventually
    q_F}}$ and similarly let  $W^P_S=\set{c \in C \mid
  \nexists \sigma \in \Sigma  \mbox{ s.t. } \Outcome{M_S,c,\sigma}
  \subseteq \Interp{\always \eventually q_F}}$. The next lemma relates
these two sets with a formula of $\guardmucalcul$ (where $Q_P$ corresponds to the formula $\bigvee_{q \in Q_P}$ and $Q_1$ corresponds to the formula $\bigvee_{q \in Q_1} q$). 

\begin{lemma}
\label{lem:formulae-p-pVASS-MDP}~
\begin{itemize}
\item $V^P_S=\Interp{\nu X. (\bigvee_{q \in Q \setminus
    \set{q_F}} q) \wedge (Q_1 \vee \eventually X) \wedge (Q_P \vee
  (Q_1 \wedge \always X))}$.
\item $W^P_S=\Interp{\mu Y. \nu X. \big( (\bigvee_{q \in Q \setminus
    \set{q_F}} q) \wedge (Q_1 \vee \eventually X) \wedge (Q_P \vee
   (Q_1 \wedge \always X)) \vee (q_F \wedge Q_P \wedge \eventually Y)  \vee (q_F \wedge Q_1 \wedge \always Y) \big)}$
\end{itemize}
\end{lemma}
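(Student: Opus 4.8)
First I would reformulate the two \emph{sure} problems as a turn-based two-player game on the transition system of $S$, since probabilities are irrelevant for sure objectives: by definition of $p$ a maximal play respects a Player~1 strategy $\astrat$ exactly when at each $C_P$-configuration it moves to \emph{some} successor, so $\Outcome{M_S,c,\astrat}\subseteq\Interp{\eventually q_F}$ says precisely that, playing $\astrat$, Player~1 forces a visit to $q_F$ against every behaviour of Player~P, and similarly for $\Interp{\always\eventually q_F}$. As $S$ is deadlock-free, every maximal play of $M_S$ is infinite, so $V^P_S$ is the complement of Player~1's winning region for the reachability objective $\eventually q_F$ and $W^P_S$ the complement of Player~1's winning region for the B\"uchi objective $\always\eventually q_F$. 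The plan is then to write these complements as the usual $\mu$-calculus fixpoint formulas for game winning regions and to rewrite them into the syntax of $\guardmucalcul$.

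Write $\mathtt{CPre}(Z)$ for the set of configurations $c$ having \emph{some} successor in $Z$ if $c\in C_P$ and \emph{all} successors in $Z$ if $c\in C_1$, and $\mathtt{CPre}_1(Z)$ for the dual operator (the two players exchanged); using $Q=Q_1\cup Q_P$ disjointly, $\mathtt{CPre}(Z)$ is captured over control states by $(Q_1\vee\eventually Z)\wedge(Q_P\vee(Q_1\wedge\always Z))$. By the standard correspondence between $\mu$-calculus fixpoints and turn-based games, which holds over arbitrary (in particular countably infinite) arenas by transfinite iteration of the fixpoints — the least-fixpoint approximants being indexed by the ordinal ranks of the winning positions, with ranking-based rather than memoryless winning strategies — Player~1's winning region for $\eventually q_F$ equals $\mu X.(q_F\vee\mathtt{CPre}_1(X))$ and Player~1's winning region for $\always\eventually q_F$ equals $\nu Y.\mu X.\big((q_F\wedge\mathtt{CPre}_1(Y))\vee((\bigvee_{q\neq q_F}q)\wedge\mathtt{CPre}_1(X))\big)$.

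Next I would complement these. Using the De~Morgan laws of the $\mu$-calculus, the identity $\overline{\mathtt{CPre}_1(Z)}=\mathtt{CPre}(\overline Z)$, and a case split of each body on whether the current control state is $q_F$, the first formula complements to $\nu X.\big((\bigvee_{q\neq q_F}q)\wedge\mathtt{CPre}(X)\big)$ and the second to $\mu Y.\nu X.\big(((\bigvee_{q\neq q_F}q)\wedge\mathtt{CPre}(X))\vee(q_F\wedge\mathtt{CPre}(Y))\big)$. Substituting the control-state expression for $\mathtt{CPre}$ above — and noting that $q_F\wedge\mathtt{CPre}(Y)$ is equivalent to $(q_F\wedge Q_P\wedge\eventually Y)\vee(q_F\wedge Q_1\wedge\always Y)$ — reproduces the two formulas of the statement for $V^P_S$ and $W^P_S$.

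Finally I would check that these are formulas of $\guardmucalcul$: nesting of $\mu$ and $\nu$ is allowed, and since $S$ is a P-VASS-MDP its underlying VASS is $(Q_P,Q_1)$-single-sided, so the fragment requires every $\always$ to be guarded by the predicate $Q_1$ (the side that cannot change the counters) — which is satisfied here, each occurrence of $\always$ sitting inside a conjunction with $Q_1$. The hard part will be making the fixpoint characterization of the winning regions fully rigorous over the \emph{infinite-state} arena $M_S$: on a finite graph one iterates the operators finitely and reads off memoryless strategies, whereas here one must argue with transfinite approximants and a well-founded ranking — on one side to synthesize a Player~1 strategy forcing infinitely many visits to $q_F$ while staying inside $\overline{W^P_S}$, on the other a Player~P strategy from $W^P_S$ after which $q_F$ is visited only finitely often. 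Deadlock-freeness is used exactly here: it guarantees that no configuration is stuck, so Player~P cannot defeat an objective merely by reaching a dead end, which is what makes the $\mathtt{CPre}$-based fixpoints coincide with the game-theoretic winning regions (without it, for instance, a stuck $C_P$-configuration at a non-$q_F$ control state would lie in $V^P_S$ yet be excluded from the greatest fixpoint).
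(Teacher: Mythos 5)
Your proposal is correct and arrives at exactly the formulas of the lemma, but it organizes the work differently from the paper. You obtain the two formulas by writing down the standard $\mu$-calculus characterizations of Player~1's winning regions for reachability and B\"uchi games, complementing them via De~Morgan and $\overline{\mathtt{CPre}_1(\overline Z)}=\mathtt{CPre}(Z)$, and then expanding $\mathtt{CPre}$ into the control-state case split; the load-bearing step --- that these fixpoint characterizations remain valid over the infinite arena $M_S$ --- is delegated to the general game/$\mu$-calculus correspondence with transfinite approximants and rank-based strategies, which you correctly flag as the hard part but do not carry out. The paper (which itself calls the result ``quite standard'') never invokes that general correspondence: it works directly with the complemented formulas, i.e.\ the sets of configurations from which Player~1 has \emph{no} winning strategy, and proves each equality as two set inclusions via Knaster--Tarski. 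For one direction it shows the candidate set is a pre-/post-fixpoint of the body of the formula (using deadlock-freeness exactly where you predict, to produce from a Player-P configuration in $V^P_S$ a successor that stays in $V^P_S$); for the other it explicitly builds the contradicting plays --- an infinite play trapped in the fixpoint and avoiding $q_F$ for the reachability case, and, for the B\"uchi case, an iterated application of the reachability argument showing that from $\overline{U}$ Player~1 can surely reach a set $T\subseteq\Interp{q_F}$ from which it can force re-entry into $\overline{U}$, hence visit $q_F$ infinitely often. So the content of the paper's appendix proof is precisely the verification of the fact you take as given; your route is a legitimate shortcut that yields the formula shapes more transparently, while the paper's hands-on version is self-contained and makes explicit both the role of deadlock-freeness and why the guarded form $Q_1\wedge\always(\cdot)$ suffices.
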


Note that we use $(Q_P \vee
  (Q_1 \wedge \always X))$ instead of $(Q_P \vee \always X)$ so that the formulae are in
  the guarded fragment of the $\mu$-calculus. Since the two formulae belong to $\guardmucalcul$ for
the $(Q_P,Q_1)$-single-sided VASS $S$, 
decidability follows directly from Theorem \ref{thm-mucalcul}.

\begin{theorem}\label{thm-sure-reach-pvamdp}
The sure reachability and repeated reachability problem are decidable for deadlock free P-VASS-MDPs.
\end{theorem}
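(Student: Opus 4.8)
\noindent\emph{Proof sketch.}
The plan is to turn each of the two decision problems into a model-checking question for $\guardmucalcul$ over a single-sided VASS and then appeal to Theorem~\ref{thm-mucalcul}. Write $\phi_V$ and $\phi_W$ for the first and the second formula appearing in Lemma~\ref{lem:formulae-p-pVASS-MDP}. The key ingredients are already present: since $S=\tuple{Q,Q_1,Q_P,T,\tau}$ is a P-VASS-MDP, every transition in $T_1$ leaves the counters unchanged, so the underlying VASS $\tuple{Q,T}$ is $(Q_P,Q_1)$-single-sided in the sense of Section~\ref{sec:p-vass-mdp}, and Lemma~\ref{lem:formulae-p-pVASS-MDP} already provides fixpoint formulas whose denotations are exactly the ``losing'' sets $V^P_S$ and $W^P_S$.

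First I would check that $\phi_V$ and $\phi_W$ really belong to $\guardmucalcul$ for this $(Q_P,Q_1)$-single-sided VASS: in both formulas every occurrence of $\always$ is conjoined with the predicate $Q_1=\bigvee_{q\in Q_1}q$, which is precisely the guard required by the grammar of $\guardmucalcul$ (the predicate forcing the current state to lie in the restricted side of the partition), while the remaining connectives ($\vee$, $\wedge$, $\eventually$, $\mu$, $\nu$ and control-state atoms) are unrestricted. Hence, given an initial configuration $c_0$, one decides the sure reachability problem by testing whether $c_0\in V^P_S=\Interp{\phi_V}$ and answering ``yes'' exactly when $c_0\notin V^P_S$, and one decides the sure repeated reachability problem by testing whether $c_0\in W^P_S=\Interp{\phi_W}$ and answering ``yes'' exactly when $c_0\notin W^P_S$; both membership tests terminate by Theorem~\ref{thm-mucalcul}. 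Deadlock-freeness is used here: it guarantees that Player~P cannot win by driving a play into a deadlock distinct from $q_F$, so $q_F$ can be avoided only along an infinite play, which is exactly what the greatest-fixpoint parts of $\phi_V$ and $\phi_W$ describe; without this hypothesis the characterizations would need adjusting and, as Theorem~\ref{thm-pvamdp-undec-sure} shows, decidability is in fact lost.

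The real content is therefore Lemma~\ref{lem:formulae-p-pVASS-MDP}, which I would establish through the standard correspondence between $\mu$-calculus fixpoints and attractor/safety computations in the non-stochastic reachability game underlying $M_S$. For $V^P_S$: unfolding $\nu X$ shows that $\Interp{\phi_V}$ is the largest set $Z$ of non-$q_F$ configurations such that every $C_1$-configuration of $Z$ has all of its successors in $Z$ whereas every $C_P$-configuration of $Z$ has at least one successor in $Z$. The inclusion $\Interp{\phi_V}\subseteq V^P_S$ follows by letting Player~P always move to a successor inside $Z$ (which exists by deadlock-freeness together with the closure property), producing an infinite play that never meets $q_F$; the converse $V^P_S\subseteq\Interp{\phi_V}$ follows because $V^P_S$ itself enjoys the same closure property and so lies below the greatest fixpoint. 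For $W^P_S$ the argument is similar but with the extra outer $\mu Y$: the inner $\nu X$ sub-formula captures ``from here on $q_F$ is visited no more'', and the least fixpoint over $Y$ forces the number of visits to $q_F$ occurring before a play settles permanently into that safe region to be finite --- which is exactly the complement of $\always\eventually q_F$. The main obstacle is making this last matching rigorous: one has to show that the nesting $\mu Y.\nu X$ corresponds to ``finitely many further visits to $q_F$, then permanent avoidance'', and also that replacing $Q_P\vee\always X$ by the guarded variant $Q_P\vee(Q_1\wedge\always X)$ leaves the denotation unchanged (it does, since on $Q_1$-states the two agree and on $Q_P$-states the disjunct already holds).
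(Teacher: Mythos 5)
Your proposal is correct and follows essentially the same route as the paper: it invokes the characterizations $V^P_S$ and $W^P_S$ from Lemma~\ref{lem:formulae-p-pVASS-MDP}, observes that these formulae lie in $\guardmucalcul$ for the $(Q_P,Q_1)$-single-sided VASS (every $\always$ guarded by $Q_1$), and decides both problems by the membership test of Theorem~\ref{thm-mucalcul}, answering ``yes'' iff $c_0$ is outside the corresponding set. Your sketch of the lemma itself (greatest-fixpoint safety argument for $V^P_S$, with deadlock-freeness supplying successors, and the $\mu\nu$ nesting for $W^P_S$) also mirrors the paper's appendix proof.
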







\subsection{Almost-sure and limit-sure reachability in deadlock-free P-VASS-MDPs}

We have seen that,
unlike for the general case, the sure reachability and sure repeated reachability
problems are decidable for deadlock free P-VASS-MDPs, with 
deadlock freeness being necessary to obtain decidability. For the corresponding almost-sure and limit-sure problems
we now show undecidability, again using a reduction from the reachability problem for two counter Minsky machines, as shown in
Figure~\ref{fig-encode-minsky-pvamdp-2}. The main difference with the construction used for the proof of Theorem~\ref{thm-pvamdp-undec-sure} lies in the
addition of a self-loop in the encoding of the transitions for decrementing a counter, 
in order to avoid deadlocks. If Player 1, from a configuration $\tuple{q_3,\vect{v}}$, 
chooses the transition $\tuple{q_3,x_2:=x_2-1,q_4}$ which decrements the second counter, then
the probabilistic state with the self-loop is entered,
and there are two possible cases:
if $\vect{v}(2) > 0$ then the probability of staying forever in this loop is $0$ and the probability of eventually going to state $q_4$ is $1$;
on the other hand, if $\vect{v}(2)=0$ then the probability of staying forever in the self-loop is $1$, since the other transition that leaves the state of Player P and which performs the decrement on the second counter effectively is not available. Note that such a construction does not hold in the case of sure reachability, because the path that stays forever in the loop is a valid path.


\begin{figure}[htbp]
\begin{center}
\begin{tikzpicture}[node distance=1.2cm,>=angle 45]
\node[state] (1) [draw=black]{$q_1$};
\node[stateP] (1bis) [draw=black,below of=1] {}
  edge[<-] node[auto]{$(0,0)$} (1);
\node[state] (2) [draw=black,below of=1bis]{$q_2$}
  edge[<-] node[auto]{$(1,0)$} (1bis);
\node[state] (3) [draw=black,right of=1,xshift=25pt]{$q_3$};
\node[stateP] (3bis) [draw=black,below of=3] {}
  edge[<-] node[auto]{$(0,0)$} (3)
 edge[<-,loop left] node[auto]{$(0,0)$} (3bis);;
\node[state] (4) [draw=black,below of=3bis]{$q_4$}
  edge[<-] node[auto]{$(0,-1)$} (3bis);
\node[state] (5) [draw=black,right of=3,xshift=15pt]{$q_5$};
\node[stateP] (5bis) [draw=black,below of=5] {}
  edge[<-] node[auto]{$(0,0)$} (5);
\node[state] (6) [draw=black,below of=5bis]{$q_6$}
  edge[<-] node[auto]{$(0,0)$} (5bis);
\node[state] (Bad) [draw=black,right of=5bis,xshift=15pt]{~$\bot$~~}
  edge[<-] node[auto]{$(-1,0)$} (5bis)
  edge[<-,loop right] node[auto]{$(0,0)$} (5bis);
\end{tikzpicture}
\end{center}
\caption{Encoding  $\tuple{q_1,x_1:=x_1+1,q_2}$ and $\tuple{q_3,x_2:=x_2-1,q_4}$ and $\tuple{q_5,x_1=0?,q_6}$}
\label{fig-encode-minsky-pvamdp-2}
\end{figure}

This allows us to deduce the following result for deadlock free P-VASS-MDPs.

\begin{theorem}\label{thm-pvamdp-undec-almsost}
The almost-sure and limit-sure (repeated) reachability problems are
undecidable for 
2-dimensional
deadlock-free P-VASS-MDPs.
\end{theorem}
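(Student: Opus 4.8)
The plan is to prove both directions of a reduction from the (undecidable) control-state reachability problem for deterministic, deadlock-free two-counter Minsky machines to each of the four problems in the statement. Given such a machine with initial state $q_I$ and final state $q_F$, let $S$ be the $2$-dimensional P-VASS-MDP obtained by replacing every Minsky transition with the corresponding gadget of Figure~\ref{fig-encode-minsky-pvamdp-2}, take $c_0=\tuple{q_I,\vect{0}}$ as initial configuration and $q_F$ as target control state. First I would record the easy structural facts: no transition leaving a state of Player~1 changes a counter, so $S$ is genuinely a P-VASS-MDP; every state of Player~P has an enabled transition for every counter valuation --- the decrement gadget thanks to the added $(0,0)$ self-loop, the zero-test gadget thanks to its $(0,0)$-edge to the next state, and $\bot$ thanks to its self-loop --- and every Minsky state keeps at least one successor (deadlock-freeness of the machine), so $S$ is deadlock-free; and $S$ uses precisely two counters.

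Next I would analyse the two nontrivial gadgets as a function of the counter value. In the decrement gadget for counter $x_i$: if $x_i>0$ then from the Player~P state both the self-loop and the decrementing transition are enabled (each with probability $\frac{1}{2}$), so the play almost surely leaves through the decrement and moves on; if $x_i=0$ the decrement is disabled, so with probability $1$ the play stays in the self-loop forever. In the zero-test gadget for $x_i$: if $x_i=0$ the play surely proceeds to the next control state; if $x_i>0$ the Player~P state also enables the transition to the absorbing state $\bot$, taken with probability $\frac{1}{2}$. Using these facts, for the ``yes'' direction: if the machine reaches $q_F$, let $\sigma$ be the Player~1 strategy that always enters the gadget of the unique Minsky transition enabled at the correctly reconstructed current configuration; then, since the decrement gadgets are never entered with a zero counter and the zero-test gadgets never with a positive counter along the prefix leading to $q_F$, the play almost surely follows the unique run of the machine and hence reaches $q_F$ with probability $1$; since moreover the only transition out of $q_F$ is $\tuple{q_F,x_1:=x_1+1,q_F}$, whose gadget is deterministic and always enabled, $q_F$ is in fact visited infinitely often with probability $1$. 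So when the machine reaches $q_F$, all of almost-sure and limit-sure (repeated) reachability hold.

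For the ``no'' direction I would fix an arbitrary strategy $\sigma$ and let $T$ be the first position of the induced play at which $\sigma$, at a Player~1 state, does not choose the gadget matching the Minsky transition enabled at the faithfully reconstructed configuration reached so far ($T=\infty$ if this never happens). On $\{T=\infty\}$ the play coincides, up to a probability-$0$ event (being trapped in some decrement self-loop), with the unique run of the machine, which by assumption never visits $q_F$. On $\{T<\infty\}$ the configuration reached at time $T$ is a genuine Minsky configuration $\tuple{q,\vect{v}}$, and by determinism and deadlock-freeness of the machine the only possible non-faithful moves are: entering a decrement gadget for $x_i$ with $\vect{v}(i)=0$, after which the play is trapped forever with probability $1$; or entering a zero-test gadget for $x_i$ with $\vect{v}(i)>0$, after which the play is absorbed in $\bot\neq q_F$ with probability $\frac{1}{2}$. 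Hence $\Prob(M_S,c_0,\sigma,\Interp{\eventually q_F})\le\frac{1}{2}$, so $\Prob^+(M_S,c_0,\Interp{\eventually q_F})\le\frac{1}{2}<1$, and the same bounds hold for $\Interp{\always\eventually q_F}$ since $\Interp{\always\eventually q_F}\subseteq\Interp{\eventually q_F}$. Thus when the machine does not reach $q_F$, none of the four problems hold, and combining the two directions gives the stated undecidability.

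I expect the delicate point to be the ``no'' direction, specifically making rigorous the notion of the faithfully reconstructed Minsky configuration along a play and the claim that before the first deviation Player~1 really has no non-faithful option other than the two catastrophic ones above --- this needs the exact shapes of the gadgets together with determinism and deadlock-freeness of the Minsky machine --- and checking that the probabilistic self-loops traversed before time $T$ add no reachable control state, so that $\{T=\infty\}$ contributes probability $0$ to $\Interp{\eventually q_F}$. The rest is bookkeeping once the per-gadget probabilities above are in place.
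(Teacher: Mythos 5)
Your proposal is correct and follows essentially the same route as the paper: the paper proves this theorem only by the sketch accompanying Figure~\ref{fig-encode-minsky-pvamdp-2}, relying on exactly the two gadget observations you make (the decrement self-loop traps the play with probability $1$ iff the counter is zero, and a wrong zero-test is punished by moving to $\bot$ with probability $\frac{1}{2}$). Your write-up merely fills in the bookkeeping (the stopping time $T$, the $\le\frac{1}{2}$ bound uniform over strategies, and the reduction of repeated reachability to reachability via $\Interp{\always\eventually q_F}\subseteq\Interp{\eventually q_F}$) that the paper leaves implicit.
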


\section{Verification of 1-VASS-MDPs}
\label{sec:1-VASS-MDP}

In this section, we will provide decidability results for the subclass of
1-VASS-MDPs. As for deadlock-free P-VASS-MDPs, the proofs for sure and
almost-sure problems use the decidability of $\guardmucalcul$ over
single-sided VASS, whereas the technique used to show decidability of limit-sure reachability is different.

\subsection{Sure problems in 1-VASS-MDPs}

First we show that, unlike for P-VASS-MDPs, deadlocks do not matter for
1-VASS-MDPs. The idea is that in this case, if the deadlock is in a
probabilistic configuration, it means that there is no outgoing edge (because
of the property of 1-VASS-MDPs), and hence one can add an edge to a new
absorbing state, and the same can be done for the states of Player 1. Such a
construction does not work for P-VASS-MDPs, because in that case 
deadlocks in probabilistic configurations may depend on the counter values, and
not just on the current control-state.

\begin{lemma}\label{lemma-onenpvass-nodeadlock}
  The sure (resp. almost sure, resp. limit sure) (repeated) reachability problem for 1-VASS-MDPs reduces to the
  sure (resp. almost sure, resp. limit-sure) (repeated) reachability problem for deadlock-free 1-VASS-MDPs.
\end{lemma}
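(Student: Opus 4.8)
The plan is to give an explicit syntactic transformation of a 1-VASS-MDP $S = \tuple{Q,Q_1,Q_P,T,\tau}$ into a deadlock-free 1-VASS-MDP $S' = \tuple{Q',Q'_1,Q'_P,T',\tau'}$ such that reachability/repeated-reachability of $q_F$ in $M_S$ is preserved in all three qualitative modes. The key observation, which the lemma statement already hints at, is that in a 1-VASS-MDP the probabilistic transitions do not change the counters ($\vect{z}=\vect{0}$ for all $\tuple{q,\vect{z},q'} \in T_P$), so whether a configuration $\tuple{q,\vect{v}}$ is a deadlock depends only on the control state $q$: a probabilistic configuration $\tuple{q,\vect{v}}$ with $q \in Q_P$ is a deadlock iff $q$ has no outgoing transition at all. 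For Player~1 configurations a deadlock can still depend on $\vect{v}$ (an outgoing transition with a decrement may be disabled when a counter is $0$), but that is harmless because we can simply give Player~1 a way out that does not reach $q_F$.

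First I would add two fresh absorbing control states $q_{\mathit{sink}} \in Q'_1$ and, if needed for the repeated-reachability case, keep them distinct from $q_F$; each has a single self-loop labelled $\vect{0}$ (with weight $1$), and neither equals $q_F$. Then: (i) for every $q \in Q_P$ with no outgoing transition in $T$, add a transition $\tuple{q,\vect{0},q_{\mathit{sink}}}$ to $T'$; since $q$ was a deadlock control state for Player~P, this is the only new choice and, being forced, it does not affect which plays are possible except by replacing a dead end with an infinite non-accepting tail. (ii) For every $q \in Q_1$, add a transition $\tuple{q,\vect{0},q_{\mathit{sink}}}$ to $T'$. This guarantees every Player~1 configuration is non-deadlocked regardless of counter values, and it only gives Player~1 an extra option that leads to $q_{\mathit{sink}} \neq q_F$; a rational strategy trying to reach (or repeatedly reach) $q_F$ would never need it, and adding it can neither create new ways of reaching $q_F$ nor destroy existing ones.

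Next I would argue correctness for each mode. Let $c_0 = \tuple{q_0,\vect{v}_0}$ with $q_0 \neq q_{\mathit{sink}}$. For the \emph{sure} problems: a strategy $\astrat$ in $M_S$ with $\Outcome{M_S,c_0,\astrat} \subseteq \Interp{\eventually q_F}$ never lets a play reach a deadlock before seeing $q_F$ (a deadlocked play not through $q_F$ violates the inclusion), so the only deadlocks it ever faces are irrelevant; the same strategy, lifted verbatim to $S'$ (never using the new Player~1 edges), witnesses sure reachability in $M_{S'}$, because the only added plays go through forced $q_{\mathit{sink}}$-transitions out of former Player~P deadlocks, which $\astrat$ avoided. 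Conversely, from a winning strategy in $M_{S'}$ that is ``reasonable'' (never voluntarily moving to $q_{\mathit{sink}}$, and never reaching a former-Player-P-deadlock control state, since that forces $q_{\mathit{sink}}$), one reads off a winning strategy in $M_S$; if a winning strategy in $M_{S'}$ did reach such a control state, the play would get stuck at $q_{\mathit{sink}}$ without $q_F$, contradicting surety (using $q_{\mathit{sink}} \neq q_F$). For the \emph{almost-sure} and \emph{limit-sure} problems the argument is the same in spirit: the probability measures on plays of $M_S$ and $M_{S'}$ restricted to strategies that avoid the new edges and the former-deadlock probabilistic states are identical, because the probabilistic transition structure is unchanged elsewhere; reaching a former-Player-P-deadlock state contributes a positive-probability (indeed, a whole-subtree) set of plays that stay in $q_{\mathit{sink}}$ and never see $q_F$, so an (almost-/limit-) optimal strategy in $M_{S'}$ has no incentive to enter it, and can be mirrored back to $M_S$ with the same probability of $\eventually q_F$ (resp. $\always\eventually q_F$); the converse direction lifts a strategy from $M_S$ unchanged. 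The repeated-reachability cases are identical once one notes that $q_{\mathit{sink}} \neq q_F$, so a tail in $q_{\mathit{sink}}$ never satisfies $\always\eventually q_F$.

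The main obstacle is purely bookkeeping rather than conceptual: one must be careful that the new Player~1 edges do not accidentally help Player~1 in the repeated-reachability objective or in almost-sure/limit-sure reachability — this is handled by the choice $q_{\mathit{sink}} \neq q_F$ and by observing that a forced or voluntary move to the absorbing sink can only remove accepting plays, never add them, so the set of configurations from which $q_F$ is (surely / almost-surely / limit-surely) (repeatedly) reachable is unchanged. A second small point to check is that the transformation is computable and stays within the 1-VASS-MDP class (the added transitions carry $\vect{0}$, so $T'_P$ still only has zero-effect transitions), which is immediate from the construction.
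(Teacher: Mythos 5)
Your proposal is correct and takes essentially the same approach as the paper: both exploit that in a 1-VASS-MDP a Player~P configuration is a deadlock iff its control state has no outgoing transition, and both repair deadlocks by routing them (and all Player~1 states) to an absorbing non-target sink via zero-effect transitions, arguing that these additions only create losing continuations and hence preserve all six qualitative properties. The only cosmetic difference is that the paper adds a self-loop directly on deadlocked Player~P states rather than an edge to the sink, which is equivalent.
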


Hence in the sequel we will consider only deadlock-free 1-VASS-MDPs. Let $S=\tuple{Q,Q_1,Q_P,T,\tau}$ be a deadlock-free 1-VASS-MDP. For what concerns the sure (repeated) reachability problems we can directly reuse the results from Lemma \ref{lem:formulae-p-pVASS-MDP} and then show that the complement formulae of the ones expressed in this lemma belong to  $\guardmucalcul$ for the $(Q_1,Q_P)$-single-sided VASS $\tuple{Q,T}$ (in fact the correctness of these two lemmas did not depend on the fact that we were considering P-VASS-MDPs). Theorem \ref{thm-mucalcul} allows us to retrieve the decidability results already expressed in \cite{raskin-games-05} (for sure reachability) and \cite{abdulla-solving-13} (for sure repeated reachability).

\begin{theorem}\label{thm-sure-1vamdp}
  The sure (repeated) reachability problem is decidable for 1-VASS-MDPs.
\end{theorem}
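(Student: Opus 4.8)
The plan is to reduce the sure (repeated) reachability problem for 1-VASS-MDPs to the model-checking problem for $\guardmucalcul$ over single-sided VASS, and then invoke Theorem~\ref{thm-mucalcul}. By Lemma~\ref{lemma-onenpvass-nodeadlock} it suffices to treat deadlock-free 1-VASS-MDPs $S=\tuple{Q,Q_1,Q_P,T,\tau}$. The key structural observation is that, because in a 1-VASS-MDP the probabilistic player cannot change counter values (every transition in $T_P$ has effect $\vect{0}$), the underlying VASS $\tuple{Q,T}$ is $(Q_1,Q_P)$-single-sided. This is the mirror image of the P-VASS-MDP case treated in Lemma~\ref{lem:formulae-p-pVASS-MDP}, where the VASS was $(Q_P,Q_1)$-single-sided, so the roles of the two players are swapped.

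First I would recall that the sure (repeated) reachability problem is purely a two-player game on the transition system of $S$ (probabilities are irrelevant), so the set of configurations from which Player~1 has a winning strategy is the complement of the sets $V^P_S$ (for reachability) and $W^P_S$ (for repeated reachability) as defined before Lemma~\ref{lem:formulae-p-pVASS-MDP}, except that now it is Player~1 (not Player~P) who may change counters. Concretely, $V^1_S=\set{c \in C \mid \nexists \sigma \in \Sigma \mbox{ s.t. } \Outcome{M_S,c,\sigma} \subseteq \Interp{\eventually q_F}}$ is the set of configurations from which Player~P (the adversary) can force avoidance of $q_F$. I would observe that the fixpoint characterisations of Lemma~\ref{lem:formulae-p-pVASS-MDP} are symmetric in the two players up to exchanging $Q_1 \leftrightarrow Q_P$ and $\eventually \leftrightarrow \always$: the correctness argument there never used that the counters were controlled by Player~P rather than Player~1. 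Hence for deadlock-free 1-VASS-MDPs one obtains
\[
  V^1_S=\Interp{\nu X. (\textstyle\bigvee_{q \in Q \setminus \set{q_F}} q) \wedge (Q_P \vee \eventually X) \wedge (Q_1 \vee (Q_P \wedge \always X))}
\]
and an analogous $\mu$-over-$\nu$ formula for $W^1_S$, obtained from the one in Lemma~\ref{lem:formulae-p-pVASS-MDP} by the same swap. The winning sets for sure (repeated) reachability from $c_0$ are then the complements $C \setminus V^1_S$ and $C \setminus W^1_S$.

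The remaining step is to check that the relevant formula lies in $\guardmucalcul$ for the $(Q_1,Q_P)$-single-sided VASS. Here the guardedness discipline requires every $\always$ to be conjoined with the predicate forcing the current state into the counter-\emph{non-changing} side, which is now $Q_P$; and indeed in the swapped formulae $\always X$ appears only inside $Q_P \wedge \always X$, as required. To express the \emph{winning} set rather than the losing set, I would complement: since $\guardmucalcul$ is not closed under negation in general, I would instead directly write down the dual fixpoint formula (swapping $\mu \leftrightarrow \nu$, $\wedge \leftrightarrow \vee$, $\eventually \leftrightarrow \always$, and $q_F \leftrightarrow \bigvee_{q\neq q_F} q$, negating atoms) for ``Player~1 wins the sure reachability (resp. repeated reachability) game'' and then verify that this dual is again guarded for the $(Q_1,Q_P)$-single-sided VASS --- the $\always$ operators in the dual come from the $\eventually$'s attached to Player~P moves, and Player~P moves do not change counters, so they sit under a $Q_P$ guard. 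Decidability then follows immediately from Theorem~\ref{thm-mucalcul}.

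The main obstacle I anticipate is the bookkeeping around guardedness under dualisation: one must make sure that after swapping the players and taking the dual formula, every $\always$ is still guarded by the predicate $Q_P$ (the set whose outgoing transitions are $\vect{0}$), and that no $\always$ ends up guarded by $Q_1$, which would fall outside $\guardmucalcul$. This is exactly why Lemma~\ref{lem:formulae-p-pVASS-MDP} was already phrased with the seemingly redundant conjunct $(Q_P \vee (Q_1 \wedge \always X))$ rather than $(Q_P \vee \always X)$; the same trick, with $Q_1$ and $Q_P$ interchanged, must be applied consistently here. Once that syntactic check goes through, the theorem is a direct consequence of Theorem~\ref{thm-mucalcul}, and it recovers the decidability results of \cite{raskin-games-05} for sure reachability and \cite{abdulla-solving-13} for sure repeated reachability.
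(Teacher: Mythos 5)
Your overall route is exactly the paper's: reduce to deadlock-free 1-VASS-MDPs via Lemma~\ref{lemma-onenpvass-nodeadlock}, observe that $\tuple{Q,T}$ is $(Q_1,Q_P)$-single-sided, reuse the fixpoint characterisations of Lemma~\ref{lem:formulae-p-pVASS-MDP}, pass to the complement (dual) formulae so that every $\always$ ends up guarded by $Q_P$, and conclude with Theorem~\ref{thm-mucalcul}. However, there is a concrete error in the middle step. The characterisations of Lemma~\ref{lem:formulae-p-pVASS-MDP} must be reused \emph{verbatim}, not with $Q_1$ and $Q_P$ exchanged. As you yourself note, the correctness proof of that lemma is purely game-theoretic and never uses which player owns the counters; the conclusion to draw is that the \emph{same} formula $\nu X.\, (\bigvee_{q \neq q_F} q) \wedge (Q_1 \vee \eventually X) \wedge (Q_P \vee (Q_1 \wedge \always X))$ still denotes the set you call $V^1_S$, i.e.\ the configurations from which Player~1 has no sure winning strategy. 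Your swapped formula $\nu X.\, (\bigvee_{q \neq q_F} q) \wedge (Q_P \vee \eventually X) \wedge (Q_1 \vee (Q_P \wedge \always X))$ asks for \emph{some} safe successor at $Q_1$-configurations and \emph{all} successors safe at $Q_P$-configurations, so it denotes the set from which \emph{Player~1} can force the play to avoid $q_F$ forever --- a different and irrelevant set. The objective roles of the two players do not swap when passing from P-VASS-MDPs to 1-VASS-MDPs; only the counter ownership does, and that affects guardedness, not the shape of the fixpoint.

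The swap also contradicts your own guardedness argument: dualising your formula yields $\mu X.\, q_F \vee (Q_P \wedge \eventually X) \vee (Q_1 \wedge \always X)$, in which $\always$ is guarded by $Q_1$, the counter-changing side of a $(Q_1,Q_P)$-single-sided VASS, so it lies \emph{outside} $\guardmucalcul$. Dualising the unswapped formula gives $\mu X.\, q_F \vee (Q_1 \wedge \eventually X) \vee (Q_P \wedge \always X)$, which is guarded exactly as you describe in your final paragraph (the $\always$'s arise from the $\eventually$'s attached to Player~P positions, which do not touch the counters). So your closing intuition is correct while your displayed formulae are not; dropping the swap, and making the analogous correction in the $\nu$-over-$\mu$ formula for repeated reachability, turns your argument into the paper's proof.
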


\subsection{Almost-sure problems in 1-VASS-MDPs}

We now move to the case of almost-sure problems in 1-VASS-MDPs. We consider a deadlock free 1-VASS-MDP $S=\tuple{Q,Q_1,Q_P,T,\tau}$ and its associated MDP $M_S=\tuple{C,C_1,C_P,\trans,p}$. We will see that, unlike for P-VASS-MDPs, it is here also possible to characterize by formulae of $\guardmucalcul$ the two following  sets: $V^1_{AS}=\set{c \in C \mid \exists \sigma \in \Sigma  \mbox{ such that  } \Prob(M_S,c,\sigma,\Interp{\eventually q_F})=1}$ and $W^1_{AS}=\set{c \in C \mid \exists \sigma \in \Sigma  \mbox{ such that  } \Prob(M_S,c,\sigma,\Interp{\always \eventually q_F})=1}$, i.e. the set of configurations from which Player 1 has a strategy to reach the control state $q_F$, respectively to visit infinitely often $q_F$, with probability 1.

We begin with introducing the following formula of $\guardmucalcul$ based on the variables $X$ and $Y$:
$
\InvPre{X,Y}= (Q_1 \wedge \eventually (X \wedge Y)) \vee (\eventually Y \wedge Q_P \wedge  \always X)
$.
Note that $\InvPre{X,Y}$ is a formula of $\guardmucalcul$ for the $(Q_1,Q_P)$-single-sided VASS $\tuple{Q,T}$. Intuitively, this formula represents the set of configurations from which (i)~Player 1 can make a transition to the set represented by the intersection of the sets characterized by the variables $X$ and $Y$ and (ii)~Player P can make a transition to the set $Y$ and cannot avoid making a transition to the set $X$.\\

\subsubsection{Almost sure reachability.}

We will now prove that $V^1_{AS}$ can be characterized by the following formula of $\guardmucalcul$: $\nu X. \mu Y. ( q_F \vee \InvPre{X,Y})$. Note that a similar result exists for finite-state MDPs, see e.g. \cite{chatterjee-qualitative-09}; this result in general does not extend to infinite-state MDPs, but in the case of VASS-MDPs it can be applied. Before proving this we need some intermediate results.

We denote by $E$ the set $\Interp{\nu X. \mu Y. \big ( q_F \vee \InvPre{X,Y}\big )}_{\env_0}$. Since $\nu X. \mu Y. \big ( q_F \vee \InvPre{X,Y}\big)$ is a formula of $\guardmucalcul$ interpreted over the single-sided VASS $\tuple{Q,T}$, we can show that $E$ is an upward-closed set. We now need another lemma which states that there exists $N \in \nat$ and a strategy for Player 1 such that, from any configuration of $E$, Player 1 can reach the control state $q_F$ in less than $N$ steps and Player P cannot take the play outside of $E$. The fact that we can bound the number of steps is crucial to show that $\Interp{\nu X. \mu Y. \big ( q_F \vee \InvPre{X,Y}\big )}_{\env_0}$ is equal to $V^1_{AS}$. For infinite-state MDPs where this property does not hold, our techniques do not apply.

\begin{lemma}\label{lemma-N-strat}
There exists $N \in \nat$ and a strategy $\sigma$ of Player 1 such that for all $c \in E$, there exists a play $c \cdot c_1 \cdot c_2 \cdot \ldots$ in $\Outcome{M_S,c,\sigma}$ satisfying the three following properties: (1) there exists $0 \leq i \leq N$ such that $c_i \in \Interp{q_F}$; (2) for all $0 \leq j \leq i$, $c_j \in E$; (3) for all $0 \leq j \leq i$, if $c_j \in C_P$ then for all $c'' \in C$ such that $c_j \trans c''$, we have $c'' \in E$.
\end{lemma}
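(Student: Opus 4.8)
The plan is to exploit the fixpoint structure of $E=\Interp{\nu X.\mu Y.(q_F\vee\InvPre{X,Y})}_{\env_0}$ together with the fact that $E$ is upward closed. First I would observe that, since $E$ is a $\nu X$-fixpoint, it satisfies $E=\Interp{\mu Y.(q_F\vee\InvPre{X,Y})}_{\env_0[X:=E]}$, so every configuration of $E$ lies in some finite approximant $Y_k$ of this inner least fixpoint, where $Y_0=\emptyset$ and $Y_{k+1}=\Interp{q_F\vee\InvPre{X,Y}}_{\env_0[X:=E,Y:=Y_k]}$. For $c\in Y_k$, by unfolding the definition of $\InvPre{X,Y}$ one gets a ``local'' one-step progress property: either $c\in\Interp{q_F}$, or $c\in C_1$ and $c$ has a successor in $E\cap Y_{k-1}$, or $c\in C_P$, all $C_P$-successors of $c$ lie in $E$, and at least one lies in $Y_{k-1}$. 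Define the rank $r(c)$ of $c\in E$ to be the least $k$ with $c\in Y_k$; the progress property says that from any non-$q_F$ configuration $c$ of rank $k$, Player~1 can (if $c\in C_1$) or Player~P is forced to (if $c\in C_P$, for at least one branch; all branches stay in $E$) move to a configuration of strictly smaller rank inside $E$. This already yields, for each fixed $c$, a strategy and a bound (namely $r(c)$) witnessing properties~(1)--(3) \emph{for that $c$}.

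The real content is making the bound $N$ \emph{uniform} over all of $E$. Here I would use that $E$ is upward closed in $Q\times\nat^n$, hence by Dickson's lemma (wqo-ness of $(\nat^n,\le)$) $E$ is a finite union of ``cones'' $\{q\}\times(\vect{b}_{q,j}+\nat^n)$ determined by finitely many minimal configurations $m_1,\dots,m_s$. Let $N=\max_i r(m_i)$. The claim is that $r(c)\le N$ for every $c\in E$: if $c\ge m_i$ with $r(m_i)\le N$, then the Player~1 strategy witnessing the short path from $m_i$ can be ``transported upward'' to $c$, because all transitions of the single-sided VASS are monotone in the counters (adding a constant vector to the counters preserves enabledness of every transition and commutes with the transition effect), and the relevant target sets $Y_{k-1}\cap E$ are upward closed too, so each step that was legal and rank-decreasing from $m_i$ remains legal and lands in the same (upward-closed) target set from $c$. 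Hence the same $\le N$-step path to $q_F$ exists from $c$, staying in $E$, with $C_P$-configurations on it having all successors in $E$. This requires a small amount of care for the $C_P$-steps: at a probabilistic configuration on the path we must check that \emph{all} successors are in $E$ (property~(3)), which follows because the $\InvPre$ disjunct for $C_P$ already demands $\eventually Y\wedge Q_P\wedge\always X$ with $X$ interpreted as $E$, i.e.\ $\always E$ — and $\always E$ from $m_i$ lifts to $\always E$ from $c$ by monotonicity and upward closure of $E$.

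The main obstacle I anticipate is precisely this ``lifting'' argument: one must define a single strategy $\sigma$ (not depending on $c$) that simultaneously realizes all these transported short paths, and verify it is a well-defined Player~1 strategy on $\Plays^{df}_1$. The clean way is to fix, for each minimal configuration $m_i$, a concrete rank-decreasing path $\pi_i$ inside $E$ to $q_F$ of length $\le N$ together with its Player~1 choices, and then let $\sigma$, at any history ending in $c\in C^{df}_1\cap E$, pick $m_i$ to be some minimal element below $c$ and play the corresponding move of $\pi_i$ shifted by the vector $c-m_i$ (on configurations outside $E$, or not on such a path, $\sigma$ plays arbitrarily). One then checks by induction along the path that the reached configurations are exactly the configurations of $\pi_i$ shifted by $c-m_i$, hence in $E$ (upward closure), with the desired $C_P$-successor property, and that $q_F$ is hit within $N$ steps. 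I do not expect the probabilistic side to cause trouble beyond bookkeeping, since $\always$-guarantees and upward closure interact well; the only genuinely delicate point is ensuring the finitely-many shifted paths can be amalgamated into one globally consistent $\sigma$, which the ``pick any minimal element below the current configuration'' recipe resolves.
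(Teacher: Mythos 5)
Your first two steps are sound and close in spirit to the paper's proof: unfolding the inner least fixpoint into approximants $Y_k$ (with $X$ interpreted as $E$), reading off the one-step progress property, and getting a per-configuration rank bound. Your uniformity argument via the finitely many minimal elements of $E$ is a workable alternative to the paper's route (the paper instead observes that the approximants $F_0\subseteq F_1\subseteq\cdots$ are themselves upward closed by Lemma~\ref{lem-phi-upcl}, so by the wqo property the chain stabilizes at some $F_N=E$, which bounds all ranks by $N$ at once). Note, though, that your assertion that every configuration of $E$ lies in some \emph{finite} approximant needs a word of justification (either $\omega$-continuity of the operator, which holds here, or the paper's stabilization argument); and your ``transport the path upward'' argument is really just a hands-on proof that each approximant is upward closed, which Lemma~\ref{lem-phi-upcl} gives you directly.

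The genuine gap is in the final strategy construction. Your recipe --- at every history ending in $c$, pick \emph{some} minimal element $m_i\le c$ and play the first move of $\pi_i$ shifted by $c-m_i$ --- does not support the induction you claim. After one step the current configuration changes, the minimal element chosen below it (and hence the shift) may change as well, so the play is not ``exactly $\pi_i$ shifted''. Worse, the rank of the \emph{current} configuration need not decrease under this rule: the move is dictated by the rank of the chosen minimal element, which can be as large as $N$ even when the current configuration already has small rank, and upward closure only gives $r(c)\le r(m_i)$, i.e.\ the inequality in the unhelpful direction. Consequently nothing forces a $\sigma$-consistent play to reach $\Interp{q_F}$ within $N$ steps (or at all), and the amalgamation problem you flag is not resolved by ``pick any minimal element below the current configuration''. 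Two repairs work: make $\sigma$ history-dependent, fixing $m_i$ and the position along $\pi_i$ at the start of the play; or, as the paper does, drop the minimal elements altogether and define a memoryless strategy by the rank of the \emph{current} configuration --- from $c$ of rank $j\ge 2$ move to a successor in $F_{j-1}$ (such a successor exists by the definition of $\InvPre{X,Y}$, and at probabilistic configurations all successors lie in $E$ while some successor lies in $F_{j-1}$). Then every step of the cooperative play strictly decreases the rank, which yields the bound $N$ and properties (2) and (3) directly.
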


This previous lemma allows us to characterize $V^1_{AS}$ with a formula of
$\guardmucalcul$. The proof of the following result uses the fact that the
number of steps is bounded, and also the fact that the sets described by closed
$\guardmucalcul$ formulae are upward-closed. This makes the fixpoint iteration
terminate in a finite number of steps.

\begin{lemma}\label{lem-probab-formula}
  $V^1_{AS}=\Interp{\nu X. \mu Y. ( q_F \vee \InvPre{X,Y})}$.
\end{lemma}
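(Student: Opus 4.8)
\textbf{Proof plan for Lemma~\ref{lem-probab-formula}.}
The plan is to prove the two inclusions $E \subseteq V^1_{AS}$ and $V^1_{AS} \subseteq E$ separately, where $E = \Interp{\nu X. \mu Y. ( q_F \vee \InvPre{X,Y})}$. For the inclusion $E \subseteq V^1_{AS}$, I would use the strategy $\sigma$ and the bound $N$ provided by Lemma~\ref{lemma-N-strat}. Starting from any $c \in E$, that lemma guarantees a play consistent with $\sigma$ that reaches $q_F$ within $N$ steps while staying inside $E$ and never letting Player P escape $E$. The key observation is that from \emph{every} configuration in $E$, under $\sigma$, there is a uniform probability bound $\beta > 0$ (depending only on $N$ and the finitely many weights occurring in $S$, since within $N$ steps only boundedly many distinct transitions can be used) of reaching $q_F$; indeed, property (3) ensures that whatever Player P does, the play remains in $E$, so after at most $N$ steps either $q_F$ has been hit or we are again in some configuration of $E$ from which the same argument applies. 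A standard argument (chopping the play into blocks of length $N$, each succeeding with probability $\ge \beta$ independently of the past) then shows that $q_F$ is reached with probability $1$, so $c \in V^1_{AS}$.

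For the reverse inclusion $V^1_{AS} \subseteq E$, I would argue that $C \setminus E$ is a ``trap'' for almost-sure reachability. Since $E = \Interp{\nu X. \mu Y.(q_F \vee \InvPre{X,Y})}$, unfolding the outer $\nu$ gives that $E$ is the largest set $Z$ with $Z = \Interp{\mu Y.(q_F \vee \InvPre{Z,Y})}$ (using $Z$ for the environment value of $X$). The crucial point, which is where I expect the upward-closedness of $\guardmucalcul$-definable sets and the finiteness of the fixpoint iteration to come in, is that for a configuration $c \notin E$ one of the two disjuncts of $\InvPre{E,\cdot}$ must fail at the ``last'' approximant: either $c$ is a Player-1 configuration all of whose successors avoid $E$ (or avoid the relevant $\mu$-approximant), or $c$ is a Player-P configuration from which Player P has positive probability of moving outside $E$. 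Formally, I would show that from any $c \notin (E \cup \Interp{q_F})$, for every strategy $\sigma$ there is positive probability of either reaching a deadlock-free configuration still outside $E$ without visiting $q_F$, or — when staying outside $E$ forever — never visiting $q_F$; here one uses that the inner least fixpoint $\mu Y.(q_F \vee \InvPre{E,Y})$ stabilizes after finitely many steps (by upward-closedness and well-quasi-ordering of $\nat^n$), so ``not being in $E$'' means Player P can, with probability bounded below, keep forcing the play to configurations that are at least as far from $q_F$ in this finite stratification. Consequently no strategy of Player~1 achieves probability $1$ of reaching $q_F$ from $c$, i.e.\ $c \notin V^1_{AS}$.

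The main obstacle is the $V^1_{AS} \subseteq E$ direction, and specifically turning the fixpoint characterization of $C \setminus E$ into a quantitative statement that Player P can \emph{enforce} a bounded-below escape probability uniformly. The subtlety is that a priori Player P's escape probability from a single configuration outside $E$ could be arbitrarily small; this is precisely why the finite stabilization of the $\mu$-iteration (Lemma~\ref{lemma-N-strat}'s analogue for the complement, obtained again from upward-closedness and the wqo $(\nat^n,\le)$) is needed — it bounds the ``rank'' and hence the number of steps, which in turn bounds the denominators in the relevant transition probabilities, yielding the uniform lower bound. Once this uniformity is in place, the Borel–Cantelli-style block argument closes the gap exactly as in the first inclusion. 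I would also remark that both directions rely essentially on the fact that $E$ is upward-closed and that the fixpoint iterations terminate in finitely many steps, which fails for general infinite-state MDPs but holds here because the underlying structure is a VASS; this is the point already flagged in the paragraph preceding Lemma~\ref{lemma-N-strat}.
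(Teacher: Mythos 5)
Your first inclusion, $E \subseteq V^1_{AS}$, matches the paper's argument: invoke Lemma~\ref{lemma-N-strat}, extract a uniform single-step lower bound $\beta>0$ on Player~P's transition probabilities, conclude that from every configuration of $E$ the target is hit within $N$ steps with probability at least $\beta^N$, and finish with the standard block argument. That part is fine.

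The direction $V^1_{AS} \subseteq E$ as you propose it has a genuine gap. Your plan hinges on showing that Player~P can \emph{enforce a uniformly bounded-below} probability of never reaching $q_F$ from any configuration outside $E$, the uniformity coming from a finite ``rank'' on $C \setminus E$. Neither ingredient is available. First, the finite stratification $F_1 \subseteq \cdots \subseteq F_N = E$ obtained from the inner $\mu$ (via upward-closedness and the wqo) stratifies $E$ itself, not its complement: a configuration outside $E$ lies outside \emph{every} approximant and carries no rank. The outer $\nu$-iteration, which is what would stratify the complement, is a \emph{decreasing} sequence of upward-closed sets, and such sequences need not stabilize in finitely many steps over $\nat^n$. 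Second, and decisively, no uniform lower bound on the failure probability can exist: the configuration $\tuple{q_1,0}$ of Figure~\ref{fig:example} is not in $V^1_{AS}$ (hence not in $E$ if the lemma holds), yet it is limit-surely winning, so for every $\epsilon>0$ some strategy fails with probability at most $\epsilon$. What is true is only the per-strategy statement that each fixed strategy fails with \emph{some} positive probability, and your mechanism does not deliver that either, since the probability of remaining in $C\setminus E$ forever decays like $\beta^k$ and may well be $0$.

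The paper sidesteps all of this with a purely qualitative post-fixpoint argument: given $c_0 \in V^1_{AS}$ with witnessing strategy $\sigma$, let $D$ be the set of configurations reachable under $\sigma$ along prefixes that have not yet visited $q_F$. From each $c \in D$ the conditional probability of reaching $q_F$ is still $1$, so some finite play from $c$ reaches $q_F$ while staying in $D$ and keeping all of Player~P's successors inside $D$; this places $c$ in a finite approximant of $\mu Y.(q_F \vee \InvPre{X,Y})$ relative to $X:=D$, hence $D \subseteq g(D)$, and Knaster--Tarski gives $D \subseteq E$. No quantitative escape bound is needed for this direction, and I would recommend you restructure your second inclusion along these lines.
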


Since $\tuple{Q,T}$ is $(Q_1,Q_P)$-single-sided and since the formula associated to $V^1_{AS}$ belongs to $\guardmucalcul$, from Theorem \ref{thm-mucalcul} we deduce the following theorem.

\begin{theorem}\label{thm-almost-sure-1vamdp}
  The almost-sure reachability problem is decidable for 1-VASS-MDPs.
\end{theorem}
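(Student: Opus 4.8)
The plan is to obtain the theorem by assembling three ingredients already in hand: the structural reduction of Lemma~\ref{lemma-onenpvass-nodeadlock}, the fixpoint characterization of Lemma~\ref{lem-probab-formula}, and the decidability of $\guardmucalcul$ model-checking over single-sided VASS from Theorem~\ref{thm-mucalcul}. First I would invoke Lemma~\ref{lemma-onenpvass-nodeadlock} to reduce, effectively and answer-preservingly, to the case where the input 1-VASS-MDP $S=\tuple{Q,Q_1,Q_P,T,\tau}$ is deadlock-free; it therefore suffices to decide almost-sure reachability for deadlock-free 1-VASS-MDPs.

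Next, for a deadlock-free 1-VASS-MDP, I would apply Lemma~\ref{lem-probab-formula}, which identifies the set $V^1_{AS}$ of configurations from which Player~1 has a strategy reaching $q_F$ with probability $1$ as $\Interp{\nu X. \mu Y. ( q_F \vee \InvPre{X,Y})}$. The key observation is that, since $S$ is a 1-VASS-MDP, the underlying VASS $\tuple{Q,T}$ is $(Q_1,Q_P)$-single-sided, and in the formula $\InvPre{X,Y}= (Q_1 \wedge \eventually (X \wedge Y)) \vee (\eventually Y \wedge Q_P \wedge \always X)$ the operator $\always$ occurs only in the guarded subformula $Q_P \wedge \always X$; hence $\nu X. \mu Y. ( q_F \vee \InvPre{X,Y})$ is a formula of $\guardmucalcul$ for this single-sided VASS. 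The almost-sure reachability problem for $S$ with initial configuration $c_0$ and target $q_F$ is then exactly the question of whether $c_0 \in V^1_{AS}$, i.e.\ whether $c_0 \in \Interp{\nu X. \mu Y. ( q_F \vee \InvPre{X,Y})}$, which is an instance of $\guardmucalcul$ model-checking over single-sided VASS and so is decidable by Theorem~\ref{thm-mucalcul}. This completes the argument.

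The genuine difficulty is not in this assembly but in justifying Lemma~\ref{lem-probab-formula} (via its supporting Lemma~\ref{lemma-N-strat}), namely that the fixpoint set $E=\Interp{\nu X. \mu Y.( q_F \vee \InvPre{X,Y})}$ coincides with $V^1_{AS}$. For the inclusion $E \subseteq V^1_{AS}$ one uses that from any configuration of $E$ Player~1 has a strategy reaching $q_F$ within a \emph{uniformly bounded} number $N$ of steps while never leaving $E$ (Lemma~\ref{lemma-N-strat}), together with the fact that closed $\guardmucalcul$ formulae denote upward-closed sets of VASS configurations, so the $\mu Y$-iteration stabilises after finitely many steps; iterating such a bounded-reach strategy drives the reachability probability to $1$. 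For the converse, any configuration outside $E$ gives Player~P a positive-probability escape that Player~1 cannot block. The uniform step bound $N$ is the crucial point, and it is precisely the feature—valid here because of the upward-closedness of $\guardmucalcul$-definable sets over VASS—that fails for infinite-state MDPs in general and makes this technique applicable to 1-VASS-MDPs.
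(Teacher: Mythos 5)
Your proposal is correct and follows essentially the same route as the paper: reduce to the deadlock-free case via Lemma~\ref{lemma-onenpvass-nodeadlock}, characterize $V^1_{AS}$ as $\Interp{\nu X. \mu Y. ( q_F \vee \InvPre{X,Y})}$ via Lemma~\ref{lem-probab-formula} (itself resting on the uniform bound of Lemma~\ref{lemma-N-strat} and the upward-closedness of $\guardmucalcul$-definable sets), and conclude by Theorem~\ref{thm-mucalcul} since the formula is in $\guardmucalcul$ for the $(Q_1,Q_P)$-single-sided VASS. Your identification of the uniform step bound $N$ as the crucial point that fails for general infinite-state MDPs matches the paper's own emphasis.
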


\subsubsection{Almost sure repeated reachability.}

For the case of almost sure repeated reachability we reuse the previously introduced formula $\InvPre{X,Y}$.  We can perform a reasoning similar to the previous ones and provide a characterization of the set $W^1_{AS}$.

\begin{lemma}\label{lem-probab-formula-repeat}
  $W^1_{AS}=\Interp{\nu X. \InvPre{X,\mu Y.( q_F \vee \InvPre{X,Y})}}$.
\end{lemma}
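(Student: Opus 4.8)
The plan is to mirror the structure of the proof of Lemma~\ref{lem-probab-formula}, adapting it to the B\"uchi setting. Write $W=\Interp{\nu X. \InvPre{X,\mu Y.( q_F \vee \InvPre{X,Y})}}$ and recall $E_X := \Interp{\mu Y.(q_F \vee \InvPre{X,Y})}_{\env[X:=\Interp{X}]}$, so that $W$ is the greatest fixpoint of the operator $X \mapsto \Interp{\InvPre{X,E_X}}$. As in the reachability case, the first step is to observe that $W$ is upward-closed: $\InvPre{X,Y}$ is a closed $\guardmucalcul$-formula over the $(Q_1,Q_P)$-single-sided VASS $\tuple{Q,T}$, so by the results underlying Theorem~\ref{thm-mucalcul} its interpretation (and any fixpoint built from it) is upward-closed; moreover, by the same dimension/termination argument used for Lemma~\ref{lem-probab-formula}, the inner $\mu$-iteration defining $E_X$ stabilizes in finitely many steps, so there is a uniform bound $N$ such that from every $c \in E_X$ Player~1 has a strategy to reach $q_F$ within $N$ steps while keeping the play inside $E_X$ and never letting Player~P escape $X$ — this is exactly the content obtained by reapplying Lemma~\ref{lemma-N-strat} with $X$ instantiated to $W$.

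The inclusion $W \subseteq W^1_{AS}$ is the constructive direction. Fix $c \in W$. Since $W$ is a fixpoint of $X\mapsto\Interp{\InvPre{X,E_X}}$, we have $W \subseteq E_W$ and $W = \Interp{\InvPre{W,E_W}}$. The strategy for Player~1 is to play in \emph{rounds}: starting from any configuration in $W \subseteq E_W$, use the bounded ``attractor'' strategy from Lemma~\ref{lemma-N-strat} (applied with $X:=W$) to drive the play to a configuration in $\Interp{q_F}$ within $N$ steps, all the while staying inside $E_W$ and — crucially — every time a probabilistic configuration is visited, every successor stays in $W$ (this is what the $\always X$ conjunct of $\InvPre{}$ guarantees, since $c\in\Interp{\InvPre{W,E_W}}$ and along the path we only leave $W$ into $E_W$, but the semantics of $\mu Y.(q_F\vee\InvPre{W,Y})$ ensures the $\always W$ obligation propagates). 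Once $q_F$ is hit, the current configuration is again in $W$ (because $W$ is a fixpoint and reaching $q_F$ re-enters the ``$\InvPre{W,\cdot}$'' set), so the next round begins. In each round there is a positive probability, bounded below by some $\delta>0$ uniformly (finitely many control states, bounded round length $N$, weights fixed), that Player~P's choices follow Player~1's intended path to $q_F$; hence with probability~1 infinitely many rounds succeed, so $q_F$ is visited infinitely often with probability~1, i.e. $c \in W^1_{AS}$. The uniform lower bound $\delta$ and the fact that the play provably never leaves $W$ are what make the Borel--Cantelli argument go through.

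For the reverse inclusion $W^1_{AS} \subseteq W$, proceed by showing the complement $C\setminus W$ is a ``trap'' for Player~P against almost-sure B\"uchi: since $W$ is the greatest fixpoint, $C\setminus W$ is contained in the complement of $\Interp{\InvPre{W,E_W}}$, which unfolds to: either every Player~1 move leaves $W\cap E_W$, or Player~P has a move into $C\setminus E_W$ and another avoiding... — more precisely one shows that from any $c\notin W$, against \emph{every} strategy of Player~1, Player~P can force, with probability bounded away from~0, either reaching a configuration from which $q_F$ is avoided forever with positive probability (leaving $E_W$ at a point where the $\mu Y$-witness fails), or never returning to $q_F$. Iterating, from $c\notin W$ no Player~1 strategy achieves $\Prob(\Interp{\always\eventually q_F})=1$. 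Here again upward-closedness of $W$ and the finite stabilization of all the fixpoint iterations (Theorem~\ref{thm-mucalcul}) are used to make the strategy of Player~P well-defined and memoryless-enough to analyze.

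\textbf{Main obstacle.} The delicate point is the nesting of the two $\InvPre$ operators: unlike the plain almost-sure reachability formula, here the outer $\nu X$ must certify that the target set $q_F$ is reachable \emph{again and again while staying within $X$}, and one must check that the witness for the inner $\mu Y.(q_F \vee \InvPre{X,Y})$ — which only guarantees bounded-time reachability of $q_F$ \emph{once} — composes correctly across rounds so that the play never drifts out of $W$ even though individual rounds only promise staying in $E_W \supseteq W$. Making the round-based strategy formally keep the invariant ``current configuration $\in W$'' at the start of every round, and extracting the uniform probability bound $\delta$ from the uniform step-bound $N$ (which in turn relies on the finite-iteration property inherited from Theorem~\ref{thm-mucalcul} and upward-closedness), is where the real work lies.
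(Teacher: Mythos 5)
Your first inclusion, $W\subseteq W^1_{AS}$, is essentially the paper's own argument: upward-closedness plus the wqo property give finite stabilization of the inner $\mu$-iteration with $X:=W$ (the analogue of Lemma~\ref{lemma-N-strat}), one defines a memoryless strategy descending through the stages, and the uniform per-round success probability (the paper's $\beta^N$, your $\delta$) yields almost-sure recurrence of $q_F$. One small correction to your justification of the round invariant: the reason the configuration where $q_F$ is hit lies again in $W$ is not that ``$W$ is a fixpoint and $q_F$ re-enters $\InvPre{W,\cdot}$'' (a configuration with control state $q_F$ need not belong to $W$ in general), but that \emph{every single step} of the attractor stays in $W$: Player~1's prescribed moves go into $W\cap F_{j-1}$ by the $\eventually(X\wedge Y)$ conjunct, and all of Player~P's successors stay in $W$ by the $\always X$ conjunct, so the play never leaves $W$ at all. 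The paper finishes the ``infinitely often'' step by a countable decomposition over the number of visits to $q_F$; your ``every round succeeds a.s., hence infinitely many visits a.s.'' carries the same content.

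The genuine gap is in the converse inclusion $W^1_{AS}\subseteq W$. Your sketch (show $C\setminus W$ is a ``trap'': from $c\notin W$, against every strategy, Player~P ``can force, with probability bounded away from $0$'' a failure, then ``iterate'') is not carried out and would not go through as stated: the complement of $W$ is the \emph{least} fixpoint of the dual operator over downward-closed sets, and increasing chains of downward-closed subsets of $Q\times\nat^n$ need not stabilize, so the wqo-based finite stabilization you rely on in the first direction is unavailable and your ``iteration'' would require a transfinite induction whose inductive statement you never formulate; moreover Player~P is not strategic, so ``forcing'' must be replaced by a quantitative claim about the fixed distributions, and no uniform positive failure bound is justified (or needed). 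The paper proves this direction by a short coinductive argument that avoids analyzing the complement altogether: take a strategy $\sigma$ with $\Prob(M_S,c_0,\sigma,\Interp{\always\eventually q_F})=1$, let $T$ be the set of configurations reachable from $c_0$ along plays respecting $\sigma$, and show that every $c\in T$ lies on a finite play respecting $\sigma$ that revisits $q_F$ after at least one step, with all visited configurations and all probabilistic successors along it inside $T$ (possible because the residual B\"uchi probability from $c$ must be positive); by the argument of Lemma~\ref{lem-probab-formula} this gives $T\subseteq\Interp{\InvPre{X,\mu Y.(q_F\vee\InvPre{X,Y})}}_{\env_0[X:=T]}$, i.e.\ $T$ is a post-fixpoint of the outer operator, whence $T\subseteq W$ by Knaster--Tarski and in particular $c_0\in W$. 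You need either to reproduce this post-fixpoint argument or to supply a complete, rigorous version of your trap argument; as written, this half of the proof is missing.
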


As previously, this allows us to deduce the decidability of the almost sure repeated reachability problem for 1-VASS-MDP.

\begin{theorem}\label{thm-almost-sure-repeat-1vamdp}
  The almost sure repeated reachability problem is decidable for 1-VASS-MDPs.
\end{theorem}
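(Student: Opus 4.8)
Theorem \ref{thm-almost-sure-repeat-1vamdp} asserts decidability of almost-sure repeated reachability for 1-VASS-MDPs. Let me sketch the proof plan.

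---

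The plan is to leverage Lemma~\ref{lem-probab-formula-repeat}, which characterizes the winning set $W^1_{AS}$ as $\Interp{\nu X. \InvPre{X,\mu Y.( q_F \vee \InvPre{X,Y})}}$. The entire work is to verify that this formula is actually a formula of $\guardmucalcul$ for the $(Q_1,Q_P)$-single-sided VASS $\tuple{Q,T}$ underlying the deadlock-free 1-VASS-MDP, and then invoke Theorem~\ref{thm-mucalcul}. First I would recall that by Lemma~\ref{lemma-onenpvass-nodeadlock} it suffices to treat deadlock-free 1-VASS-MDPs, so we may fix $S=\tuple{Q,Q_1,Q_P,T,\tau}$ deadlock-free with associated MDP $M_S$; since $S$ is a 1-VASS-MDP, every transition leaving a control state in $Q_P$ has zero effect on the counters, i.e.\ $\tuple{Q,T}$ is $(Q_1,Q_P)$-single-sided.

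The key syntactic check is that $\InvPre{X,Y}= (Q_1 \wedge \eventually (X \wedge Y)) \vee (\eventually Y \wedge Q_P \wedge \always X)$ lies in the guarded fragment: the only occurrence of $\always$ is in the conjunct $Q_P \wedge \always X$, which is exactly the allowed shape $Q_2 \wedge \always \phi$ with $Q_2 = Q_P$ for the $(Q_1,Q_P)$-single-sided VASS (here the ``$Q_2$'' role of the definition is played by $Q_P$). The outer structure $\nu X.\, \InvPre{X,\, \mu Y.(q_F \vee \InvPre{X,Y}))}$ then just nests $\mu$ and $\nu$ binders around guarded subformulae, substituting the closed-ish subformula $\mu Y.(q_F \vee \InvPre{X,Y})$ for the variable $Y$ in the body of the outer $\InvPre$; since substitution of a $\guardmucalcul$-formula for a free variable inside a $\guardmucalcul$-formula again yields a $\guardmucalcul$-formula (guards are never removed by substitution), the whole expression is in $\guardmucalcul$. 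Thus, given an initial configuration $c_0$, deciding whether $c_0 \in W^1_{AS}$ — equivalently whether Player~1 has a strategy $\sigma$ with $\Prob(M_S,c_0,\sigma,\Interp{\always \eventually q_F})=1$ — reduces to the model-checking instance ``$c_0 \in \Interp{\nu X. \InvPre{X,\mu Y.( q_F \vee \InvPre{X,Y})}}$?'' on the single-sided VASS $\tuple{Q,T}$, which is decidable by Theorem~\ref{thm-mucalcul}.

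The only real content — and hence the main obstacle — is already discharged in Lemma~\ref{lem-probab-formula-repeat} itself, whose proof (analogous to Lemmas~\ref{lemma-N-strat} and~\ref{lem-probab-formula}) must establish that the inner greatest fixpoint $E = \Interp{\nu X.\mu Y.(q_F \vee \InvPre{X,Y})}$ is upward-closed, so that the inner $\mu$-iteration stabilizes after finitely many steps and a uniform bound $N$ on the number of moves needed to re-reach $q_F$ exists; this finiteness is what makes the ``infinitely often with probability~1'' semantics coincide with the nested-fixpoint combinatorics, and it is the place where the restriction to 1-VASS-MDPs (Player~P cannot touch the counters) is essential. Given that lemma, the theorem is immediate, so at the level of this proof I would simply state: since $\tuple{Q,T}$ is $(Q_1,Q_P)$-single-sided and, by the discussion above, $\nu X. \InvPre{X,\mu Y.( q_F \vee \InvPre{X,Y})}$ is a formula of $\guardmucalcul$, Lemma~\ref{lem-probab-formula-repeat} together with Theorem~\ref{thm-mucalcul} yields decidability of the almost-sure repeated reachability problem for 1-VASS-MDPs. \qed
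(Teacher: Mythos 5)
Your proposal is correct and follows essentially the same route as the paper: reduce to the deadlock-free case via Lemma~\ref{lemma-onenpvass-nodeadlock}, observe that the characterizing formula of Lemma~\ref{lem-probab-formula-repeat} lies in $\guardmucalcul$ for the $(Q_1,Q_P)$-single-sided VASS $\tuple{Q,T}$, and conclude by Theorem~\ref{thm-mucalcul}. Your explicit syntactic check that the $\always$ occurrences are guarded by $Q_P$ is a welcome elaboration of what the paper leaves implicit, but it is the same argument.
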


\subsection{Limit-sure reachability in 1-VASS-MDP}

We consider a slightly more general version of the limit-sure reachability
problem with a set $X \subseteq Q$ of target states instead of a single state
$q_F$, i.e., the standard case corresponds to $X = \{q_F\}$.

We extend the set of natural numbers $\mathbb{N}$ 
to $\mathbb{N}_{*}= \mathbb{N} \bigcup \{*\}$
by adding an element $*\notin\mathbb{N}$ with $*+j=*-j=*$ and $j < *$
for all $j \in \nat$. We consider then the set of vectors $\mathbb{N}_{*}^{d}$.
The projection of a vector $\vect{v}$ in $\mathbb{N}^{d}$ by eliminating components
that are indexed by a natural number $k$ is defined by
$proj_k (\vect{v})(i) = \vect{v}(i) $ if  $i\neq k$ and  $proj_k (\vect{v})(i)=*$ otherwise

Let $Q_c$ represent control-states which are indexed by a color.
The coloring functions $col_i:Q \to Q_c$ create colored copies of
control-states by $col_i(q)= q_i$.


Given a 1-VASS-MDP $S=\tuple{Q,Q_1,Q_P,T,\tau}$ of dimension $d$, an index $k \le d$ and a color $i$,
the colored projection is defined as:
\[
Proj_k(S,d,i)=\tuple{col_i(Q), col_i(Q_1), col_i(Q_P), proj_{k,i} (T),\tau_{k,i}}
\]
where $proj_{k,i} (T)= \{proj_{k,i} (t) | t \in T\}$
is the projection of the set of transitions $T$
and  $proj_{k,i} (t)= \tuple{col_i(x), proj_k(\vect{z}),col_i(y)}$
is the projection of transition $t=\tuple{x,\vect{z},y}$
obtained by removing component $k$ and 
coloring the states $x$ and $y$ with color $i$.
The transition weights carry over, i.e., 
$\tau_{k,i}(t') = \sum \{\tau(t)\,|\, proj_{k,i} (t)=t'\}$.

We define the functions $state: Q \times\mathbb{N}_{*}^{d} \to Q$ and 
${\it count}: Q \times\mathbb{N}_{*}^{d} \to \mathbb{N}_{*}^{d}$ s.t 
for a configuration $c_i = \tuple{q, \vect{v}}$, where $q  \in Q$ and
$\textbf{v} \in \mathbb{N}_{*}^{d}$  
we have that $state(q, \textbf{v})=q$ and $count(q, \textbf{v})= \textbf{v}$.
For any two configurations $c_1$ and $c_2$, we write $c_1 \prec c_2$ to denote
that $state(c_1)=state(c_2)$, 
and there exists a nonempty set of indexes $I$ where for every $i \in I$ ,
$count(c_1)(i) < count(c_2)(i)$, 
whereas for every index $j \notin I$, $0<j \le d$, $count(c_1)(j) = count(c_2)(j)$.

Algorithm~\ref{alg:limitsure} reduces the dimension of the limit-sure 
reachability problem for 1-VASS-MDP by a construction resembling the
Karp-Miller tree \cite{KaMi:schemata}. It takes as input a 1-VASS-MDP $S$ of some dimension
$d>0$ with a set of target states $X$. It outputs a new 1-VASS-MDP $S'$
of dimension $d-1$ and a new set of target states $X'$ such that 
$M_S$ can limit-surely reach $X$ iff $M_{S'}$ can limit-surely reach $X'$.
In particular, in the base case where $d-1=0$, the new system $S'$ has
dimension zero and thus induces a finite-state MDP $M_{S'}$, for which limit-sure
reachability of $X'$ coincides with almost-sure reachability of $X'$, which is 
known to be decidable in polynomial time.
Algorithm~\ref{alg:limitsure} starts by exploring all branches of 
the computation tree of $S$ (and adding them to $S'$ as the so-called 
initial {\em uncolored part}) until it encounters a configuration that is either
(1) equal to, or (2) strictly larger than a configuration encountered previously
on the same branch. In case (1) it just adds a back loop to the point
where the configuration was encountered previously.
In case (2), it adds a modified copy of $S$ (identified by a unique color)
to $S'$. This so-called colored subsystem is similar to $S$ except that those counters that have
strictly increased along the branch are removed. The intuition is that 
these counters could be pumped to arbitrarily high values and thus present no
obstacle to reaching the target.
Since the initial uncolored part is necessarily finite (by Dickson's Lemma)
and each of the finitely many colored subsystems only has 
dimension $d-1$ (since a counter is removed;
possibly a different one in different colored subsystems), the resulting
1-VASS-MDP $S'$ has dimension $d-1$. 
The set of target states $X'$ is defined as the union of all appearances of
states in $X$ in the uncolored part, plus all colored copies of states from
$X$ in the colored subsystems.

\begin{algorithm}[htbp]
\begin{algorithmic}[1]
\REQUIRE $S=\tuple{Q,Q_1,Q_P,T,\tau}$  1-VASS-MDP, dimension $d > 0$, $c_0 = \tuple{q_0,\vect{v}} \in Q \times \mathbb{N}^{d}$

 $X\subseteq Q$ - set of target states
  \ENSURE $S'=\tuple{Q',Q_1',Q_P',T',\tau'}$; $c_0' = \tuple{q_0',\vect{0}}$;
  $X' \subseteq Q'$; 
$\lambda : Q' \to ((Q  \bigcup Q_c)\times  \mathbb{N}_*^{d})$
  \STATE $Q' \leftarrow \varnothing$;  $Q_1' \leftarrow \varnothing$; $Q_P' \leftarrow \varnothing$; $T'\leftarrow \varnothing$; $\tau' \leftarrow \varnothing$; 
  \STATE $\mathrm{new}(q')$; $q_0' \leftarrow q'$; $\lambda(q')\leftarrow c_0$; $Q' \leftarrow \{q'\}$; $i \leftarrow 0$

    	\IIf{$state(\lambda(q')) \in Q_1$} $Q_1'\leftarrow\{q'\}$ \IElse $\ Q_P'\leftarrow \{q'\}$ \EndIIf
    \STATE  ToExplore $\leftarrow$ $\{q'\} $
    \WHILE{ToExplore $\neq$ $\varnothing$}
     \STATE Pick and remove a $q$ $\in$ ToExplore
     
      \IF{$\exists q'$. $q'$ is previously on the same branch as $q$ and $\lambda(q') \prec \lambda(q)$ }
       \STATE get indexes $I$ in which the counter is increasing

        \STATE pick and remove the first index $k$  from $I$ 
        
         \STATE $i  \leftarrow i +1$;  // increase color index
         

       \STATE $\mathrm{new}(q'')$;

       \STATE $\lambda(q'') \leftarrow \tuple{col_i(state(\lambda(q))), proj_k(count(\lambda(q)))} $
       
       	\IIf{$state(\lambda(q))\in Q_1$} 
       	$Q_1'\leftarrow Q_1'\bigcup \{q''\}$ \IElse $\ Q_P'\leftarrow Q_P'\bigcup \{q''\}$
       	 \EndIIf      
       \STATE $T' \leftarrow T' \bigcup \{\tuple{q, \vect{0}, q''}\}$; $\tau'(\tuple{q, \vect{0}, q''})=1$;
        \STATE  $Q_1' \leftarrow Q_1' \bigcup col_{i}(Q_1);\ $ $Q_P' \leftarrow Q_P' \bigcup col_{i}(Q_P)$; $T' \leftarrow T' \bigcup proj_{k,i}(T);\ $
           \STATE  $X'  \leftarrow X' \bigcup col_{i}(X)$; $\tau' \leftarrow \tau' \cup \tau_{k,i}$

      
       \ELSE
      \FOR{ every $t=\tuple{x,\vect{z},y}$ in $T$ such that $t \in \Enabled{\lambda(q)}$}
    \IF {$\exists q'$. $q'$ is previously on the same branch as $q$ and $t(\lambda(q))=\lambda(q')$ }
    \STATE $T' \leftarrow T' \bigcup \{\tuple{q,\vect{z},q'}\}; \ $ 
       \ELSE 
        \STATE $\mathrm{new}(q')$; $\lambda(q') \leftarrow  t(\lambda(q))$
        \STATE $T' \leftarrow T' \bigcup \{\tuple{q,\vect{z},q'}\}; \ $ $\tau'(\tuple{q,\vect{z},q'}) \leftarrow \tau(t)$

    	\IIf{$state(\lambda(q')) \in Q_1$} $Q_1'\leftarrow Q_1'\bigcup \{q'\}$ \IElse $\ Q_P'\leftarrow Q_P'\bigcup \{q'\}$ \EndIIf
     \IIf{$state(\lambda(q')) \in X$} $X'\leftarrow X'\bigcup \{q'\}$   \EndIIf
      \STATE $ToExplore \leftarrow ToExplore \bigcup \{q'\}$
    \ENDIF
    
     \ENDFOR
        \ENDIF
     \ENDWHILE
\end{algorithmic}

\caption{Reducing the dimension of the limit-sure reachability problem.}\label{alg:limitsure}
\end{algorithm}

By Dickson's Lemma, the conditions on line 7 or line 19 of the algorithm must eventually hold 
on every branch of the explored computation tree. Thus, it will terminate.

\begin{lemma}\label{lem:alg-limitsure-termination}
Algorithm~\ref{alg:limitsure} terminates.
\end{lemma}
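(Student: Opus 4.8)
The plan is to argue termination by König's Lemma together with Dickson's Lemma. First I would observe that the algorithm performs a tree exploration: each node $q$ placed in $ToExplore$ corresponds to a node of the computation tree of $S$ labelled by a configuration $\lambda(q) \in Q \times \nat^d$ (in the uncolored part) via the $\lambda$ map, and the parent-child relation is ``previously on the same branch''. Every time a node $q$ is picked and removed from $ToExplore$, it either triggers the colored-subsystem construction (lines 7--16) and produces no new nodes in $ToExplore$, or it falls into the \textbf{else} branch (lines 18--28), where finitely many children are added — one for each transition $t \in \Enabled{\lambda(q)}$, and $T$ is finite. Hence the explored tree is finitely branching.

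Next I would show that every branch of this tree is finite. Consider an infinite branch $q_0, q_1, q_2, \ldots$; it would give an infinite sequence of configurations $\lambda(q_0), \lambda(q_1), \ldots$ in $Q \times \nat^d$. Since $Q$ is finite, some control-state repeats infinitely often, and restricting to those positions and using that $(\nat^d,\leq)$ is a wqo (stated in Section~\ref{sec:models}), there exist indices $m < n$ with $state(\lambda(q_m)) = state(\lambda(q_n))$ and $count(\lambda(q_m)) \leq count(\lambda(q_n))$. If the two counter vectors are equal, then $t(\lambda(q_m)) = \lambda(q_n)$-type equality was detected at line 19 (the back-loop case), so no successor node beyond $q_n$ is created along this branch; if $count(\lambda(q_m)) < count(\lambda(q_n))$ in at least one coordinate, then $\lambda(q_m) \prec \lambda(q_n)$ and the condition on line 7 fires at $q_n$, so again the exploration of this branch stops and only a colored subsystem (which is added to $S'$ directly, not to $ToExplore$) is spawned. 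In both cases the branch cannot be extended past $q_n$, contradicting its infiniteness. Therefore every branch is finite.

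Combining finite branching with finiteness of every branch, König's Lemma yields that the explored tree is finite, so the \textbf{while} loop removes a bounded number of elements from $ToExplore$ and each iteration does only finitely much work (the colored-subsystem construction on lines 12--16 adds a copy of the finite VASS $S$ with one counter projected away, which is a finite operation; the \textbf{for} loop on lines 18--28 ranges over the finite set $T$). Hence the algorithm terminates. The main obstacle is making precise the claim that lines 7 and 19 between them are exhaustive for the pair $(m,n)$ guaranteed by the wqo — i.e. that whenever $state(\lambda(q_m)) = state(\lambda(q_n))$ and $count(\lambda(q_m)) \leq count(\lambda(q_n))$, either equality holds and line 19 applies, or strict increase holds and line 7 applies — and in checking that the node $q_n$ is genuinely processed (picked from $ToExplore$) so that one of these tests is actually evaluated; this requires noting that the ancestor $q_m$ is indeed ``previously on the same branch'' as $q_n$ in the sense used by the algorithm, which follows because ancestors in the exploration tree are exactly the configurations seen earlier on the branch.
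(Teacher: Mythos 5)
Your proof is correct and follows essentially the same route as the paper's: finite branching of the explored tree, Dickson's Lemma to show that on every branch the condition of line 7 or line 19 eventually fires (so every branch is finite), and K\"onig's Lemma to conclude. Your version merely spells out in more detail the dichotomy (equality versus strict componentwise increase) that the paper leaves implicit.
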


The next lemma states the correctness of Algorithm~\ref{alg:limitsure}.
Let $S=\tuple{Q,Q_1,Q_P,T,\tau}$ be 1-VASS-MDP of dimension $d > 0$ with initial
configuration $c_0 = \tuple{q_0,\vect{v}}$ and $X\subseteq Q$ a set of target states.
Let $S'=\tuple{Q',Q_1',Q_P',T',\tau'}$ with initial configuration 
$c_0' = \tuple{q_0',\vec{0}}$ and set of target states $X' \subseteq Q'$ be the 
$(d-1)$ dimensional 1-VASS-MDP produced by
Algorithm~\ref{alg:limitsure}. As described above we have the
following relation between these two systems.

\begin{lemma}\label{lem:limitsure-correctness-algo}
$\mathbb{P}^+(M_S, c_0, \Interp{\eventually X}) = 1$ iff
$\mathbb{P}^+(M_{S'}, c_0', \Interp{\eventually X'}) = 1$.
\end{lemma}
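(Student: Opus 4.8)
The plan is to prove the two directions of the iff separately, establishing in each case that a family of strategies achieving probability $\ge 1-\epsilon$ of reaching the target in one system can be transformed into such a family for the other system. The key structural fact, which I would isolate first as a preliminary observation, is that $S'$ decomposes into a finite uncolored part (a finite tree with back-loops, isomorphic as a labelled graph to an initial fragment of the computation tree of $S$, but crucially having \emph{full} counter information because no counter has been dropped there) together with finitely many colored subsystems, each attached at a node $q$ where some node $q'$ earlier on the same branch satisfied $\lambda(q') \prec \lambda(q)$, and each being a copy of $S$ with one strictly-increasing counter $k$ projected away (replaced by $*$). I would also note that along the finite uncolored part the configurations reachable in $M_{S'}$ (via $\lambda$) are exactly the genuine configurations reachable in $M_S$ along the corresponding finite prefixes, so the two MDPs are isomorphic as long as play stays in the uncolored part.

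For the direction ``$S'$ limit-sure $\Rightarrow$ $S$ limit-sure'': given $\epsilon$, take a strategy $\sigma'$ in $M_{S'}$ reaching $X'$ with probability $\ge 1-\epsilon$. A reaching run either hits $X'$ inside the uncolored part, or enters some colored subsystem attached at a node $q$ with witness $q'$, $\lambda(q') \prec \lambda(q)$. In $M_S$, Player~1 simulates $\sigma'$ verbatim on the uncolored part; whenever $\sigma'$ would cross into the colored copy at $q$, Player~1 instead first pumps the loop between $q'$ and $q$ (which strictly increases counter $k$ and leaves the other components unchanged) some large number $m$ of times, raising counter $k$ above any threshold that the subsequent simulated play of the colored subsystem could possibly need before reaching $X'$. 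Here one exploits that the colored subsystem has dimension $d-1$, so by induction on dimension (the base case $d=1$ giving a finite-state MDP where limit-sure $=$ almost-sure is decidable, as the paper notes) there is, for each $\epsilon$, a strategy in the $(d-1)$-dimensional system reaching $X'$ with probability $\ge 1-\epsilon$, and — by a König/finiteness argument over the finitely many colored subsystems and a bound on the number of steps needed to reach target with probability $\ge 1-\epsilon$ within bounded counter budget — a uniform bound $m$ on how much counter $k$ must be pumped. Running the pumped loop then the simulated colored-subsystem strategy, with counter $k$ never hitting zero because it started high enough, reproduces probability $\ge 1-\epsilon$ (up to a further $\epsilon$ from the inductive sub-strategies, which we absorb by rescaling) of reaching $X$ in $M_S$.

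For the converse ``$S$ limit-sure $\Rightarrow$ $S'$ limit-sure'': fix $\epsilon$ and a strategy $\sigma$ in $M_S$ reaching $X$ with probability $\ge 1-\epsilon$. Decompose each reaching run of $\sigma$ by the first time (if ever) it reaches a configuration $c$ that strictly dominates, in the sense $\prec$, an earlier configuration on the same branch with the same control-state — equivalently, the first time the corresponding path in the exploration tree of Algorithm~\ref{alg:limitsure} hits a colored-attachment node. Runs that reach $X$ before any such event are faithfully mirrored in the uncolored part of $M_{S'}$ and reach $X'$ there. For a run that hits such a node $q$ (with witness $q'$) before reaching $X$, Player~1 in $M_{S'}$ follows the uncolored part to $q$ and then moves into the colored subsystem; the subsequent suffix of the original run in $M_S$, when its counter $k$ is erased, is a legal run of the colored subsystem (erasing a counter only removes constraints, never adds them), and it still reaches the appropriate colored copy of $X$. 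One must check the probability is preserved: the transition weights carry over by construction ($\tau_{k,i}$ sums the weights of transitions that project to the same transition), so the probabilistic branching in the colored subsystem is the image of that in $M_S$, and no probability mass is lost when projecting. Summing over the finitely many colored subsystems and the uncolored part gives probability $\ge 1-\epsilon$ of reaching $X'$ in $M_{S'}$.

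The main obstacle I expect is making the pumping argument in the first direction genuinely uniform: we need a single finite pumping amount $m$ (and hence a single strategy for each $\epsilon$) that works before we know which colored subsystem the run enters and how the inductive $(d-1)$-dimensional sub-strategy behaves. The resolution is to observe that for \emph{any} fixed $\epsilon$ there is a finite set of colored subsystems, and in each, by the inductive hypothesis, an $\epsilon$-optimal strategy; to reach probability $\ge 1-\epsilon$ it suffices to follow that strategy for a bounded number of steps (cutting off the tail costs at most another vanishing amount of probability, again absorbed by rescaling $\epsilon$), and a bounded number of steps consumes a bounded amount of any counter. Taking $m$ to be the maximum over this finite collection of the worst-case counter consumption gives the required uniform bound. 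A secondary technical point is handling the interaction of multiple colored attachments on a single run — a run might pass through several $\prec$-events — but since each colored subsystem is ``absorbing'' in $S'$ (once you enter color $i$ you stay in color $i$), only the \emph{first} such event on each run matters, and the decomposition above is well-defined.
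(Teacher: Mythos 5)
Your first direction ($S$ limit-sure $\Rightarrow$ $S'$ limit-sure) matches the paper's Lemma~\ref{lem:limitsure_oldtonew}: a copycat strategy works because the non-projected counters agree, larger (projected) configurations only add Player-1 moves, and -- crucially, as you note -- probabilistic transitions in a 1-VASS-MDP depend only on the control-state, so no probability mass is gained or lost under projection. That part is fine.

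The converse direction has a genuine gap. You let Player~1, upon the simulated play reaching the attachment node $q$ with witness $q'$ where $\lambda(q')\prec\lambda(q)$, ``first pump the loop between $q'$ and $q$ \ldots some large number $m$ of times.'' But the segment of the computation tree from $\lambda(q')$ to $\lambda(q)$ is in general not controlled by Player~1: it passes through probabilistic states, and Player~P's random choices need not retrace it even once more, let alone $m$ consecutive times. The probability that the loop is traversed $m$ times is typically $p^m$ for some $p<1$, so your construction does not deliver a strategy with success probability $\ge 1-O(\epsilon)$; the loss cannot be absorbed by rescaling $\epsilon$ because it depends on $m$, which in turn depends on $\epsilon$. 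You also do not say what the strategy does when the random choices leave the intended segment, which is exactly where the difficulty lies. The paper's proof (Lemma~\ref{lem:limitsure_newtoold}) avoids forcing the loop altogether: it first restricts to the ``good'' colored subsystems $R$ (those whose entry points $r_i$ limit-surely reach $X'$), observes that $X_f'\cup R$ is then limit-surely, hence \emph{almost-surely}, reachable in the finite uncolored MDP, and fixes an a.s.\ strategy $\astrat$ for that objective. The $\epsilon$-strategy on $M_S$ plays $\astrat$; if it arrives at a configuration corresponding to some $r_i$ whose $k(i)$-th counter is still below the finite threshold $\alpha\cdot d(i,\epsilon)$ (obtained, as in your truncation argument, from a bounded-horizon $\epsilon$-optimal strategy in the colored subsystem), it falls back -- using monotonicity -- to replaying $\astrat$ from the earlier, dominated configuration. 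Each such return strictly increases the counter, so with probability $1$ the play eventually either reaches $X$ in the uncolored part or reaches the entry with counter above the threshold, at which point it switches to the colored-subsystem strategy. The ``pumping'' is thus a consequence of almost-sure re-reaching in a finite-state MDP, not of a loop Player~1 can execute; this restart mechanism (and the accompanying good/bad-color distinction that makes the uncolored objective almost-sure rather than merely limit-sure) is the missing ingredient in your argument. A minor additional remark: the existence of $\epsilon$-optimal strategies in the colored subsystems needs no induction on dimension -- it is immediate from the definition of $\Prob^+=1$; the induction belongs to Theorem~\ref{thm-limitsure-reach-1VAMDP}, not to this lemma.
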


By applying the result of the previous lemma iteratively until we obtain
a finite-state MDP, we can deduce the following theorem.

\begin{theorem}\label{thm-limitsure-reach-1VAMDP}
The limit-sure reachability problem for 1-VASS-MDP is decidable.
\end{theorem}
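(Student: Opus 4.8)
The plan is to combine Algorithm~\ref{alg:limitsure} with its correctness statement (Lemma~\ref{lem:limitsure-correctness-algo}) in an inductive argument on the dimension $d$ of the 1-VASS-MDP. First I would set up the induction: given a 1-VASS-MDP $S$ of dimension $d > 0$ with initial configuration $c_0$ and target set $X$, invoke Lemma~\ref{lem:alg-limitsure-termination} to get that Algorithm~\ref{alg:limitsure} actually produces, in finite time, a 1-VASS-MDP $S'$ of dimension $d-1$ with initial configuration $c_0'$ and target set $X'$. By Lemma~\ref{lem:limitsure-correctness-algo}, $\Prob^+(M_S, c_0, \Interp{\eventually X}) = 1$ iff $\Prob^+(M_{S'}, c_0', \Interp{\eventually X'}) = 1$, so deciding limit-sure reachability for $S$ reduces to the same problem for the strictly-lower-dimensional $S'$.

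Next I would peel off the dimension down to the base case. Iterating the reduction $d$ times yields a $0$-dimensional 1-VASS-MDP $S^{(d)}$, i.e.\ one whose counter component is vacuous, so $M_{S^{(d)}}$ is a \emph{finite-state} MDP, together with a finite target set $X^{(d)}$; and by transitivity of the biconditionals provided by Lemma~\ref{lem:limitsure-correctness-algo} at each step, $\Prob^+(M_S, c_0, \Interp{\eventually X}) = 1$ iff $\Prob^+(M_{S^{(d)}}, c_0^{(d)}, \Interp{\eventually X^{(d)}}) = 1$. For finite-state MDPs, limit-sure reachability coincides with almost-sure reachability (the value is attained, since there are only finitely many memoryless strategies to optimise over), and almost-sure reachability in finite-state MDPs is decidable — in fact in polynomial time by the standard attractor/backward-induction computation, as already noted in the discussion preceding the algorithm. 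This settles the question.

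Finally I would remark that the reduction of limit-sure reachability with a single target state $q_F$ to the version with a target \emph{set} $X$ (used so that the algorithm can be applied to its own output, whose targets naturally form a set) is immediate by taking $X = \{q_F\}$, as observed at the start of the subsection. The main obstacle in this whole development is of course concentrated in Lemma~\ref{lem:limitsure-correctness-algo} — establishing that erasing a counter that can be pumped to arbitrarily large values along a branch preserves limit-sure (though not almost-sure) reachability, in both directions — but that lemma is available to us here, so the proof of the theorem itself is just the inductive packaging together with the finite-state base case.
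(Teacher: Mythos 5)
Your proposal matches the paper's own proof: both argue by induction (equivalently, iterated reduction) on the dimension, applying Algorithm~\ref{alg:limitsure} together with Lemma~\ref{lem:alg-limitsure-termination} and the equivalence of Lemma~\ref{lem:limitsure-correctness-algo} at each step, until reaching a zero-dimensional system whose induced MDP is finite-state, where limit-sure and almost-sure reachability coincide and are decidable. No gaps; this is essentially the same argument as in the paper.
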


\section{Conclusion and Future Work}\label{sec:conclusion}

Table~\ref{tab:summary} summarizes our results on the decidability of
verification problems for subclasses of VASS-MDP. 
The exact complexity of most problems is still open. 
Algorithm~\ref{alg:limitsure} relies on
Dickson's Lemma for termination, and the algorithm deciding the model-checking problem of
Theorem~\ref{thm-mucalcul} additionally uses the Valk-Jantzen construction
repeatedly. However, all these problems are at least as hard as control-state
reachability in VASS, and thus EXPSPACE-hard \cite{esparza-decidability-94}.

The decidability of the limit-sure repeated reachability problem for
1-VASS-MDP is open. A hint of its difficulty is given by the fact that there
are instances where the property holds even though a small chance of reaching
a deadlock cannot be avoided from any reachable configuration. In particular,
a solution would require an analysis of the long-run behavior of multi-dimensional random
walks induced by probabilistic VASS. However, these may exhibit strange 
nonregular behaviors for dimensions $\ge 3$, as described in
\cite{BKKN:LICS2015} (Section 5).

\begin{table}[htb]
\begin{tabular}{l||c|c|c}
                         & P-VASS-MDP & df P-VASS-MDP & 1-VASS-MDP\\ \hline
sure reachability        & $\tickNO$ (Thm.~\ref{thm-pvamdp-undec-sure}) & $\tickOK$ (Thm.~\ref{thm-sure-reach-pvamdp}) & $\tickOK$ (Thm.~\ref{thm-sure-1vamdp}) \\ \hline
almost-sure reachability & $\tickNO$ (Thm.~\ref{thm-pvamdp-undec-sure}) & $\tickNO$ (Thm.~\ref{thm-pvamdp-undec-almsost}) & $\tickOK$ (Thm.~\ref{thm-almost-sure-1vamdp}) \\ \hline
limit-sure reachability  & $\tickNO$ (Thm.~\ref{thm-pvamdp-undec-sure}) & $\tickNO$ (Thm.~\ref{thm-pvamdp-undec-almsost}) & $\tickOK$ (Thm.~\ref{thm-limitsure-reach-1VAMDP})\\ \hline
sure repeated reachability             & $\tickNO$ (Thm.~\ref{thm-pvamdp-undec-sure}) & $\tickOK$ (Thm.~\ref{thm-sure-reach-pvamdp}) & $\tickOK$ (Thm.~\ref{thm-sure-1vamdp}) \\ \hline
almost-sure repeated reachability      & $\tickNO$ (Thm.~\ref{thm-pvamdp-undec-sure}) & $\tickNO$ (Thm.~\ref{thm-pvamdp-undec-almsost}) & $\tickOK$ (Thm.~\ref{thm-almost-sure-repeat-1vamdp}) \\ \hline
limit-sure repeated reachability       & $\tickNO$ (Thm.~\ref{thm-pvamdp-undec-sure}) & $\tickNO$ (Thm.~\ref{thm-pvamdp-undec-almsost}) & Open
\end{tabular}
\smallskip
\caption{Decidability of verification problems for P-VASS-MDP, deadlock-free
P-VASS-MDP and 1-VASS-MDP. A $\tickOK$ stands for decidable and a $\tickNO$
for undecidable.}\label{tab:summary} 
\end{table}

\newpage


\newpage
\appendix
\section{Proofs of Section \ref{sec:p-vass-mdp}}

\subsection{Proof of Lemma \ref{lem:formulae-p-pVASS-MDP}}

Even if the result of this lemma is quite standard, in order to be consistent
we provide the proof, in particular to be sure that the fact that we
are dealing with infinite state systems does not harm the reasoning.

\begin{lemma}
\label{lem:sure-p-pVASS-MDP}~
$V^P_S=\Interp{\nu X. (\bigvee_{q \in Q \setminus
    \set{q_F}} q) \wedge (Q_1 \vee \eventually X) \wedge (Q_P \vee
  (Q_1 \wedge \always X))}$.
\end{lemma}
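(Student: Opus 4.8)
My plan is to recognize this identity as the standard fixpoint characterization of the winning region of the two-player safety game ``avoid $q_F$ forever'' that is hidden inside the sure reachability problem, and to verify that passing to an infinite-state (but finitely branching) arena does not break the usual argument. \textbf{Step 1: reduce to a fixpoint identity.} First I would note that for the sure reachability problem the probabilities play no role: a play respects $\sigma$ exactly when at each $C_P$-configuration \emph{some} successor is taken (every successor of a non-deadlock $C_P$-configuration has positive probability). Hence $c\in V^P_S$ iff for \emph{every} $\sigma\in\Sigma$ some play of $\Outcome{M_S,c,\sigma}$ avoids $q_F$ forever. I would then introduce the monotone controllable-predecessor operator $\Phi\colon 2^C\to 2^C$,
\[
\Phi(Z)=\big(C\setminus(\set{q_F}\times\nat^n)\big)\cap\Big[\big(C_1\cap\set{c\mid \forall c'.\ c\trans c'\Rightarrow c'\in Z}\big)\cup\big(C_P\cap\set{c\mid \exists c'.\ c\trans c'\wedge c'\in Z}\big)\Big],
\]
and check by a case split on $c\in C_1$ versus $c\in C_P$ that the body of the claimed formula denotes exactly $\Phi(Z)$ when $X$ is interpreted as $Z$: for $c\in C_1$ the conjunct $(Q_1\vee\eventually X)$ is vacuous and $(Q_P\vee(Q_1\wedge\always X))$ collapses to $\always X$; for $c\in C_P$ it is the other way round. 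Since the two sets to be compared are then both greatest fixpoints of the \emph{same} monotone operator on $(2^C,\subseteq)$, by Knaster--Tarski it remains only to show $V^P_S=\nu Z.\Phi(Z)$.

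\textbf{Step 2: $\nu Z.\Phi(Z)\subseteq V^P_S$.} Writing $Z^{\ast}=\nu Z.\Phi(Z)=\Phi(Z^{\ast})$, I would show that from any $c\in Z^{\ast}$ and against any $\sigma\in\Sigma$ one can inductively build a play $c_0\cdot c_1\cdots$ that stays inside $Z^{\ast}$ forever: from a $C_1$-configuration in $Z^{\ast}$ all successors lie in $Z^{\ast}$ by $Z^{\ast}=\Phi(Z^{\ast})$, so the $\sigma$-move stays there; from a $C_P$-configuration $\Phi(Z^{\ast})$ supplies a successor in $Z^{\ast}$, which we take. Deadlock-freeness makes the play infinite, and $Z^{\ast}$ contains no $q_F$-configuration, so this play witnesses $\Outcome{M_S,c,\sigma}\not\subseteq\Interp{\eventually q_F}$; as $\sigma$ was arbitrary, $c\in V^P_S$.

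\textbf{Step 3: $V^P_S\subseteq\nu Z.\Phi(Z)$.} Contrapositively, for $c\notin\nu Z.\Phi(Z)$ I would produce a winning strategy for Player 1 by a rank argument on the decreasing iteration $\nu Z.\Phi(Z)=\bigcap_{\alpha}\Phi^{\alpha}(C)$: assign to each configuration outside the fixpoint its rank, the least ordinal $\alpha$ with the configuration not in $\Phi^{\alpha+1}(C)=\Phi(\Phi^\alpha(C))$. Unfolding the defining condition shows that a configuration of positive rank is not a $q_F$-configuration and either (if in $C_1$) has some successor of strictly smaller rank or (if in $C_P$) has \emph{all} successors of strictly smaller rank, while the rank-$0$ configurations are — using that deadlock-freeness forces $\Phi(C)=C\setminus(\set{q_F}\times\nat^n)$ — exactly the $q_F$-configurations. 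The strategy that always moves to a successor of strictly smaller rank then forces, by well-foundedness of the ordinals, a $q_F$-configuration after finitely many steps on every play respecting it; hence $c\notin V^P_S$. Combining Steps 2 and 3 gives $V^P_S=\nu Z.\Phi(Z)$, which by Step 1 is the asserted interpretation.

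\textbf{Main obstacle.} The individual steps are routine for finite arenas; the genuine care goes into the infinite-state aspects, which is exactly why the lemma is stated and proved. One point is that the $\mu$-calculus greatest fixpoint is a priori a transfinite iteration, so one cannot invoke the usual finitary induction — this is why I would phrase everything as a comparison of two Knaster--Tarski greatest fixpoints of one and the same operator $\Phi$ (and one may additionally observe that finite branching of $S$ collapses the iteration to length $\omega$). The other point is that deadlock-freeness is used essentially twice: to keep the witnessing plays of Step 2 infinite instead of stuck at a non-$q_F$ deadlock (which would otherwise still keep the probabilistic player ``safe''), and to guarantee in Step 3 that rank $0$ means control state $q_F$ rather than also including $Q_P$-deadlocks.
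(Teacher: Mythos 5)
Your proposal is correct, and its overall architecture matches the paper's: identify the body of the formula with a monotone predecessor operator (your $\Phi$ is exactly the paper's $g$), and prove the two inclusions between $V^P_S$ and the greatest fixpoint. Your Step 2 is essentially the paper's first inclusion verbatim: from a configuration in the fixpoint one inductively extends a finite prefix, against any $\sigma$, to an infinite play that never leaves the fixpoint and hence never visits $q_F$. Where you genuinely diverge is the converse inclusion. The paper proves $V^P_S\subseteq\Interp{\cdot}$ directly by showing $V^P_S$ is a post-fixpoint, i.e.\ $V^P_S\subseteq g(V^P_S)$ (a short case split on $C_1$ versus $C_P$, arguing that otherwise Player~1 could assemble a sure-reachability strategy from the successors' strategies), and then invokes Knaster--Tarski in the form $\nu g=\bigcup\{C'\mid C'\subseteq g(C')\}$. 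You instead argue contrapositively via ordinal ranks on the descending approximants $\Phi^{\alpha}(C)$ and synthesize an explicit rank-decreasing strategy for Player~1 outside the fixpoint, using deadlock-freeness to ensure that rank $0$ is exactly the set of $q_F$-configurations. Both arguments are sound in the infinite-state setting; the paper's is shorter and sidesteps transfinite iteration entirely, while yours yields as a by-product a memoryless attractor strategy and makes the role of deadlock-freeness in the converse direction more explicit (your observation that without it the rank-$0$ level would also contain $Q_P$-deadlocks is precisely why the lemma fails for P-VASS-MDPs with deadlocks). One minor over-caution on your part: in Step 2 deadlock-freeness is not actually needed to keep the witnessing play useful, since a play stuck at a non-$q_F$ deadlock would still avoid $q_F$; it is only the converse inclusion that requires it.
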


\begin{proof}
We denote by $U$ the set $\Interp{\nu X. (\bigvee_{q \in Q \setminus
    \set{q_F}} q) \wedge (Q_1 \vee \eventually X) \wedge (Q_P \vee
   (Q_1 \wedge \always X))}$. We will consider the function $g: 2^C \mapsto
2^C$ such that for each set of configurations $C' \subseteq C$, we have $g(C')=\Interp{\bigvee_{(q \in Q \setminus
    \set{q_F}} q) \wedge (Q_1 \vee \eventually X) \wedge (Q_P \vee
   (Q_1 \wedge \always X))}_{\env_0[X:=C']}$, i.e. the set $\Interp{\bigvee_{q \in Q \setminus
    \set{q_F}} q) \wedge (Q_p \wedge \eventually X) \wedge (Q_1 \wedge
   (Q_1 \wedge \always X))}$ where $X$ is interpreted as $C'$. Note
 that $U$ is then the greatest fixpoint of $g$ and hence $U=g(U)$.\\

We first prove that $U \subseteq V^P_S$. Let $c$ be a configuration in
$C$ such that $c \notin V^P_S$. Then there exists a strategy
$\sigma \in \Sigma$ such that  $\Outcome{M_S,c,\sigma} \subseteq
\Interp{\eventually q_F}$. We consider such a strategy $\sigma$ and  we reason by contradiction assuming that $c \in U$.

Let us show that
since $c \in U$, there exists an infinite play $c_0 \rightarrow c_1
\rightarrow c_2 \rightarrow \ldots$ in $\Outcome{M_S,c,\sigma}$ such
that $c_0=c$ and $c_i$ in $U$ for all $i \in \nat$. We prove in fact
by induction 
that if $c_0 \rightarrow c_1 \ldots \rightarrow c_k$ is a finite play
in $M_S$ respecting $\sigma$ with $c_0=c$ and such that $c_i \in U$ for $i \in [0..k]$,
then there exists $c_{k+1} \in U$ such that  $c_0 \rightarrow c_1
\ldots \rightarrow c_{k+1}$ is a play in $M_S$ respecting $\sigma$. The
base case is obvious since $c \in U$. We assume now that $c_0 \rightarrow c_1 \ldots \rightarrow c_k$ is a finite play in
$M_S$ respecting $\sigma$ such that $c_i \in U$ for $i \in
[0..k]$. Because there is no deadlock, if $c_k \in C_1$ then there
exists $c_{k+1} \in C$ satisfying $c_{k+1}=\sigma(c_0 \rightarrow c_1
\ldots \rightarrow c_k)$. Furthermore since $c_k \in U$ and $U=g(U)$, we deduce
that, for all $c \in C$ such that $c_k \rightarrow c$, we have $c \in
U$, and consequently $c_{k+1} \in U$. On the other hand, if $c_k \in
C_P$, then since $c_k \in U$, there exists $c_{k+1} \in
U$ such that $c_k \rightarrow c_{k+1}$. In both cases, we have that
$c_0 \rightarrow c_1 \ldots \rightarrow c_k \rightarrow c_{k+1}$
is a play in $M_S$ which respects $\sigma$. 

We deduce  the existence of  an infinite play $c_0 \rightarrow c_1
\rightarrow c_2 \rightarrow \ldots$ in $\Outcome{M_S,\linebreak[0]c,\sigma}$ such
that $c_0=c$ and $c_i$ in $U$ for all $i \in \nat$. Note that, because
$U=g(U)$, we have that $U \subseteq Q \setminus \set{q_F}$. However, because $c_0 \rightarrow c_1
\rightarrow c_2 \rightarrow \ldots$ in $\Outcome{M_S,c,\sigma}$, we
also have that $c_0 \rightarrow c_1
\rightarrow c_2 \rightarrow \ldots \in \Interp{\eventually q_F}$, which
is a contradiction. Consequently, we have $c \notin U$ and this allows
us to conclude that $U
\subseteq V^P_S$.\\

We now prove that $V^P_S \subseteq U$. By the Knaster-Tarski Theorem, since $U$ is the greatest fixpoint of
$g$, we know that $U = \bigcup \set{C'
  \subseteq C \mid C' \subseteq g(C')}$. It hence suffices to show that
$V^P_S \subseteq g(V^P_S)$. Let $c=\tuple{q,\vect{v}}$ be a
  configuration in $V^P_S$. First note that by definition of $V^P_S$,
  we have $q \neq q_F$. We reason then by a case
analysis to prove that $c \in g(V^P_S)$. First assume $c \in C_1$. Then by definition of
$V^P_S$, for all $c'=\tuple{q',\vect{v}'}$ in $C$ satisfying $c
\rightarrow c'$, we have $c' \in V^P_S$, otherwise Player 1 would have
a strategy to reach $q_F$ from $c$ which will consist in taking the
transition leading to $c'$. This allows us to deduce that $c \in
g(V^P_S)$. Assume now that $c \in C_P$. Then, because $c \in V^P_S$ and
$S$ is deadlock free, there necessarily exists $c'$ such $c
\rightarrow c'$ and $c' \in V^P_S$ (otherwise there would be a strategy to
reach $q_F$ from all states $c'$ such that $c \rightarrow c'$ and hence $c$
would not be in $V^P_S$). So also in this case we have $c \in
g(V^P_S)$. Hence we have $V^P_S \subseteq g(V^P_S)$ which allows to
deduce that $V^P_S \subseteq U$.
\qed
\end{proof}

\begin{lemma}
\label{lem:almostsure-p-pVASS-MDP}
$W^P_S=\Interp{\mu Y. \nu X. \big( (\bigvee_{q \in Q \setminus
    \set{q_F}} q) \wedge (Q_1 \vee \eventually X) \wedge (Q_P \vee
   (Q_1 \wedge \always X)) \vee (q_F \wedge Q_P \wedge \eventually Y)  \vee (q_F \wedge Q_1 \wedge \always Y) \big)}$.
\end{lemma}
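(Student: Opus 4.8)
The plan is to prove the two inclusions $Z \subseteq W^P_S$ and $W^P_S \subseteq Z$ separately, where $Z=\Interp{\mu Y.\nu X.\Phi(X,Y)}$ and $\Phi(X,Y)$ denotes the body of the formula; the argument follows the pattern of the proof of Lemma~\ref{lem:sure-p-pVASS-MDP}, but now has to cope with the nested fixpoint. Write $F(Y)=\Interp{\nu X.\Phi(X,Y)}$; since $Y$ occurs only positively in $\Phi$, $F$ is monotone on $2^C$, and $Z=\bigcup_\alpha Y_\alpha$ for the approximants $Y_0=\emptyset$, $Y_{\alpha+1}=F(Y_\alpha)$, $Y_\lambda=\bigcup_{\beta<\lambda}Y_\beta$. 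As a first sanity check I would note that, because $S$ is deadlock free, $\Interp{\eventually\emptyset}=\Interp{\always\emptyset}=\emptyset$, so $Y_1=F(\emptyset)=\Interp{\nu X.\,(\bigvee_{q\in Q\setminus\set{q_F}}q)\wedge(Q_1\vee\eventually X)\wedge(Q_P\vee(Q_1\wedge\always X))}$, which is exactly $V^P_S$ by Lemma~\ref{lem:sure-p-pVASS-MDP}; intuitively $Y_{\alpha+1}$ is the set of configurations from which the (adversarial) probabilistic player can guarantee at most ``$\alpha$ further visits'' to $q_F$.

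For $Z\subseteq W^P_S$ I would argue by a rank/well-foundedness argument. Assign to $c\in Z$ the rank $\alpha_c$, the least ordinal with $c\in Y_{\alpha_c}$; by minimality $\alpha_c$ is a successor $\beta_c+1$, so $c\in F(Y_{\beta_c})=\Interp{\nu X.\Phi(X,Y_{\beta_c})}$. Using the fixpoint identity $\Interp{\nu X.\Phi(X,Y_{\beta_c})}=\Interp{\Phi(\nu X.\Phi(X,Y_{\beta_c}),Y_{\beta_c})}$ I would describe a strategy of Player P: while the current configuration lies in $\Interp{\nu X.\Phi(X,Y_{\beta_c})}$ with control state $\neq q_F$, the first disjunct of $\Phi$ furnishes a move keeping the play in this set (Player P picks such a successor at a $Q_P$-configuration; every successor works at a $Q_1$-configuration, and deadlock-freeness ensures one exists); and when $q_F$ is reached — still inside $\Interp{\nu X.\Phi(X,Y_{\beta_c})}$ — the disjuncts $q_F\wedge Q_P\wedge\eventually Y$ and $q_F\wedge Q_1\wedge\always Y$ furnish a move into $Y_{\beta_c}$, whereupon Player P restarts with the strictly smaller rank of the new configuration. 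Against any fixed $\sigma\in\Sigma$ the induced play is infinite (deadlock-freeness), and its rank drops strictly at every visit to $q_F$; well-foundedness of the ordinals then bounds the number of visits, so the play is not in $\Interp{\always\eventually q_F}$. Hence $\Outcome{M_S,c,\sigma}\not\subseteq\Interp{\always\eventually q_F}$ for every $\sigma$, i.e.\ $c\in W^P_S$.

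For $W^P_S\subseteq Z$ I would invoke Knaster--Tarski: $Z=\mu Y.F(Y)=\bigcap\set{Y'\subseteq C\mid F(Y')\subseteq Y'}$, so it suffices to prove the pre-fixpoint inclusion $F(W^P_S)\subseteq W^P_S$. Put $U=F(W^P_S)=\Interp{\nu X.\Phi(X,W^P_S)}$, so $U=\Interp{\Phi(U,W^P_S)}$; take $c\in U$ and suppose towards a contradiction that some $\sigma\in\Sigma$ has $\Outcome{M_S,c,\sigma}\subseteq\Interp{\always\eventually q_F}$. As in the proof of Lemma~\ref{lem:sure-p-pVASS-MDP}, inductively build a play respecting $\sigma$ along which Player P keeps the configuration in $U$ while the control state is $\neq q_F$ (first disjunct of $\Phi$: $\always U$ at $Q_1$-configurations, $\eventually U$ at $Q_P$-configurations). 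If this play never reaches $q_F$ it is already not in $\Interp{\always\eventually q_F}$. Otherwise it reaches a $q_F$-configuration still in $U$, and the disjuncts $q_F\wedge Q_P\wedge\eventually W^P_S$, $q_F\wedge Q_1\wedge\always W^P_S$ force the next configuration into $W^P_S$ (chosen by Player P at a $Q_P$-configuration, forced by $\sigma$ at a $Q_1$-configuration); but then the residual of $\sigma$ from that configuration witnesses, by the very definition of $W^P_S$, a continuation visiting $q_F$ only finitely often, and prepending the prefix contradicts the choice of $\sigma$. Either way we reach a contradiction, so $c\in W^P_S$, hence $F(W^P_S)\subseteq W^P_S$ and $Z\subseteq W^P_S$.

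I expect the main obstacle to be the first inclusion: turning the informal ``the rank decreases at every visit to $q_F$'' into a clean argument, in particular checking that Player P can genuinely maintain membership in $\Interp{\nu X.\Phi(X,Y_{\beta_c})}$ from one visit of $q_F$ to the next (this is exactly where deadlock-freeness and the guarded shape $(Q_P\vee(Q_1\wedge\always X))$ of the formula enter), and that the rank strictly decreases across each such visit so that well-foundedness bounds the number of visits. The remaining steps — the Knaster--Tarski reduction and the ``stay inside $U$'' construction — are routine adaptations of the proof of Lemma~\ref{lem:sure-p-pVASS-MDP}.
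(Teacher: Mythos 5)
Your first inclusion ($Z\subseteq W^P_S$) is fine: the transfinite-rank argument on the approximants of the outer $\mu$, with Player P forcing a strict rank decrease at every visit to $q_F$, is a legitimate alternative to the paper's route for that direction, and your use of the guarded shape of the formula and of deadlock-freeness there is correct. The problem is the second half. Showing $F(W^P_S)\subseteq W^P_S$ establishes that $W^P_S$ is a \emph{pre-fixpoint} of $F$, and since $Z=\mu Y.F(Y)=\bigcap\set{Y'\subseteq C\mid F(Y')\subseteq Y'}$, this yields $Z\subseteq W^P_S$ --- i.e.\ exactly the inclusion you had already proved, as the closing sentence of that paragraph itself concedes (``hence $F(W^P_S)\subseteq W^P_S$ and $Z\subseteq W^P_S$''). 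It does not, and cannot, yield $W^P_S\subseteq Z$: a least fixpoint is contained in every pre-fixpoint, not the other way around. So the completeness direction $W^P_S\subseteq Z$ is entirely absent from your proposal; you have in effect given two proofs of the same (easier) inclusion, the second of which is essentially the paper's own argument for $Z\subseteq W^P_S$.

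The missing direction is the genuinely nontrivial half and does not follow from any pre-fixpoint computation on $W^P_S$. In contrapositive form, one must show that from every $c\notin Z$ Player 1 has a \emph{single} strategy $\sigma$ with $\Outcome{M_S,c,\sigma}\subseteq\Interp{\always\eventually q_F}$; this is a determinacy-type statement that has to be proved by actually constructing $\sigma$. The paper does this by passing to the complement $\overline{Z}$, rewriting it as a $\nu Y.\mu X$ formula, and showing that $\overline{Z}$ is contained in the set of configurations from which Player 1 can \emph{surely reach} the set $T$ of $q_F$-configurations from which he can in one step re-enter $\overline{Z}$ (this sure-reachability step reuses the argument of Lemma~\ref{lem:sure-p-pVASS-MDP} with $T$ as the target); iterating ``surely reach $T$, then step back into $\overline{Z}$'' yields a strategy whose every play visits $q_F$ infinitely often. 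You would need to supply this construction, or an equivalent one, to close the gap.
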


\begin{proof}
We denote by $U$ the set $\Interp{\mu Y. \nu X. \big( (\bigvee_{q \in Q \setminus
    \set{q_F}} q) \wedge (Q_1 \vee \eventually X) \wedge (Q_P \vee
   (Q_1 \wedge \always X)) \vee (q_F \wedge Q_P \wedge \eventually Y)  \vee (q_F \wedge Q_1 \wedge \always Y) \big)}$ and we consider the function $h: 2^C \mapsto
2^C$ such that for each set of configurations $C' \subseteq C$, we have $h(C')=\Interp{ \nu X. \big( (\bigvee_{q \in Q \setminus
    \set{q_F}} q) \wedge (Q_1 \vee \eventually X) \wedge (Q_P \vee
   (Q_1 \wedge \always X)) \vee (q_F \wedge Q_P \wedge \eventually Y)  \vee (q_F \wedge Q_1 \wedge \always Y) \big)}_{\env_0[Y:=C']}$, i.e. it corresponds to the set $\Interp{ \nu X. \big( (\bigvee_{q \in Q \setminus
    \set{q_F}} q) \wedge (Q_1 \vee \eventually X) \wedge (Q_P \vee
   (Q_1 \wedge \always X)) \vee (q_F \wedge Q_P \wedge \eventually Y)  \vee q_F \wedge (Q_1 \wedge \always Y) \big)}$ where $Y$ is interpreted as $C'$. Note that $U$ is then the
least fixpoint of $h$. \\

We first prove that $U \subseteq W^P_S$. Note that $U$ being the least
fixpoint of $h$, by the Knaster-Tarski Theorem we have $U=\bigcap \set{C'
  \subseteq C \mid h(C') \subseteq C'}$. We will hence show that $h(W^P_S)
\subseteq W^P_S$ from which we will get $U \subseteq W^P_S$. For this, 
we consider the function $g_{W^P_S}: 2^C \mapsto
2^C$ such that for each set of configurations $C' \subseteq C$, we have $g_{W^P_S}(C')=\Interp{ (\bigvee_{q \in Q \setminus
    \set{q_F}} q) \wedge (Q_1 \vee \eventually X) \wedge (Q_P \vee
   (Q_1 \wedge \always X)) \vee (q_F \wedge Q_P \wedge \eventually Y)  \vee (q_F \wedge Q_1
   \wedge \always Y) }_{\env_0[Y:=W^P_S,X:=C']}$. The set $V=h(W^P_S)$ is then
 by definition the greatest fixpoint of $g_{W^P_S}$ (so we have as
 well $V=g_{W^P_S}(V)$). Hence we need to show that $V \subseteq
 W^P_S$. For this we will assume that  $c \notin W^P_S$ and show that $c \notin V$.  

Let $c\notin W^P_S$. Hence there exists a strategy
$\sigma \in \Sigma$ such that  $\Outcome{M_S,c,\sigma} \subseteq
\Interp{\always \eventually q_F}$. We reason now
by contradiction assuming that $c \in V$. We will show that either there exists a finite play $c_0 \rightarrow c_1
\rightarrow c_2 \ldots \rightarrow c_k$ in $M_S$ which respects $\sigma$ and with $c_k
\in W^P_S \cap \Interp{q_F}$ or there exists an infinite play $c_0 \rightarrow c_1
\rightarrow c_2 \rightarrow \ldots$ in $\Outcome{M_S,c,\sigma}$ such
that $c_0=c$ and $c_i \notin \Interp{q_F}$ for all $i \in \nat
\setminus \set{0}$. 

We prove in fact by induction  that if $c_0 \rightarrow c_1 \ldots
\rightarrow c_k$ is a finite play in $M_S$ respecting $\sigma$ with
$c_0=c$ and such that $c_i \in V$ and $c_i \notin  W^P_S$  for $i \in [0..k]$, then
there exists $c_{k+1} \in C$ such that either $c_{k+1} \in W^P_S \cap \Interp{q_F}$ or
($c_{k+1} \in V$ and $c_{k+1} \notin \Interp{q_F}$), and such that
$c_0 \rightarrow c_1 \ldots \rightarrow c_{k+1}$ is a play in $M_S$
respecting $\sigma$.  We proceed with the base case. First note
that $c \in V$ and $c \notin
W^P_S$. We recall that $V=g_{W^P_S}(V)$.
\begin{itemize}
\item If $c \in C_1$, then let $c_1=\sigma(c)$. Since $c \in
  g_{W^P_S}(V)$ and $c \in C_1$, we have necessarily that either ($c_1
  \in V$ and $c_1 \notin \Interp{q_F}$) or $c_1\in W^P_S \cap \Interp{q_F}$, and we have
  $c \rightarrow c_1$ is a play in $M_S$ respecting $\sigma$. 
\item If $c \in C_P$, then by definition of $g_{W^P_S}(V)$, there
  exists necessarily $c_1$ such that either ($c_1
  \in V$ and $c_1 \notin \Interp{q_F}$) or $c_1 \in W^P_S \cap \Interp{q_F}$ and $c
  \rightarrow c_1$. Furthermore this allows us to deduce that $c
  \rightarrow c_1$ is a play in $M_S$ respecting $\sigma$. 
\end{itemize}
For the inductive case, the proof works exactly the same way.

From this we deduce that either there exists a finite play $c \rightarrow c_1
\rightarrow c_2 \ldots \rightarrow c_k$ in $M_S$ respecting $\sigma$ with $c_k
\in W^P_S \cap \Interp{q_F}$ or there exists an infinite play $c \rightarrow c_1
\rightarrow c_2 \rightarrow \ldots$ in $\Outcome{M_S,c,\sigma}$ such
that $c_0=c$ and $c_i \notin \Interp{q_F}$ for all $i \in \nat$. We
recall that we have $\Outcome{M_S,c,\sigma} \subseteq
\Interp{\always \eventually q_F}$ and 
proceed by a case analysis to show a contradiction:
\begin{enumerate}
\item If there exists $c_0 \rightarrow c_1
\rightarrow c_2 \ldots \rightarrow c_k$ in $\Outcome{M_S,c,\sigma}$ with $c_k
\in W^P_S$ then it is not possible that all the plays of the form  $c_0 \rightarrow c_1
\rightarrow c_2 \ldots \rightarrow c_k \ldots$ in $\Outcome{M_S,c,\sigma}$
are in $\Interp{\always \eventually q_F}$. In fact, otherwise there would be a
strategy $\sigma'$ from $c_k$ (built from $\sigma$) such that
$\Outcome{M_S,c_k,\sigma'} \subseteq \Interp{\always \eventually q_F}$ which
contradicts the fact that $c_k \in W^P_S$.
\item If there exists an infinite play $c_0 \rightarrow c_1
\rightarrow c_2 \rightarrow \ldots$ in $\Outcome{M_S,c,\sigma}$ such
that $c_0=c$ and $c_i \notin \Interp{q_F}$ for all $i \in \nat$, then we have a
contradiction with the fact that $\Outcome{M_S,c,\sigma} \subseteq
\Interp{\always \eventually q_F}$.
\end{enumerate}
We hence deduce that $c \notin V$, and consequently we have shown that
$U \subseteq W^P_S$.

We now prove that $W^P_S \subseteq U$. To do that we will instead show
that the complement of $U$, denoted by $\overline{U}$, is included in
the complement of $W^P_S$, denoted by $W^1_S$ and which is equal to
the set $\set{c \in C \mid \exists \sigma \in \Sigma  \mbox{ s.t. }
  \Outcome{M_S,c,\sigma} \subseteq \Interp{\always \eventually
    q_F}}$. 

First, note that $\overline{U}$ is equal to $\Interp{\nu Y. \mu X. \big( q_F \vee (Q_P \wedge \always X) \vee (Q_1 \wedge
   (Q_P \vee \eventually X)) \wedge (\neg q_F \vee Q_P \vee \eventually Y)  \wedge (\neg q_F \vee Q_1 \vee \always Y) \big)}$,  which is itself equal
to $\Interp{\nu Y. \mu X. \big[ \big(\neg q_F \wedge \big((Q_P \wedge
  \always X) \vee (Q_1 \wedge \eventually X)\big)\big) \vee \big(q_F \wedge
  \big((Q_P \wedge \always Y) \vee (Q_1  \vee \eventually Y)\big)
  \big)\big]}$. Note that then we have as well that
$\overline{U}= \Interp{\mu X. \big[ \big( \neg q_F \wedge \big((Q_P \wedge
  \always X) \vee (Q_1 \wedge \eventually X)\big)\big) \vee \big(q_F \wedge
  \big((Q_P \wedge \always Y) \vee (Q_1  \vee \eventually Y)\big)
  \big)\big]}_{\env_0[Y:=\overline{U}]}$. We denote by $T$ the set $\Interp{\big(q_F \wedge
  \big((Q_P \wedge \always Y) \vee (Q_1  \vee \eventually Y)\big)
  \big)}_{\env_0[Y:=\overline{U}]}$. Note that we have : $\Interp{\mu X. \big[ \big(\neg q_F \wedge \big((Q_P \wedge
  \always X) \vee (Q_1 \wedge \eventually X)\big) \big)\vee \big(q_F \wedge
  \big((Q_P \wedge \always Y) \vee (Q_1  \vee \eventually Y)\big)
  \big)\big]}_{\env_0[Y:=\overline{U}]} \subseteq \Interp{\mu X. \big[\big( \big((Q_P \wedge
  \always X) \vee (Q_1 \wedge \eventually X)\big)\big) \vee \big(q_F \wedge
  \big((Q_P \wedge \always Y) \vee (Q_1  \vee \eventually
  Y)\big)\big)\big]}_{\env_0[Y:=\overline{U}]}$. Let $U'$ be the set $\Interp{\mu X. \big[\big( \big((Q_P \wedge
  \always X) \vee (Q_1 \wedge \eventually X)\big)\big) \vee \big(q_F \wedge
  \big((Q_P \wedge \always Y) \vee (Q_1  \vee \eventually
  Y)\big)\big)\big]}_{\env_0[Y:=\overline{U}]}$. By adapting the proof
of Lemma \ref{lem:sure-p-pVASS-MDP}, we can deduce that the complement
of this last $U'$ is equal to $\set{c \in C \mid\not \exists \sigma \in \Sigma  \mbox{ s.t. }
  \Outcome{M_S,c,\sigma} \subseteq \Interp{\eventually T}}$ (where
$\Interp{\eventually T}$ denotes the plays that eventually reach the
set $T$) and consequently $U'=\set{c \in C \mid \exists \sigma \in \Sigma  \mbox{ s.t. }
  \Outcome{M_S,c,\sigma} \subseteq \Interp{\eventually T}}$. We have consequently that $\overline{U}
\subseteq \set{c \in C \mid \exists \sigma \in \Sigma  \mbox{ s.t. }
  \Outcome{M_S,c,\sigma} \subseteq \Interp{\eventually T}}$ with $T= \Interp{\big(q_F \wedge
  \big((Q_P \wedge \always Y) \vee (Q_1  \vee \eventually Y)\big)
  \big)}_{\env_0[Y:=\overline{U}]}$. 

We can now prove that
$\overline{U} \subseteq W^1_S$. Let $c \in \overline{U}$. Since   $\overline{U}
\subseteq \set{c \in C \mid \exists \sigma \in \Sigma  \mbox{ s.t. }
  \Outcome{M_S,c,\sigma} \subseteq \Interp{\eventually T}}$, from $c$ Player 1 can surely
reach $T$ and by definition of $T$, when it reaches $T$ first it
is in  $\Interp{q_F}$ and then Player 1 can ensure a successor state
to belong to $\overline{U}$. Hence performing this reasoning
iteratively, we can build a strategy for Player 1 from $c$ to reach
surely infinitely often $\Interp{q_F}$.

\qed
\end{proof}

\section{Proofs of Section \ref{sec:1-VASS-MDP}}

\subsection{Proof of Lemma \ref{lemma-onenpvass-nodeadlock}}

\begin{proof}
  Let $S=\tuple{Q,Q_1,Q_P,T,\tau}$ be a 1-VASS-MDP and $M_S=\tuple{C,C_1,C_P,\trans,p}$ its associated MDP. If a configuration $\tuple{q,\vect{v}} \in C_P$ is a deadlock, then it means that there is no outgoing edge in $S$ from the control state $q$; hence each time a play will reach this configuration, Player 1 will lose, so we can add a self loop without any effect on the counters to this state in order to remove the deadlock. For the states $q \in Q_1$, we add a transition $t$ to the outgoing edge of $q$ which does not modify the counter values and which leads to a new control state with a self-loop, such that if the play reaches a configuration $\tuple{q,\vect{v}}$ which is a deadlock in $S$, in the new game arena the only choice for Player 1 will be to go to this new absorbing state and he will lose as he loses in $S$ because of the deadlock. \qed
\end{proof}

\subsection{Proof of Theorem \ref{thm-sure-1vamdp}}

If we define the two following set of configurations: $V^1_S=\set{c
  \in C \mid \exists \sigma \in \Sigma  \mbox{ such that  }
  \Outcome{M_S,c,\sigma} \subseteq \Interp{\eventually q_F}}$ and
$W^1_S=\set{c \in C \mid \exists \sigma \in \Sigma  \mbox{ such that
  } \linebreak[0] \Outcome{M_S,c,\sigma} \subseteq \Interp{\always \eventually q_F}}$, we have the following result:

\begin{lemma}~~
\label{lem:sure-1-VASS-MDP}
  \begin{itemize}
  \item $V^1_S=\Interp{\mu X. q_F \vee (Q_1 \wedge \eventually X) \vee (Q_P \wedge \always X)}$
  \item  $W^1_S=\Interp{\nu Y.\mu X. \big(q_F \vee (Q_P \wedge \always X) \vee (Q_1 \wedge \eventually X )\big) \wedge \big (\bigvee_{q \in Q \setminus
        \set{q_F}} q \vee Q_1 \vee (Q_P \wedge \always Y)\big) \wedge \big( \bigvee_{q \in Q \setminus
        \set{q_F}} q  \vee Q_P \vee \eventually Y \big)}$
  \end{itemize}
\end{lemma}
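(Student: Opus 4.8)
The plan is to obtain the two equalities by complementation from Lemma~\ref{lem:formulae-p-pVASS-MDP}, rather than redoing a fixpoint induction from scratch. By Lemma~\ref{lemma-onenpvass-nodeadlock} I may assume $S$ is a deadlock-free 1-VASS-MDP. I would first record two facts. First, in a deadlock-free system $V^1_S$ is exactly the complement of $V^P_S=\set{c\mid\nexists\sigma\in\Sigma,\ \Outcome{M_S,c,\sigma}\subseteq\Interp{\eventually q_F}}$, and likewise $W^1_S=C\setminus W^P_S$; this is pure logic (negating the existential quantifier over strategies), requiring no determinacy argument. Second, inspecting the proofs of Lemmas~\ref{lem:sure-p-pVASS-MDP} and~\ref{lem:almostsure-p-pVASS-MDP} one sees they use only deadlock-freeness and the turn-based structure of $M_S$, never the P-VASS restriction that counters move solely under Player~P; hence those two identities — and the $\guardmucalcul$ formulas occurring in them — hold verbatim for deadlock-free 1-VASS-MDPs, interpreted over the same underlying labelled transition system $\tuple{C,T,\trans}$.

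It then remains to compute the De Morgan duals. I would use the standard $\mu$-calculus laws $\neg\,\nu X.\phi\equiv\mu X.\widetilde\phi$ and $\neg\,\mu Y.\phi\equiv\nu Y.\widetilde\phi$, where the dual $\widetilde{(\cdot)}$ swaps $\wedge/\vee$ and $\eventually/\always$ and complements atomic state-predicates while leaving the (positively occurring) bound variables in place, together with the elementary identities $\neg q_F=\bigvee_{q\in Q\setminus\set{q_F}}q$ and $\neg Q_1=Q_P$ — valid because $Q=Q_1\uplus Q_P$ and every configuration carries exactly one control-state. Dualising the $\nu X$-formula of Lemma~\ref{lem:sure-p-pVASS-MDP} yields $\mu X.\,q_F\vee(Q_P\wedge\always X)\vee\bigl(Q_1\wedge(Q_P\vee\eventually X)\bigr)$, and the disjointness $Q_1\cap Q_P=\varnothing$ collapses $Q_1\wedge(Q_P\vee\eventually X)$ to $Q_1\wedge\eventually X$, giving exactly the stated formula for $V^1_S$. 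Dualising the $\mu Y.\nu X$-formula of Lemma~\ref{lem:almostsure-p-pVASS-MDP} produces $\nu Y.\mu X.\,\bigl(q_F\vee(Q_P\wedge\always X)\vee(Q_1\wedge\eventually X)\bigr)\wedge\bigl(\bigvee_{q\neq q_F}q\vee Q_1\vee\always Y\bigr)\wedge\bigl(\bigvee_{q\neq q_F}q\vee Q_P\vee\eventually Y\bigr)$; on any configuration where the first two disjuncts of the second conjunct both fail (a $q_F$-configuration owned by Player~P) the predicate $Q_P$ holds, so $\always Y$ and $Q_P\wedge\always Y$ denote the same set there, which turns the second conjunct into the form displayed for $W^1_S$.

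The work that needs care, and the main obstacle, is getting these dualised formulas to match the stated ones on the nose — that is, spotting the two simplifications above ($Q_1\wedge(Q_P\vee\eventually X)=Q_1\wedge\eventually X$, and $\always Y$ versus $Q_P\wedge\always Y$ under the surrounding disjunction) — and then the purely syntactic check, needed for the ensuing Theorem~\ref{thm-sure-1vamdp}, that both formulas lie in $\guardmucalcul$ for the $(Q_1,Q_P)$-single-sided VASS $\tuple{Q,T}$ (recall that a 1-VASS-MDP makes $\tuple{Q,T}$ single-sided with the counter-frozen side being $Q_P$, i.e.\ $Q_2=Q_P$ in the grammar). Every occurrence of $\always$ in the two formulas sits inside a subformula of the shape $Q_P\wedge\always(\cdot)$ — precisely the reason for inserting the redundant $Q_P\wedge$, in the spirit of the remark following Lemma~\ref{lem:formulae-p-pVASS-MDP} — so each ``always'' is guarded by $Q_2=Q_P$ as the grammar of $\guardmucalcul$ requires, and Theorem~\ref{thm-mucalcul} then decides membership of $c_0$ in $V^1_S$ and in $W^1_S$. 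A self-contained alternative would be to prove the two equalities directly by an attractor-style fixpoint induction, mirroring the proof of Lemma~\ref{lem:sure-p-pVASS-MDP} almost line for line, but the complementation route is considerably shorter.
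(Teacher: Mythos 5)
Your proposal is correct and follows essentially the same route as the paper: it observes that $V^1_S=\overline{V^P_S}$ and $W^1_S=\overline{W^P_S}$, that the proofs of Lemmas~\ref{lem:sure-p-pVASS-MDP} and~\ref{lem:almostsure-p-pVASS-MDP} only use deadlock-freeness (not the P-VASS restriction), and then dualises those formulae, performing exactly the two simplifications (collapsing $Q_1 \wedge (Q_P \vee \eventually X)$ and inserting $Q_P$ in front of $\always Y$) that the paper uses. Your additional explicit check that the resulting formulae are $Q_P$-guarded and hence in $\guardmucalcul$ for the $(Q_1,Q_P)$-single-sided VASS matches the remark the paper makes when deducing Theorem~\ref{thm-sure-1vamdp}.
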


\begin{proof}
  Since we are looking at deadlock free VASS-MDP, we can reuse the formula given by Lemma \ref{lem:sure-p-pVASS-MDP} and \ref{lem:almostsure-p-pVASS-MDP}. In fact, note that we have $V^1_S=\overline{V^P_S}$ and $W^1_S=\overline{W^P_S}$. Hence by taking the complement formulae of $\guardmucalcul$, we obtain the desired result. For $V^1_S$, from the formula describing $V^P_S$, we obtain the formula $\mu X. q_F \vee (Q_1 \wedge \eventually X) \vee \big(Q_1 \wedge (Q_P \vee \always X)\big)$ which is equivalent to $\mu X. q_F \vee (Q_1 \wedge \eventually X) \vee (Q_P \wedge \always X)$. For $W^1_S$, from the formula describing $W^P_S$, we obtain the formula $\nu Y.\mu X. \big(q_F \vee (Q_P \wedge \always X) \vee (Q_1 \wedge (Q_P \vee \eventually X) )\big) \wedge \big (\bigvee_{q \in Q \setminus
        \set{q_F}} q \vee Q_1 \vee \always Y \big) \wedge \big( \bigvee_{q \in Q \setminus
        \set{q_F}} q  \vee Q_P \vee \eventually Y \big)$ which is equivalent to $\nu Y.\mu X. \big(q_F \vee (Q_P \wedge \always X) \vee (Q_1 \wedge \eventually X )\big) \wedge \big (\bigvee_{q \in Q \setminus
        \set{q_F}} q \vee Q_1 \vee (Q_P \wedge \always Y)\big) \wedge \big( \bigvee_{q \in Q \setminus
        \set{q_F}} q  \vee Q_P \vee \eventually Y \big)$. \qed
\end{proof}

\subsection{Closed $\guardmucalcul$-formulae manipulate upward closed-sets}

We will say that an environment $\env : \Var \rightarrow 2^C$ is upward-closed if for each variable $X \in \Var$, $\env(X)$ is upward-closed (we take as order $\leq$ for the configurations, the classical one such that $\tuple{q,\vect{v}} \leq \tuple{q',\vect{v}'}$ iff $q=q'$ and $\vect{v} \leq \vect{v}'$) . Whereas it is not true that for any VASS, any formula $\phi \in \guardmucalcul$ and any upward closed environment $\env : \Var \rightarrow 2^C$ the set $\Interp{\phi}_{\env}$ is upward closed, we now prove that on single-sided VASS this property holds.

\begin{lemma}\label{lem-phi-upcl}
For any formula $\phi \in \guardmucalcul$ and any upward closed environment $\env$, $\Interp{\phi}_{\env}$ evaluated over the configurations of the $(Q_1,Q_P)$-single-sided VASS $\tuple{Q,T}$ is an upward closed set.
\end{lemma}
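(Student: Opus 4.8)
The plan is to proceed by structural induction on $\phi \in \guardmucalcul$, proving the claim for all upward-closed environments $\env$ simultaneously (this uniformity is what lets the induction pass through the fixpoint binders). Recall that the upward-closed subsets of $C$ form a complete sublattice of $(2^C, \subseteq)$: they contain $\emptyset$ and $C = Q \times \nat^n$, and are closed under arbitrary unions and arbitrary intersections. The base and Boolean cases are immediate: $\Interp{q}_\env = \set{q} \times \nat^n$ imposes no constraint on the counters; $\Interp{X}_\env = \env(X)$ is upward-closed by hypothesis on $\env$; and $\Interp{\phi_1 \wedge \phi_2}_\env$ and $\Interp{\phi_1 \vee \phi_2}_\env$ are, by the induction hypothesis, an intersection and a union of upward-closed sets.

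The two modal cases are where the transition structure of the VASS enters. For $\eventually \phi$, recall $\Interp{\eventually\phi}_\env = \set{c \in C \mid \exists c'.\ c \trans c' \text{ and } c' \in \Interp{\phi}_\env}$, and monotonicity of VASS transitions gives the result directly: if a transition $t = \tuple{q,\vect z,q'}$ is enabled at $\tuple{q,\vect v}$ and $\vect v \le \vect w$, then $t$ is enabled at $\tuple{q,\vect w}$ and $t(\tuple{q,\vect w}) \ge t(\tuple{q,\vect v})$, so the upward-closedness of $\Interp{\phi}_\env$ (induction hypothesis) carries back to the set of its predecessors; this holds at every control state. The guarded case $Q_P \wedge \always\phi$ is the delicate one, and it is exactly here that single-sidedness is used: from a state $q \in Q_P$ every transition has effect $\vect 0$, so the set of transitions enabled at $\tuple{q,\vect v}$ depends only on $q$, and each maps $\tuple{q,\vect v}$ to some $\tuple{q',\vect v}$. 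Thus if $\tuple{q,\vect v} \in \Interp{Q_P \wedge \always\phi}_\env$ and $\vect v \le \vect w$, every successor $\tuple{q',\vect w}$ of $\tuple{q,\vect w}$ dominates the corresponding successor $\tuple{q',\vect v}$ of $\tuple{q,\vect v}$, which lies in $\Interp{\phi}_\env$; upward-closedness of $\Interp{\phi}_\env$ then puts $\tuple{q',\vect w}$ in it as well, so $\tuple{q,\vect w} \in \Interp{Q_P \wedge \always\phi}_\env$ (and if $\tuple{q,\vect v}$ is a deadlock then so is $\tuple{q,\vect w}$, with $\always\phi$ vacuously true at both). Without the $Q_P$-guard this would fail, since from a $Q_1$-state a decrementing transition might be enabled at $\tuple{q,\vect w}$ but not at $\tuple{q,\vect v}$, producing an uncovered successor.

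For the fixpoint cases $\mu X.\phi$ and $\nu X.\phi$, let $f_\phi : 2^C \to 2^C$ be the monotone operator $D \mapsto \Interp{\phi}_{\env[X:=D]}$. If $D$ is upward-closed then $\env[X:=D]$ is an upward-closed environment, so by the induction hypothesis $f_\phi(D)$ is upward-closed; hence $f_\phi$ restricts to a monotone operator on the sublattice of upward-closed sets. Running the transfinite Kleene iteration from $C$ (for $\nu$) and from $\emptyset$ (for $\mu$), both of which are upward-closed, every stage stays upward-closed---successor stages by the previous sentence, limit stages because the sublattice is closed under intersections, resp.\ unions---so the limits $\Interp{\nu X.\phi}_\env$ and $\Interp{\mu X.\phi}_\env$ are upward-closed; by standard Knaster--Tarski/Kleene theory these limits are exactly the fixpoints, whether computed in $2^C$ or inside the sublattice.

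I expect the main obstacle to be pinning down the $Q_P \wedge \always\phi$ case precisely---explaining why single-sidedness is indispensable and dealing with the vacuous truth of $\always$ at deadlocks---while the fixpoint step is routine lattice theory once $f_\phi$ is seen to preserve upward-closedness.
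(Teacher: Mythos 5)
Your proof is correct and follows essentially the same route as the paper's: structural induction, with monotonicity of VASS transitions handling $\eventually\phi$, single-sidedness (outgoing transitions from $Q_P$-states having effect $\vect{0}$) handling the guarded $\always$ case, and closure of upward-closed sets under unions and intersections handling the fixpoints. Your treatment is if anything slightly more careful than the paper's, which dispatches the fixpoint cases with a one-line appeal to closure under intersection/union where you spell out the transfinite Kleene iteration, and which does not explicitly address the vacuous truth of $\always\phi$ at deadlocked $Q_P$-configurations.
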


\begin{proof}
The proof is by induction on the length of the formula $\phi$. For formulae of the form $q$, the result is due to the fact that the set of considered regions is  upward closed. For formulae of the form $X$, the result comes from the assumption on the considered environment. For formulae of the form $\phi \wedge \psi$ and $\mu X.\phi$ the result can be obtained using the induction hypothesis and the fact that the intersection of upward closed sets is an upward-closed set. For formulae of the form $\phi \vee \psi$ and $\nu X.\phi$ the result can be obtained using the induction hypothesis and the fact that the union of upward closed sets is an upward closed set. 

Now we consider formulae of the form $\Diamond \phi$ assuming that for any upward closed environment $\env$, $\Interp{\phi}_{\env}$ is an upward closed set. Let $c_1 \in \Interp{\Diamond \phi}_{\env}$. Then there exists $c'_1 \in \Interp{\phi}_{\env}$ such that $c_1 \trans c'_1$. Let $c_2 \in C$ such that $c_1 \leq c_2$. Since we are considering VASS, we now that there exists $c'_2 \in C$ such that $c_2 \trans c'_2$ and $c'_1 \leq c'_2$. Since $\Interp{\phi}_{\env}$ is upward closed, we have $c'_2 \in \Interp{\phi}_{\env}$, hence $c_2$ belongs to $\Interp{\Diamond \phi}_{\env}$. This proves that $\Interp{\Diamond \phi}_{\env}$ is upward closed.

Now we consider formulae of the form $Q_P \wedge \Box \phi$ assuming that for
any upward closed environment $\env$, $\Interp{\phi}_{\env}$ is an upward
closed set. Let $c_1 \in \Interp{Q_P \wedge \Box \phi}_{\env}$. Then for all
$c'_1 \in C$ such that $c_1 \trans c'_1$, we have $c'_1 \in
\Interp{\phi}_{\env}$. Note that $c_1=(q_1,\vect{v}_1)$ with $q_1 \in
Q_P$. Let $c_2 \in C$ such that $c_1 \leq c_2$. By definition of the order
$\leq$ on the set of configurations, we have $c_2=(q_1,\vect{v}_2)$ with
$\vect{v}_1 \leq \vect{v}_2$. Let $c'_2$ such that $c_2 \trans c'_2$. Since
$q_1 \in Q_P$, by the definition of single-sided VASS, we know that the transition which leads from $c_2$ to $c'_2$ can also be taken from $c_1$ (this is because the outgoing transitions from control states in $Q_P$ do not modify the counter values), hence there exists $c'_1 \in C$ such that $c_1 \trans c'_1$ and since $c_1 \leq c_2$, we have $c'_1 \leq c'_2$. Furthermore, we have $c'_1 \in \Interp{\phi}_{\env}$ and since, by induction, this last set is upward closed, we deduce $c'_2 \in \Interp{\phi}_{\env}$. This allows us to conclude that $c_2$ belongs to $\Interp{Q_1 \wedge \Box \phi}_{\env}$ which is hence an upward closed set. \qed
\end{proof}

\subsection{Proof of Lemma \ref{lemma-N-strat}}

\begin{proof}
We consider the function $h : 2^C \rightarrow 2^C$ which associates to each set of configurations $C' \subseteq C$ the set $h(C')=\Interp{q_F \vee \InvPre{X,Y}\big}_{\env_0[X:=E,Y:=C']}$. We define a sequence of sets $(F_i)_{i \in \nat}$ included in $C$ as follows:
\begin{itemize}
\item $F_0=\emptyset$
\item for all $i \in \nat$, $F_{i+1}=F_i \cup h(F_i)$
\end{itemize}
Using Lemma \ref{lem-phi-upcl} and the fact that the union of upward closed set is an upward closed set, we can prove that $F_i$ is upward closed for all $i \in \nat$. Furthermore, we have that $F_i \subseteq F_{i+1}$ for all $i \in \nat$. Since $(F_i)_{i \in \nat}$ is an increasing sequence of upward closed sets included in $C$ and since $(C,\leq)$ is a wqo, from the theory of wqo, we know that there exists $N \in \nat$ such that for all $i \geq N$, $F_i=F_{i+1}$. 

We consider also the function $g : 2^C \rightarrow 2^C$, which associates to each set of configurations $C' \subseteq C$ the set $g(C')=\Interp{\mu Y. \big ( q_F \vee \InvPre{X,Y}\big )}_{\env_0[X:=C']}$. By definition, $E$ is the greatest fixpoint of the function $g$, hence $E=\Interp{\mu Y. \big ( q_F \vee \InvPre{X,Y}\big )}_{\env_0[X:=E]}$. $E$ is then also the least fixpoint of the function $h$ and consequently, by definition of the sequence $(F_i)_{i \in \nat}$, we know that $E=\bigcup_{i \in \nat} F_i$. This allows us to deduce that $E=F_N$. We point out that $F_1=\Interp{q_F}$.

 We now define a strategy $\sigma$ for Player 1 which will be memoryless on $E$ (i.e. the strategy will only depend on the current configuration) and is a function $\sigma: E \cap C_1 \mapsto C$ (note that since with this strategy all the plays starting from $E$ will stay in $E$, we do not need to define it precisely on the entire set $C_1$ and we assume that, on the set $C_1 \setminus E$, the strategy can choose any one of the possible successor configurations). Let $c \in E$. Then we denote by $j \in [2..N]$ the smallest index such that $c \in F_{j}$ and $c \notin F_{j-1}$, we then define $\sigma(c)$ as being the configuration $c'$ such that $c \trans c'$ and $c' \in F_{j-1}$ (by definition of the sequence ($(F_i)_{i \in \nat}$, such a $c'$ necessarily exists). If $c$ belongs to $F_1$, then the strategy chooses any one of the possible successors. 
 
Let $c_0 \in E$. We  show  that there exists a play $c_0 \cdot c_1 \cdot c_2 \cdot \ldots$ in $\Outcome{M_S,c_0,\sigma}$ that satisfies the three properties of the lemma. In fact, we consider the play such that in all configurations $c \in E \cap C_P$ and if $j \in [2..N]$ is the smallest index such that $c \in F_{j}$ and $c \notin F_{j-1}$, Player P chooses $c'$ such that $c \trans c'$ and $c' \in F_{j-1}$. Hence in this play it is obvious that in less than $N$ steps, the play will reach a configuration in $F_1=\Interp{q_F}$ and the points 2. and 3. also hold for this play by definition of the sets $(F_i)_{i \in \nat}$.\qed
\end{proof}

\subsection{Proof of Lemma \ref{lem-probab-formula}}

We now prove that the set $E$ is included in $V^1_{AS}$. Techniques we used here are quite similar than the one presented in \cite{abdulla-decisive-07} to prove decidability of the sure reachability problems in probabilistic VASS (without nondeterminism).

\begin{lemma}\label{lem-reach-Ew0}
$E \subseteq V^1_{AS}$
\end{lemma}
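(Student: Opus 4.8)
The plan is to show that the strategy $\sigma$ furnished by Lemma \ref{lemma-N-strat} (possibly adapted into a history-dependent version that keeps "restarting" the memoryless scheme) witnesses almost-sure reachability of $q_F$ from every configuration in $E$. The key observation is the uniform bound $N$: from any $c \in E$, there is \emph{some} outcome respecting $\sigma$ that reaches $\Interp{q_F}$ within $N$ steps while staying inside $E$, and moreover every configuration in $E$ that Player P can move to is again in $E$ (points (2) and (3) of Lemma \ref{lemma-N-strat}). Since $E$ is upward-closed and $(C,\le)$ is a wqo, and since all transition weights $\tau(t)$ are positive, there is a \emph{uniform lower bound} $p_{\min} > 0$ on the probability of any single "good" probabilistic step along this scheme — here one uses that the branching degree out of each control state is finite, so $p(c)(c') = (\sum_{t:t(c)=c'}\tau(t))/(\sum_{t\in\Enabled{c}}\tau(t))$ is bounded below by $1/(\sum_{t\in T}\tau(t))$ over all configurations sharing a given control state, and there are only finitely many control states. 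Hence from any $c\in E$, following $\sigma$, the probability of reaching $\Interp{q_F}$ within the next $N$ steps is at least $p_{\min}^N =: \delta > 0$.

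First I would make the strategy explicit: take the memoryless $\sigma$ on $E$ from Lemma \ref{lemma-N-strat}, and argue that under $\sigma$, starting from any $c_0\in E$, every configuration visited stays in $E$ — this is immediate from point (3) for Player P moves and from the fact (used in Lemma \ref{lemma-N-strat}'s construction, via $\InvPre{X,Y}$ with $X:=E$ and $E=h$'s fixpoint using $\always X$) that from a Player 1 configuration in $E$ the chosen successor is in $E$, and from a Player P configuration in $E$ \emph{all} successors are in $E$. Second, I would establish the one-shot bound: for every $c\in E$, $\Prob(M_S,c,\sigma,\text{"reach }\Interp{q_F}\text{ in}\le N\text{ steps"})\ge\delta$. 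This follows by tracing the specific path of Lemma \ref{lemma-N-strat}: at each Player P configuration along it the "correct" successor (decreasing the $F_i$-index) has probability $\ge p_{\min}$, and there are at most $N$ such steps. Third, I would conclude by the standard argument: partition the run into consecutive blocks of $N$ steps; as long as $q_F$ has not yet been seen, the current configuration is still in $E$, so each block independently has probability $\ge\delta$ of hitting $\Interp{q_F}$; hence the probability of never reaching $\Interp{q_F}$ is $\le \lim_{k\to\infty}(1-\delta)^k = 0$. Therefore $\Prob(M_S,c_0,\sigma,\Interp{\eventually q_F})=1$, i.e. $c_0\in V^1_{AS}$.

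The main obstacle I expect is pinning down the uniform lower bound $\delta$ rigorously: one must be careful that $p_{\min}$ does not degrade with the counter values. This is exactly where single-sidedness of the 1-VASS-MDP is essential — from a Player P control state the transitions do not touch the counters, so the set $\Enabled{c}$ and the weights involved in $p(c)(\cdot)$ depend only on $state(c)$, not on $count(c)$; thus $p_{\min}$ can be taken as $\min$ over the finitely many control states in $Q_P$, which is positive. A secondary subtlety is the Markov/measurability bookkeeping for the block decomposition on an infinite-state MDP (that "reach within $N$ steps" events in successive blocks can be lower-bounded conditionally regardless of history, because the bound $\delta$ holds from \emph{every} configuration of $E$); this is routine once the uniform $\delta$ is in hand, using that $\sigma$ keeps the play in $E$ forever until $q_F$ is hit. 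I would also remark, as the paper does, that this is the analogue of the argument in \cite{abdulla-decisive-07} for probabilistic VASS without nondeterminism, with the $\InvPre$ operator handling the interleaved Player 1 choices.
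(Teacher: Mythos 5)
Your proposal is correct and follows essentially the same route as the paper: use the strategy $\sigma$ and the uniform bound $N$ from Lemma~\ref{lemma-N-strat}, extract a uniform positive lower bound on each probabilistic step (the paper takes $\beta=\min_q \mathit{Min}_q/(L_q\cdot W_q)$ over the finitely many control states, which already works without invoking single-sidedness since $W_q$ bounds the denominator from above), and conclude that the probability of never hitting $q_F$ is at most $(1-\beta^N)^k$ for every $k$, hence $0$. Your block decomposition just makes the paper's final limiting step slightly more explicit.
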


\begin{proof}
We consider the integer $N \in \nat$ and the strategy $\sigma$ of Player 1 given by Lemma \ref{lemma-N-strat}. For each $q \in Q_P$, let $\mathtt{Out}(q)=\set{(q,\vect{z},q') \in T \mbox{ for some } q' \in Q \mbox{ and } \vect{z} \in \rel^n}$ be the set of transitions going out of $q$. We also denote by $L_q$ the cardinality of $\mathtt{Out}(q)$, by $W_q$ the sums $\Sigma_{t \in  \mathtt{Out}(q)} \tau(t)$ and finally $\mathit{Min}_q$ is the minimal element of $\set{\tau(t) \mid t \in \mathtt{Out}(q)}$. By definition of VASS-MDP, we know that for a configuration $\tuple{q,\vect{v}} \in C_P$, for any configuration $c' \in C$ such that $\tuple{q,\vect{v}} \trans c'$, we have $p(q,\vect{v})(c') \geq \frac{\mathit{Min}_q}{L_q \cdot W_q}$. We denote by $\beta$ the minimal element of the set $\set{\frac{\mathit{Min}_q}{L_q \cdot W_q} \mid q \in Q_P}$. Then for any configuration $c \in C_P$ and $c' \in C$ such that $c \trans c'$, we have $p(c)(c') \geq \beta$. Note that necessarily $\beta > 0$. Let $c_0 \in E \setminus \Interp{q_F}$ and let $c_0 \cdot c_1 \cdot c_2 \cdots $ be a play in $\Outcome{M_S,c_0,\sigma}$ such that for all $i \in \nat$, $c_i \notin \Interp{q_F}$. From Lemma \ref{lemma-N-strat}, we know that there exists $N \in \nat$ such that, for all $i \in \nat$, $c_i \in E$, we have that $\Prob(M_S,c_i,\sigma,\Interp{\eventually q_F}) \geq \beta^N$. This allows us to deduce that the probability of never visiting $q_F$ from $c_0$ following $\sigma$ is smaller than $(1-\beta^N)^\infty$ and since $\beta>0$, we deduce that this probability is equal to $0$. Consequently $\Prob(M_S,c_0,\sigma,\Interp{\eventually q_F}) =1$ and $c_0 \in V^1_{AS}$. \hfill $\Box$ \end{proof}

We now prove the opposite direction. For this we use a technique similar to the ones presented in the proof of Lemma 5.29 in \cite{bertrand-phd-06}.
\begin{lemma}\label{lem-reach-w0E}
$V^1_{AS} \subseteq E$
\end{lemma}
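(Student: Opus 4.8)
The plan is to prove the contrapositive: if $c \notin E$, then for every strategy $\sigma$ of Player~1 there is a positive probability of never reaching $q_F$, so that $\Prob(M_S,c,\sigma,\Interp{\eventually q_F}) < 1$ and hence $c \notin V^1_{AS}$. First I would exploit the fixpoint characterization: $E = \Interp{\nu X.\mu Y.(q_F \vee \InvPre{X,Y})}$, so the complement $\overline{E}$ is the \emph{least} fixpoint of the dual operator, namely $\overline{E} = \Interp{\mu X.\nu Y.(\neg q_F \wedge \overline{\InvPre}(X,Y))}$, where $\overline{\InvPre}(X,Y)$ is the dual of $\InvPre{X,Y}$ — intuitively, ``Player~P can force a move into $X$, or Player~P can keep the play inside $Y$ while Player~1 is unable to escape $Y$ in one step''. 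Writing $\overline{E}$ as the increasing union $\bigcup_{i} G_i$ of the approximants of this least fixpoint (indexed now by the outer $\mu X$), I would assign to each configuration $c \in \overline{E}$ its \emph{rank} $i$, the least index with $c \in G_i$.

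Next I would describe a (history-dependent) counter-strategy for Player~P that, against \emph{any} Player-1 strategy $\sigma$, maintains the invariant ``the current configuration lies in $\overline{E}$'' with positive probability, and moreover strictly decreases the rank infinitely often along at least one branch, or stays forever at rank-$0$-type configurations, which are by construction not $q_F$ and have all successors avoiding $q_F$ in the relevant sense. Concretely: from a Player-P configuration in $\overline{E}$, by definition of $\overline{\InvPre}$ Player~P has a move staying in $\overline{E}$ (and, when the rank allows, strictly decreasing it); from a Player-1 configuration in $\overline{E}$, \emph{every} successor chosen by $\sigma$ — or at least the relevant branch — remains in $\overline{E}$, again by the dual Pre structure. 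Because each individual probabilistic branching has probability bounded below by the constant $\beta > 0$ defined exactly as in the proof of Lemma~\ref{lem-reach-Ew0}, and because — crucially — the rank is a natural number that can only decrease finitely often, one shows there is a uniform bound on the number of Player-P choices needed, so the event ``stay in $\overline{E}$ forever and never see $q_F$'' has probability at least $\beta^{M}$ for some finite $M$ depending only on the structure. This positive probability witnesses $\Prob(M_S,c,\sigma,\Interp{\eventually q_F}) < 1$.

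The main obstacle I anticipate is making the rank argument precise in a way that genuinely bounds the number of steps: the naive reading of the least fixpoint only gives that \emph{from each configuration} the rank can be decreased, but the ordinal/index could a priori be unbounded over the whole (infinite) configuration space. This is resolved exactly as in Lemma~\ref{lemma-N-strat} and Lemma~\ref{lem-probab-formula}: by Lemma~\ref{lem-phi-upcl}, all the approximants $G_i$ are upward-closed sets of configurations of the $(Q_1,Q_P)$-single-sided VASS $\tuple{Q,T}$, and since $(C,\le)$ is a wqo, the increasing chain $G_0 \subseteq G_1 \subseteq \cdots$ stabilizes at some finite index $N$; hence $\overline{E} = G_N$ and every rank is at most $N$. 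This gives the uniform bound $M$ (of order $N$) on the number of decisive probabilistic steps, and the argument closes just as in the proof of Lemma~\ref{lem-reach-Ew0}, with $(1-\beta^{N})$ replaced by its complement. Combining Lemma~\ref{lem-reach-Ew0} and Lemma~\ref{lem-reach-w0E} then yields $V^1_{AS} = E$, proving Lemma~\ref{lem-probab-formula}.
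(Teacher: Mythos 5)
There is a genuine gap at the step that carries the whole argument: the claim that the approximants $G_0\subseteq G_1\subseteq\cdots$ of the dual least fixpoint computing $\overline{E}$ are upward-closed and hence stabilize at a finite index $N$ by the wqo property. Lemma~\ref{lem-phi-upcl} does not apply here: negating $\nu X.\mu Y.(q_F\vee\InvPre{X,Y})$ produces a formula in which the $\always$ operator is no longer guarded by $Q_P$ for the $(Q_1,Q_P)$-single-sided VASS (the guards flip to the side that \emph{does} change counters), so the dual formula is not in $\guardmucalcul$ for $\tuple{Q,T}$. Indeed, since $E$ is upward-closed, $\overline{E}$ and its approximants are downward-closed, not upward-closed, and increasing chains of downward-closed sets in a wqo need not stabilize (in dimension $1$, take $G_i=\set{\tuple{q,v}\mid v\le i}$). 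The wqo stabilization argument of Lemma~\ref{lemma-N-strat} works only for increasing chains of \emph{upward}-closed sets, which is exactly why it is available for $E$ but not for $\overline{E}$. Without the uniform rank bound $N$, your probabilistic estimate degenerates: each Player-P step only guarantees staying in $\overline{E}$ with probability $\ge\beta$, and over an unbounded number of decisive steps the lower bound $\beta^{M}$ disappears, so you cannot conclude that ``never reach $q_F$'' has positive probability. (Your structural observations about $\overline{E}$ -- Player-1 successors of non-target configurations outside $E$ stay outside $E$, and Player-P configurations outside $E$ have some successor outside $E$ -- are correct, but they alone give only $\beta^{\infty}=0$.) This is not a presentational issue: the bounded-rank phenomenon is precisely what can fail in general infinite-state MDPs, and for the complement set it is not established by anything in the paper.

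The paper avoids this entirely and proves the inclusion directly rather than by contraposition. Given $c_0\in V^1_{AS}$ with an almost-surely winning strategy $\sigma$, it takes $D$ to be the set of configurations visited by plays respecting $\sigma$ before the first visit to $q_F$, and shows $D\subseteq g(D)$, where $g(C')=\Interp{\mu Y.(q_F\vee\InvPre{X,Y})}_{\env_0[X:=C']}$: every $c\in D$ still has almost-sure value $1$, hence admits a finite play to $q_F$ that stays in $D$ and whose probabilistic branchings remain in $D$, which places $c$ in some finite approximant $F^D_k$ of $g(D)$. Knaster--Tarski then gives $D\subseteq E$ since $E$ is the greatest fixpoint of $g$, and in particular $c_0\in E$. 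Note that this direction needs no upward-closedness and no uniform step bound at all; those ingredients are only needed for the converse inclusion (Lemma~\ref{lem-reach-Ew0}). If you want to keep a contrapositive-style argument, you would have to find a substitute for the false stabilization claim; as written, the proof does not go through.
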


\begin{proof}
Let $c_0 \in V^1_{AS}$. So there exists a strategy $\sigma$ for Player 1 such that $\Prob(M_S,\linebreak[0]c_0,\sigma,\Interp{\eventually q_F})=1$. Let $D$ be the following set of configurations: $\set{c \in C \mid \exists c_0 \cdot c_1 \cdots \in \Outcome{M_S,c_0,\sigma} \mbox{ s.t. } \exists i \in \nat \mbox{ for which } c_i = c \mbox{ and } \forall 0 \leq j < i. c_j \notin \Interp{q_F}}$. Clearly $c_0$ belongs to $D$. We will show that $D \subseteq E$. 

We consider the two functions $g,h : 2^C \rightarrow 2^C$ such that for each set of configurations $C' \subseteq C$, we have $g(C')=\Interp{\mu Y. \big ( q_F \vee \InvPre{X,Y}\big )}_{\env_0[X:=C']}$ and $h(C')=\Interp{q_F \vee \InvPre{X,Y}\big}_{\env_0[X:=D,Y:=C']}$. We define the following sequence of configurations $(F^D_i)_{i \in \nat}$ such that:
\begin{itemize}
\item $F^D_0=\emptyset$,
\item for all $i \in \nat$, $F^D_{i+1}=F^D_i \cup h(F^D_i)$
\end{itemize}

Let $c \in D$, since $ \Prob(M_S,c_0,\sigma,\Interp{\eventually q_F})=1$, we know that there exists a strategy $\sigma'$ such that $\Prob(M_S,c,\sigma',\Interp{\eventually q_F})=1$ (otherwise we would have $ \Prob(M_S,c_0,\sigma,\Interp{\eventually q_F})\linebreak[0]<1$). Hence there is a play in $c \cdot c'_1 \cdot c'_2 \cdots \in \Outcome{M_S,c,\sigma'}$ for which there exists $i \in \nat$ satisfying $c'_i \in  \Interp{q_F}$ and  $c'_j \notin \Interp{q_F}$ for all $0 \leq j <i$. Note also that by definition of $D$, for all $0 \leq j \leq i$, we have that $c'_j$ belongs to $D$ and if $c'_j \in C_P$ then for all $c'' \in C$ such that $c'_j \trans c''$, we have $c'' \in D$. Hence there exists $0 \leq k \leq i+1$, such that $c \in F^D_k$. Since $g(D)=\Interp{\mu Y. \big ( q_F \vee \InvPre{X,Y}\big )}_{\env_0[X:=D]}$, we have $g(D)=\bigcup_{i \in \nat} F^D_i$. Consequently, $c \in g(D)$ and we have $D \subseteq g(D)$.

We know that $E$ is the greatest fixpoint of $g$, then by Knaster-Tarski Theorem we know that $E=\bigcup \set{C' \in C \mid C' \subseteq g(C')}$, and hence $D \subseteq E$. Since $c_0 \in D$, we deduce that $c_0 \in E$. \hfill $\Box$
\end{proof}

\subsection{Proof of Lemma \ref{lem-probab-formula-repeat}}
We denote by $F$ the set $\Interp{\nu X. \InvPre{X,\mu Y.( q_F \vee \InvPre{X,Y}) \big}}_{\env_0}$. Since $\nu X. \InvPre{X,\mu Y.( q_F \vee \InvPre{X,Y})}$ is a formula of $L^{sv}_\mu$ interpreted over the $(Q_1,Q_P)$-single-sided VASS $\tuple{Q,T}$, from Lemma \ref{lem-phi-upcl}, we know that $F$ is an upward-closed set. 

We first prove that $W^1_{AS}$ is included in $F$.

\begin{lemma}
$W^1_{AS} \subseteq F$
\end{lemma}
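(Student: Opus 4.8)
The strategy is to show that from any configuration $c_0 \in W^1_{AS}$, Player 1's almost-sure B\"uchi strategy $\sigma$ witnesses membership of $c_0$ in the greatest fixpoint $F$. Recall $F = \Interp{\nu X. \InvPre{X, \mu Y.(q_F \vee \InvPre{X,Y})}}_{\env_0}$, so by the Knaster--Tarski Theorem it suffices to exhibit a post-fixpoint: a set $D \subseteq C$ with $c_0 \in D$ and $D \subseteq \Interp{\InvPre{X, \mu Y.(q_F \vee \InvPre{X,Y})}}_{\env_0[X:=D]}$. The natural candidate for $D$ is the set of configurations reachable from $c_0$ along some play respecting $\sigma$ (or, to be safe, the set of configurations $c$ such that $c$ appears on some $\sigma$-play from $c_0$ and Player 1 still has a strategy from $c$ achieving $\Interp{\always\eventually q_F}$ almost surely; as in the proof of Lemma~\ref{lem-reach-w0E}, membership in $W^1_{AS}$ propagates to all such $c$, since otherwise the original play would fail with positive probability).

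First I would fix $c \in D$ and argue that $c \in \Interp{\mu Y.(q_F \vee \InvPre{X,Y})}_{\env_0[X:=D]}$, i.e. that $D$ itself is a post-fixpoint for the \emph{inner} least-fixpoint operator when $X$ is interpreted as $D$. This is exactly the content already established in the proof of Lemma~\ref{lem-reach-w0E} (reused now for the B\"uchi strategy): from $c$, since $\Prob(M_S,c,\sigma,\Interp{\always\eventually q_F})=1$, there is a $\sigma$-play from $c$ that reaches $q_F$ while staying inside $D$ and, at probabilistic configurations, \emph{all} successors stay in $D$; iterating the $F^D_i$ construction of that lemma gives $c \in g(D) = \Interp{\mu Y.(q_F \vee \InvPre{X,Y})}_{\env_0[X:=D]}$. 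Thus $D \subseteq \Interp{\mu Y.(q_F \vee \InvPre{X,Y})}_{\env_0[X:=D]}$; denote this inner set $G \supseteq D$.

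Next I would show $D \subseteq \Interp{\InvPre{X,Y}}_{\env_0[X:=D, Y:=G]}$, unwinding $\InvPre{X,Y} = (Q_1 \wedge \eventually(X\wedge Y)) \vee (\eventually Y \wedge Q_P \wedge \always X)$. Take $c \in D$. If $c \in C_1$: the almost-sure B\"uchi strategy $\sigma$ prescribes a successor $c' = \sigma(\dots c)$ with $c' \in D$ (by definition of $D$) and $c' \in G$ (since $D \subseteq G$), so $c \in \Interp{Q_1 \wedge \eventually(X\wedge Y)}$. If $c \in C_P$: by single-sidedness every successor $c'$ satisfies $c' \in D$ (all probabilistic successors lie in $D$ by construction), hence $c \in \Interp{Q_P \wedge \always X}$; and since $D \neq \emptyset$ from $c$ there is at least one successor $c' \in D \subseteq G$, giving $c \in \Interp{\eventually Y}$. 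In both cases $c \in \Interp{\InvPre{X,Y}}_{\env_0[X:=D,Y:=G]}$, hence $c \in \Interp{\InvPre{X, \mu Y.(q_F \vee \InvPre{X,Y})}}_{\env_0[X:=D]}$ by monotonicity. Therefore $D$ is a post-fixpoint of $X \mapsto \Interp{\InvPre{X,\mu Y.(q_F\vee\InvPre{X,Y})}}_{\env_0[X:=\cdot]}$, so $D \subseteq F$ and in particular $c_0 \in F$.

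The main obstacle I anticipate is the bookkeeping around the definition of $D$ and the claim that $W^1_{AS}$ (not just $V^1_{AS}$) is closed under $\sigma$-reachable configurations while simultaneously guaranteeing that \emph{all} probabilistic successors stay in $D$ --- this is what makes the $Q_P \wedge \always X$ disjunct work, and it relies on the fact that from a probabilistic configuration the almost-sure B\"uchi objective can only be met if it is met from every successor with positive probability. Getting this invariant stated correctly (and reconciling it with the inner-fixpoint argument borrowed from Lemma~\ref{lem-reach-w0E}, which only speaks of reaching $q_F$ once) is the delicate point; the converse inclusion $F \subseteq W^1_{AS}$ (presumably the next lemma) will instead need the wqo/bounded-number-of-steps argument of Lemma~\ref{lemma-N-strat} applied to the inner fixpoint to turn surely-reaching-$q_F$-within-$N$-steps into almost-surely-reaching, then iterated for the outer $\nu$.
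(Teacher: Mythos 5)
Your proposal is correct and follows essentially the same route as the paper: both take as post-fixpoint witness the set of configurations reachable along plays respecting the almost-sure B\"uchi strategy, reuse the inner least-fixpoint argument from the proof of Lemma~\ref{lem-reach-w0E} to place that set inside $\Interp{\mu Y.(q_F \vee \InvPre{X,Y})}$, verify the outer $\InvPre{}$ clause by case analysis on $C_1$ versus $C_P$ (using that all probabilistic successors of reachable configurations are again reachable), and conclude by Knaster--Tarski. Your reorganization via the intermediate set $G$ is only a cosmetic variant of the paper's direct argument about a play $c\cdot c'_1\cdots c'_m$ reaching $q_F$ after at least one step.
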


\begin{proof}
Let $c_0 \in W^1_{AS}$. So there exists a strategy $\sigma$ for Player 1 such that $\Prob(M_S,\linebreak[0]c_0,\sigma,\Interp{\always \eventually q_F})=1$. Let $T$ be the following set of configurations $\set{c \in C \mid \exists c_0 \cdot c_1 \cdots \in \Outcome{M_S,c_0,\sigma} \mbox{ s.t. } \exists i \in \nat \mbox{ for which } c_i = c}$. Necessarily, we have $T \cap \Interp{q_F} \neq \emptyset$ and $c_0 \in T$. We consider the function $g : 2^C \rightarrow 2^C$ such that for each set of configurations $C' \subseteq C$, we have $g(C')=\Interp{\InvPre{X,\mu Y.( q_F \vee \InvPre{X,Y})}}_{\env_0[X:=C']}$. We will prove that $T \subseteq g(T)$.

Let $c \in T$. We have necessarily that there exists a strategy $\sigma'$ for Player 1 such that   $\Prob(M_S,c,\sigma',\Interp{\always \eventually q_F}) > 0$, otherwise we would have that $\Prob(M_S,c_0,\sigma,\Interp{\always \eventually q_F})< 1$. Hence there exists a play in $M_S$ respecting $\sigma$ of the form:
$$
c_0\cdot c_1 \cdots c_k \cdots c \cdot c'_1 \cdots c'_m
$$
with $m \geq 1$ (that is $c \neq c'_m$) and $c'_m \in \Interp{q_F}$. By definition of $T$, for each $1 \leq i \leq m$, if $c'_i \in C_P$, for all configurations $c'' \in C$ such that $c'_i \trans c''$ we have,  $c'' \in T$. Using a similar reasoning to that done in Lemma \ref{lem-reach-w0E}, we deduce that for all $1 \leq i \leq m$, we have $c'_i \in \Interp{\mu Y.( q_F \vee \InvPre{X,Y})}_{\env_0[X:=T]}$. Furthermore if $c \in C_P$, for all configurations $c'' \in C$ such that $c \trans c''$ we have also $c'' \in T$, hence we have that $c \in g(T)$ (using the definition of $\InvPre{X,Y}$). This implies $T \subseteq g(T)$.

Since $F$ is the greatest fixpoint of $g$, then by Knaster-Tarski Theorem we know that $F=\bigcup \set{C' \in C \mid C' \subseteq g(C')}$, so $T \subseteq F$. Since $c_0 \in T$, we deduce that $c_0 \in F$. \hfill $\Box$
\end{proof}

We now prove the left to right inclusion.

\begin{lemma}
$F \subseteq W^1_{AS}$
\end{lemma}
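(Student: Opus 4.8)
The plan is to prove the missing inclusion $F \subseteq W^1_{AS}$ by exhibiting, for every $c_0 \in F$, a single strategy $\sigma$ of Player~1 with $\Prob(M_S,c_0,\sigma,\Interp{\always\eventually q_F})=1$; together with the previous lemma this yields $W^1_{AS}=F$, hence Lemma~\ref{lem-probab-formula-repeat}.

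First I would unfold the outer greatest fixpoint. Writing $R:=\Interp{\mu Y.(q_F \vee \InvPre{X,Y})}_{\env_0[X:=F]}$, the fact that $F$ is a fixpoint of the outer operator gives $F=\Interp{\InvPre{X,Y}}_{\env_0[X:=F,Y:=R]}$. Expanding the definition of $\InvPre{X,Y}$, this says: every $c\in F\cap C_1$ has a successor in $F\cap R$, and every $c\in F\cap C_P$ has \emph{all} its successors in $F$ and at least one in $R$. Two consequences follow. First, since $R$ is the least fixpoint of $Y\mapsto \Interp{q_F}\cup\Interp{\InvPre{X,Y}}_{\env_0[X:=F,Y:=\cdot]}$ and $\InvPre{X,\cdot}$ is monotone in $Y$, one gets $F\subseteq R$ and $R\setminus\Interp{q_F}\subseteq F$, so $F\subseteq R\subseteq F\cup\Interp{q_F}$; in particular the only $q_F$-configurations we shall ever meet are those lying in $F$. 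Second, $F$ is invariant: Player~1 can keep the play forever inside $F$ (at $C_1$-configurations she moves into $F\cap R$; at $C_P$-configurations Player~P cannot leave $F$). By Lemma~\ref{lem-phi-upcl}, $F$ and $R$ are upward closed.

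Next I would build a bounded attractor to $q_F$ inside $F$, reusing the technique of the proof of Lemma~\ref{lemma-N-strat} but with the variable $X$ interpreted as the \emph{fixed} set $F$. Iterating $R_0=\emptyset$ and $R_{j+1}=\Interp{q_F}\cup\Interp{\InvPre{X,Y}}_{\env_0[X:=F,Y:=R_j]}$ yields an increasing chain of upward-closed sets with union $R$; since $(C,\le)$ is a wqo this chain stabilises at some $R_N=R$, so every $c\in F$ has a rank $\le N$ (the least $j$ with $c\in R_j$). Define the memoryless strategy $\sigma$ which, at a configuration $c\in F$ of rank $j\ge 2$, moves to a successor in $F\cap R_{j-1}$ when $c\in C_1$, and, when $c\in C_P$, exploits that the distinguished successor in $R_{j-1}$ occurs with probability $\ge\beta$, where $\beta>0$ is the uniform lower bound on transition probabilities used in the proof of Lemma~\ref{lem-reach-Ew0}. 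Along any $\sigma$-play the rank strictly decreases at every $C_1$-step; at a $C_P$-step it decreases if Player~P picks the distinguished successor, and otherwise the play merely stays inside $F$. Hence, starting from any configuration of $F$, within at most $N$ steps, and with probability at least $\beta^{N}$ irrespective of that configuration, the play visits a $q_F$-configuration, while never leaving $F$.

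Finally I would run the standard blockwise probabilistic argument. Fix $\sigma$ as above; by invariance every $\sigma$-play from $c_0$ stays in $F$ forever, so it visits only $q_F$-configurations belonging to $F$. Cutting any such play into consecutive blocks of $N$ steps, the previous paragraph shows that, conditioned on the history up to the start of a block, the probability that a $q_F$-configuration is visited during that block is at least $\beta^{N}$. Consequently, for every $k$ the probability of visiting $q_F$ only finitely often is bounded by $\prod_{j\ge k}(1-\beta^{N})=0$ (the same $(1-\beta^{N})^{\infty}=0$ computation already used for Lemma~\ref{lem-reach-Ew0}, now applied block by block), whence $\Prob(M_S,c_0,\sigma,\Interp{\always\eventually q_F})=1$ and $c_0\in W^1_{AS}$. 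The main obstacle is precisely the interaction in the attractor step between ``reach $q_F$ within $N$ steps with probability $\ge\beta^{N}$'' and ``never leave $F$'': one must use the unfolding $F=\Interp{\InvPre{X,Y}}_{\env_0[X:=F,Y:=R]}$ to see that a deviation of Player~P from the distinguished successor still keeps the play in $F$, so that after a failed block Player~1 restarts from a configuration again in $F$ with an unchanged success probability for the next block; the uniformity of the bound $N$ is the wqo argument, exactly as in Lemma~\ref{lemma-N-strat}.
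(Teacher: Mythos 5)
Your proposal is correct and follows essentially the same route as the paper: unfolding the outer fixpoint to get $F=\Interp{\InvPre{X,Y}}_{\env_0[X:=F,Y:=F_N]}$, building the wqo-stabilized attractor sequence $F_0\subseteq F_1\subseteq\cdots\subseteq F_N$, defining the memoryless rank-decreasing strategy that keeps the play in $F$, and concluding via the uniform bound $\beta^N$. The only cosmetic difference is the last bookkeeping step, where you cut plays into blocks of $N$ steps while the paper partitions plays by the exact number of visits to $q_F$; both rest on the same fact that $\Prob(M_S,d,\sigma,\Interp{\eventually q_F})=1$ uniformly over $d\in F$.
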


\begin{proof}
We consider the function $h : 2^C \rightarrow 2^C$ which associates to each set of configurations $C' \subseteq C$ the set $h(C')=\Interp{q_F \vee \InvPre{X,Y}\big}_{\env_0[X:=F,Y:=C']}$. We define a sequence of sets $(F_i)_{i \in \nat}$ included in $C$ as follows:
\begin{itemize}
\item $F_0=\emptyset$
\item for all $i \in \nat$, $F_{i+1}=F_i \cup h(F_i)$
\end{itemize}
As for the proof of Lemma \ref{lemma-N-strat}, we know that there exists $N \in \nat$ such that for all $i \geq N$, $F_i=F_{i+1}$ and that $\Interp{\mu Y. \big ( q_F \vee \InvPre{X,Y}\big )}_{\env_0[X:=F]}=F_N$. Since $F=\Interp{\nu X. \InvPre{X,\mu Y.( q_F \vee \InvPre{X,Y})}}_{\env_0}$, using again fixpoint theory, we know that $F=\Interp{\InvPre{X,\mu Y.( q_F \vee \InvPre{X,Y})}}_{\env_0[X:=F]}$ and consequently $F=\Interp{\InvPre{X,Y}}_{\env_0[X:=F,Y:=F_N]}$.

We now define a strategy $\sigma$ for player $0$ which will be memoryless on $F$ (i.e. the strategy will only depend of the current configuration). The strategy $\sigma$ will be described as a function $\sigma: F \cap C_1 \mapsto C$ (note that since with this strategy all the plays starting from $F$ will stay in $F$, we do not need to define it precisely on the entire set $C_1$; we assume that on the set $C_1 \setminus F$, the strategy can choose any one of the possible successor states). Let $c \in C_1 \setminus F$ and consider the following two cases. If $c \in F \setminus F_N$ or $c \in F_1$, we define $\sigma(c)$ as being a configuration $c' \in F_N \cap F$ such that $c \trans c'$. By definition of $F$ such a configuration necessarily exists. If $c \in F_N \setminus F_1$ we denote by $j \in [2..N]$ the smallest index such that $c \in F_{j}$ and $c \notin F_{j-1}$. We then define $\sigma(c)$ as being the configuration $c'$ such that $c \trans c'$ and $c' \in F_{j-1} \cap F$ (by definition of $F$ and of the sequence $(F_i)_{i \in \nat}$, such a $c'$ necessarily exists).

By construction of the strategy $\sigma$ and using a similar reasoning to the one performed in the proof of Lemma \ref{lem-reach-Ew0}, we can prove that for all $c \in F$, we have $\Prob(M_S,c,\sigma,\Interp{\eventually q_F}) =1$. We would like now to prove that for each $c \in F$, $\Prob(M_S,c,\sigma,\Interp{\always \eventually q_F}) =1$.

We denote by $\Interp{\always \neg q_F}$ the set of infinite plays $c_0 \cdot c_1 \cdots$ of $M_S$ such that for all $i \in \nat$, $c_i \notin \Interp{q_F}$. Then for each $c \in F$, we have $\Prob(M_S,c,\sigma,\Interp{\always \neg q_F}) =0$. We also use the notation $\Interp{\eventually \always \neg q_F}$ the set of infinite play $c_0 \cdot c_1 \cdots$ of $M_S$ for which there exists $i \in \nat$, such that $c_j \notin \Interp{q_F}$ for all $j \geq i$. Then for each $c \in F$, we have $\Prob(M_S,c,\sigma,\Interp{\always \eventually q_F}) =1- \Prob(M_S,c,\sigma,\Interp{\eventually \always \neg q_F})$. We will now prove that for each $c \in F$, $\Prob(M_S,c,\sigma,\Interp{\eventually \always \neg q_F})=0$.

For $c \in C$, $i \in \nat$ and $d \in C \setminus \Interp{q_F}$, let $\Pi_{c-i-d}$ be the set of finite plays of $M_S$ of the form $c_0 \cdot c_1 \cdots c_k$ such that:
\begin{itemize}
\item $k>i$;
\item $c_0=c$, $c_k=d$ and $c_{k-1} \in \Interp{q_F}$;
\item the set $G=\set{j \in \set{0,\ldots,k} \mid c_j \in \Interp{q_F}}$ has $i$ elements.
\end{itemize}
This represents the set of finite plays starting at $c$ ending at configuration $d$ and that passes exactly through $i$ configurations in $\Interp{q_F}$.


For $c \in F$ and $i \in \nat$ and $d \in F \setminus \Interp{q_F}$, we define $\Delta_{c-i-d}$ the set of infinite plays of the form $\rho \cdot \rho'$ where $\rho$ is a finite play in $\Pi_{c-i-d}$ and $\rho$ is an infinite play in $\Interp{\always \neg q_F}$. Let $\Delta_{c-i}=\bigcup_{d \notin \Interp{q_F}} \Delta_{c-i-d}$. Intuitively $\Delta_{c-i}$ is the set of infinite plays starting from $c$ which revisits $\Interp{q_F}$ exactly $i$ times. For $c \in F$, it is straightforward to check that:
\begin{description}
\item[(1)] $\forall i \in \nat$, $\forall d_1,d_2 \in F \setminus \Interp{q_F}$ such that $d_1 \neq d_2$, $\Delta_{c-i-d_1} \cap \Delta_{c-i-d_2}=\emptyset$
\item[(2)] $\forall i,j \in \nat$ such that $i \neq j$, $\Delta_{c-i} \cap \Delta_{c-j}=\emptyset$
\item[(3)] $\forall i \in \nat$, $\forall d \in F \setminus \Interp{q_F}$, $\Prob(M_S,c,\sigma,\Delta_{c-i-d})=P(\Pi_{c-i-d}) \Prob(M_S,d,\sigma,\Interp{\always \neg q_F})$
\end{description}

Hence, for all $c \in F$ and $i \in \nat$, we have:
$$
\begin{array}{rcl}
\Prob(M_S,c,\sigma,\Delta_{c-i})& = &\sum_{d \in F \setminus \Interp{q_F}} 
\Prob(M_S,c,\sigma,\Delta_{c-i-d}) \\
& = & \Sigma_{d \in F \setminus \Interp{q_F}} P(\Pi_{c-i-d}) \Prob(M_S,d,\sigma,\Interp{\always \neg q_F}) \\
& = & 0
\end{array}
$$
where the first equality holds by (1) and by the fact that all the configurations reached from $c$ following $\sigma$ belongs to $F$ (by definition of $\sigma$ and $F$); the second equality follows from (3) and the last equality from the fact that for all $d \in F$, $\Prob(M_S,d,\sigma,\Interp{\always \neg q_F}) =0$.

Finally, we have for all $c \in F$:
$$
(\Outcome{M_S,c,\sigma} \cap \Interp{\eventually \always \neg q_F}) \subseteq (\Outcome{M_S,c,\sigma} \cap \bigcup_{i \in \nat} \Delta_{c-i})
$$
 This is due to the fact that if an infinite play belongs to $\Outcome{M_S,c,\sigma} \cap \Interp{\eventually \always \neg q_F}$, then it will pass only a finite number of times through $\Interp{q_F}$. From this inclusion, the previous equality and using (2), we deduce that $\Prob(M_S,c,\sigma,\Interp{\eventually \always \neg q_F}) \linebreak[0]\leq \sum_{i \in \nat} \Prob(M_S,c,\sigma,\Delta_{c-i})=0$. Hence, for all $c \in F$, we have $\Prob(M_S,c,\sigma,\Interp{\always \eventually q_F}) =1$, which allows us to conclude that $F \subseteq W^1_{AS}$. \qed
\end{proof}

\subsection{Proof of Lemma \ref{lem:alg-limitsure-termination}}

\begin{proof}
Algorithm~\ref{alg:limitsure} explores an unfolding of the computation tree of
$S$, which is finitely branching since $|T|$ is finite.
The number of counters is fixed, and therefore, by Dickson's Lemma,
$(\mathbb{N}^d,\preceq)$ is a well quasi ordering.
Therefore, on every branch we eventually satisfy either the condition of line 19 or
of line 7.
In the former case, a loop in the derived system $S'$ is created, and the
exploration of the current branch stops.
In the latter case, a finitary description of a new colored (possibly
infinite-state) subsystem is added to $S'$
by adding finitely many states, transitions and configurations to $Q'$, $T'$
and $X'$, respectively. Also in this case, the exploration of the current branch stops.
Since the exploration is finitely branching, and every branch eventually
stops, the algorithm terminates. \qed
\end{proof}

\subsection{Proof of Lemma \ref{lem:limitsure-correctness-algo}}

\begin{lemma}\label{lem:limitsure_oldtonew}
$\mathbb{P}^+(M_S, c_0, \Interp{\eventually X}) = 1 \implies 
\mathbb{P}^+(M_{S'}, c_0', \Interp{\eventually X'}) = 1$.
\end{lemma}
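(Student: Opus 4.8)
The plan is to fix $\epsilon>0$, use $\mathbb{P}^+(M_S,c_0,\Interp{\eventually X})=1$ to pick a strategy $\sigma$ of Player~1 with $\mathbb{P}(M_S,c_0,\sigma,\Interp{\eventually X})\ge 1-\epsilon$, and build from it a strategy $\sigma'$ in $M_{S'}$ with $\mathbb{P}(M_{S'},c_0',\sigma',\Interp{\eventually X'})\ge 1-\epsilon$; as $\epsilon$ is arbitrary this gives $\mathbb{P}^+(M_{S'},c_0',\Interp{\eventually X'})=1$. The construction rests on reading off what Algorithm~\ref{alg:limitsure} produces. Its \emph{uncolored part} is, through $\lambda$, a faithful finite unfolding of $M_S$ from $c_0$: every uncolored state $u$ carries a genuine configuration $\lambda(u)$ of $M_S$; every uncolored state at which the test of line~7 failed has its $S'$-successors in weight-preserving bijection with the transitions of $M_S$ enabled at $\lambda(u)$ (edges created on line~24 go to fresh copies of the successor, edges created on line~21 loop back to the earlier copy of the \emph{same} configuration on that branch); and every boundary state $q$, where line~7 succeeded with increasing index $k$, has a single outgoing edge of zero effect to the entry state $q''$ of a colored copy $Proj_k(S,d,i)$, which is just $S$ with the $k$-th counter deleted and its control-states recolored, with $\lambda(q'')=\tuple{col_i(state(\lambda(q))),proj_k(count(\lambda(q)))}$. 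Occurrences of states of $X$ are recorded in $X'$ both in the uncolored part and, recolored, in every colored copy (lines~16 and~29).

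The strategy $\sigma'$ makes Player~1 shadow $\sigma$. While the play stays in the uncolored part, the $M_{S'}$-history determines (via $\lambda$, after collapsing the line-21 back-edges into revisits of the same configuration) a unique $\sigma$-consistent history of $M_S$, and $\sigma'$ plays the move $\sigma$ prescribes for it. When the play is forced from a boundary state $q$ into the colored copy at $q''$, $\sigma'$ stores in its memory a \emph{ghost} value of the deleted counter, initialised to $count(\lambda(q))(k)$, and from then on simulates $\sigma$ on the configuration obtained by reinserting the ghost value into the current colored configuration, taking $proj_{k,i}(t)$ whenever $\sigma$ would take $t$ and updating the ghost value by $\vect{z}(t)(k)$. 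This is well defined precisely because the system is a $1$-VASS-MDP: every transition leaving a control-state of $Q_P$ carries the zero vector, so (i)~no probabilistic transition is ever disabled and, since $\tau_{k,i}(proj_{k,i}(t))=\tau(t)$ on such transitions, the one-step distribution of Player~P in a colored copy at any reachable configuration equals that of $M_S$ at the reinserted configuration; and (ii)~the only moves that deleting the $k$-th counter can newly enable belong to Player~1, and $\sigma'$ never chooses one, as a move $\sigma$ takes in $M_S$ is already enabled there, hence a fortiori after deletion.

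Finally I would couple the process induced by $\sigma'$ on $M_{S'}$ from $c_0'$ with the one induced by $\sigma$ on $M_S$ from $c_0$: let both resolve probabilistic steps with the same randomness --- legitimate because, by~(i) and by the weight match in the uncolored part, the two one-step kernels at corresponding configurations coincide --- and maintain the invariant that the current $M_S$-configuration is $\lambda$ of the current uncolored state, respectively the reinsertion of the ghost value into the current colored configuration; the one-step discrepancy caused by the forced edge $q\to q''$ is a harmless stutter on the $M_S$ side. The invariant is preserved at Player-1 steps because $\sigma'$ replays $\sigma$ on exactly the coupled $M_S$-history, and at Player-P steps by the shared randomness and matching kernels, and it entails that the $M_{S'}$-run reaches $X'$ iff the coupled $M_S$-run reaches $X$. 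Since under this coupling the $M_S$-component is distributed as $\sigma$'s runs and the $M_{S'}$-component as $\sigma'$'s runs, we obtain $\mathbb{P}(M_{S'},c_0',\sigma',\Interp{\eventually X'})=\mathbb{P}(M_S,c_0,\sigma,\Interp{\eventually X})\ge 1-\epsilon$. The delicate part is not any individual inequality but the bookkeeping --- making the faithful-unfolding claim precise, including the back-edges, and checking the coupling invariant through the ghost-counter machinery; the one genuinely load-bearing fact, and the one that breaks for general VASS-MDPs, is the single-sidedness used in the second paragraph: erasing a counter hands Player~P no new move.
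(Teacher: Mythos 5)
Your proof is correct and follows essentially the same route as the paper's: a copycat strategy on $M_{S'}$ that shadows $\sigma_\epsilon$ via the correspondence $\lambda$ in the uncolored part and via projection in the colored copies, with the load-bearing observation being exactly the one the paper isolates --- in a 1-VASS-MDP the probabilistic transitions carry zero vectors, so deleting a counter neither disables nor newly enables any Player-P move and the one-step kernels coincide. Your version merely makes the paper's informal ``stepwise correspondence'' explicit through the ghost-counter memory and the coupling invariant, which is a welcome tightening rather than a different argument.
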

\begin{proof}
Let us assume that $\mathbb{P}^+(M_S, c_0, \Interp{\eventually X}) = 1$. 
Therefore, there exists a family of strategies that make the probability of
reaching $X$ arbitrarily close to $1$. 
In other words, $\forall \epsilon,\exists \sigma_{\epsilon}, \mathbb{P}(M_S, c_0, \sigma_{\epsilon},  \Interp{\eventually X}) \geq 1- \epsilon$. 
For every $\epsilon >0$ we use the strategy $\sigma_\epsilon$ of player 1 on
$M_S$ to construct a copycat strategy $\sigma'_\epsilon$ for the game on $M_{S'}$ 
that starts in $c'_0=(q_0,\textbf{0} )$, such that it achieves
$\Interp{\eventually X'}$ 
with probability $\geq 1- \epsilon$. 

The strategy $\sigma'_\epsilon$ will use the same moves on $M_{S'}$ as
$\sigma_\epsilon$ on $M_S$, which is possible due to the way how $M_{S'}$ 
is constructed from $M_S$ by Algorithm~\ref{alg:limitsure}.
By construction, for every reachable configuration in $M_S$ there is a
corresponding configuration in $M_{S'}$, and this correspondence can be maintained
stepwise in the moves of the game.

For the initial uncolored part of $M_{S'}$, this is immediate, since $S'$ is
derived from the unfolding of the game tree of $S$.
The correspondence is expressed by the function $\lambda$. Each current state of $M_{S'}$ is labeled
by the corresponding current configuration of $M_S$.

In the colored subsystems, the corresponding configuration in system $M_{S'}$ 
is a projection of a configuration in $M_S$.
For any transition $t \in T$ that is controlled by player 1 from a
configuration in $M_S$, there exists a transition $t' \in T'$ that belongs to
player 1 in the corresponding configuration in $M_{S'}$, such that this transition
leads to the corresponding state.
This is achieved by the projection and the fact that the 1-VASS-MDP game is monotone
w.r.t.\ player 1, i.e., larger configurations always benefit the player (by
allowing the same moves or even additional moves).

We now show a property on how probabilistic transitions in $M_S$ and $M_{S'}$
correspond to each other:
For every probabilistic transition $t \in T$ from a configuration in $M_S$, 
there exists a probabilistic transition $t' \in T'$ in the corresponding
configuration in $M_{S'}$, and vice-versa, such that these transitions have the same
probability.
In particular, a configuration in $M_{S'}$ does not allow any additional
probabilistic transitions compared to its corresponding configuration in $M_S$
(though it may allow additional transitions controlled by player 1).

The first part of this statement follows from the monotonicity of the
projection function and the monotonicity of the transitions 
w.r.t.\ the size of the configurations.
For the second part we need to show that for every probabilistic transition  
$t'=\tuple{col_i(x), proj_k(op), col_i(y)} \in T'$ from a configuration in $M_{S'}$, 
there exists a probabilistic transition $t=\tuple{x,\vect{0},y} \in T_P$ in the
corresponding configuration in $S$, such that the probabilities of these
transitions are equal. 
This latter fact holds only because we are considering 1-VASS-MDP, where only
the player can change the counters, 
whereas the probabilistic transitions can only change the control-states.
I.e., the `larger' projected configurations in $M_{S'}$ do not enable additional
probabilistic transitions, since in 1-VASS-MDP these only depend on the
control-state.

Therefore, by playing in $M_{S'}$ using strategy $\sigma'_\epsilon$ with the same
moves as $\sigma_\epsilon$ plays in $M_S$, we reach the same corresponding
configurations in $M_{S'}$ with the same probability values as in $M_S$.
Since the definition of the target set $X'$ in $S'$ includes all
configurations corresponding to configurations in $X$ on $S$,
it follows from 
$\mathbb{P}(M_S, c_0, \sigma_{\epsilon},  \Interp{\eventually X}) ) \geq 1-\epsilon$
that 
$\mathbb{P}(M_{S'}, c_0', \sigma'_{\epsilon},  \Interp{\eventually X'}) \geq
1- \epsilon$.
Since, by assumption above, this holds for every $\epsilon >0$,
we obtain $\mathbb{P}^+(M_{S'}, c'_0,\Interp{\eventually X'}) ) = 1$.
 \qed 
\end{proof}

\begin{lemma}\label{lem:limitsure_newtoold}
$\Prob^+(M_{S'}, c_0', \Interp{\eventually X'}) = 1 \implies \Prob^+(M_S, c_0, \Interp{\eventually X}) = 1$.
\end{lemma}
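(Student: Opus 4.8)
The plan is to prove this implication by a ``reverse copycat'' argument dual to the one used for Lemma~\ref{lem:limitsure_oldtonew}: from a family of player-1 strategies witnessing $\Prob^+(M_{S'},c_0',\Interp{\eventually X'})=1$ I would build a family witnessing $\Prob^+(M_S,c_0,\Interp{\eventually X})=1$, compensating for the counter that Algorithm~\ref{alg:limitsure} projected away inside each colored subsystem by first ``pumping'' the loop that caused that subsystem to be created. First I would fix $\epsilon>0$, pick a strategy $\sigma'$ on $M_{S'}$ with $\Prob(M_{S'},c_0',\sigma',\Interp{\eventually X'})\ge 1-\epsilon/2$, and — using that $\Interp{\eventually X'}$ is the increasing union over $n$ of the events ``$X'$ is reached within $n$ steps'' together with continuity of measure from below — extract a horizon $N\in\nat$ such that $\sigma'$ reaches $X'$ within $N$ steps with probability $\ge 1-\epsilon$. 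Let $B$ bound the absolute value of every counter update occurring in $S$.

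Next I would record the shape of $M_{S'}$: a finite \emph{uncolored part} $U$ (a truncated unfolding of the computation tree of $S$, with some back-edges), plus finitely many \emph{colored subsystems} $C_i$, each a copy of $S$ with a single counter $k_i$ removed, entered only through one deterministic edge $q_i\to q_i''$ from a leaf $q_i$ of $U$; that leaf was produced because an ancestor $q_i'$ of $q_i$ on the same branch has $\lambda(q_i')\prec\lambda(q_i)$ with a strict increase on $k_i$, so the branch segment $\ell_i$ from $q_i'$ to $q_i$ is a loop of $S$ with net counter effect $\geq\vect 0$ and strictly positive on $k_i$; moreover any play of $M_{S'}$ stays in $U$ forever or eventually enters exactly one $C_i$ and stays there, and $X'$ collects all $X$-labelled nodes of $U$ and all colored copies of $X$-states inside the $C_i$. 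I would then define a strategy $\sigma$ on $M_S$ together with a pumping amount $m=N\cdot B$, maintaining along every $\sigma$-play a \emph{shadow} play of $(M_{S'},\sigma')$ and the invariant that the current $M_S$-configuration dominates the shadow one (same control state, counters $\ge$): while the shadow is in $U$, $M_S$ matches $\lambda(\cdot)$ up to $\geq$; while the shadow is in $C_i$, $M_S$ matches the colored configuration up to $\geq$ on counters $\neq k_i$ and has value on $k_i$ at least $N\cdot B$ minus the number of steps taken inside $C_i$. Inside $U$, $\sigma$ copies $\sigma'$ move for move; when the shadow reaches a leaf $q_i$ (where $\sigma'$ is forced through $q_i\to q_i''$), $\sigma$ first iterates $\ell_i$ $m$ times — raising counter $k_i$ by $\ge m=N\cdot B$ while keeping every other counter $\ge$ its value at $\lambda(q_i)$ — and only then starts mirroring $\sigma'$ inside $C_i$, reading each projected transition of $C_i$ back as the corresponding transition of $S$.

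The key correctness point is faithfulness of the simulation. Inside $U$ it is exact, and the probabilistic transitions match \emph{exactly}, because in a 1-VASS-MDP player-P transitions never touch counters and Algorithm~\ref{alg:limitsure} copies their weights verbatim. For the colored part: monotonicity of the VASS w.r.t.\ player 1 ensures that the larger $M_S$-configuration enables at least the player-1 moves that $\sigma'$ uses in $C_i$ and yields $\geq$ successors; counter $k_i\geq N\cdot B$ stays nonnegative over the $\leq N$ steps that the good plays of $C_i$ need; and, again because $S$ is a 1-VASS-MDP, the projected probabilistic transitions of $C_i$ correspond bijectively and weight-preservingly to player-P transitions of $S$, so the probabilities still match. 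Summing the probability mass of the shadowed good plays then gives $\Prob(M_S,c_0,\sigma,\Interp{\eventually X})\ge 1-\epsilon$, and since $\epsilon$ was arbitrary, $\Prob^+(M_S,c_0,\Interp{\eventually X})=1$.

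\textbf{Main obstacle.} The delicate step is the pumping. Player P's random choices along $\ell_i$ may divert the play off the loop before all $m$ iterations are completed, so $\sigma$ must also prescribe what to do then — resume shadowing $\sigma'$ from the $U$-node reached by the deviation, which is legitimate because that node lies in the uncolored part and the $M_S$-configuration still dominates its label (partial loop iterations only increase counters) — and the probability accounting must show that the mass redirected onto these ``incomplete-pump'' continuations does not prevent the overall reaching probability from tending to $1$; this may well require an adaptive rather than fixed pumping schedule and a recursive analysis over the finite structure of $U$. Here the 1-VASS-MDP hypothesis is indispensable: because player P cannot change counters, its deviations cannot escape the structure captured by $M_{S'}$ and cannot alter any transition probabilities, so both the domination invariant and the matching of probabilities survive step by step. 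Turning this into a rigorous probability bound — not the combinatorial shadowing — is the real work of the proof.
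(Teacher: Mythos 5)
Your overall direction (simulate the $M_{S'}$ strategies in $M_S$, exploit that in a 1-VASS-MDP probabilistic transitions never touch counters so their probabilities transfer verbatim, use monotonicity for Player 1, and compensate the projected component by pumping before entering a colored subsystem) is the right one, but the proof has a genuine gap exactly where you flag it, and the fix you sketch would not go through as stated. Your pumping step asks Player 1 to iterate the loop $\ell_i$ from $q_i'$ to $q_i$ a fixed number $m=N\cdot B$ of times, but that loop is a path in the computation tree that in general passes through probabilistic control states, so Player 1 cannot force its traversal: each iteration only succeeds with some probability bounded away from $1$, and the probability of completing all $m$ iterations decays geometrically in $m$, which is precisely the regime where the naive union bound destroys the $1-\epsilon$ guarantee. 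Your proposed repair (``resume shadowing $\sigma'$ from the node reached by the deviation'') is the beginning of the right idea, but you leave the probability accounting open and even concede that it ``may well require an adaptive schedule and a recursive analysis''; that accounting is not a routine detail, it is the heart of the lemma, so the proof is incomplete.

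The paper closes this gap with two ingredients your proposal lacks. First, it does not work with a single near-optimal strategy $\sigma'$ with a global horizon; it splits $M_{S'}$ into the finite uncolored part and the colored subsystems, lets $J$ be the set of colors $i$ whose entry configurations $r_i$ satisfy $\Prob^+(M_{S'},r_i,\Interp{\eventually X'})=1$, and observes that limit-sure reachability of $X'$ from $c_0'$ forces $\Prob^+(M_{S'},c_0',\Interp{\eventually X_f'\cup R})=1$ where $R=\{r_i\mid i\in J\}$; since this objective lives in a finite-state MDP, limit-sure and almost-sure coincide, yielding one fixed strategy $\astrat$ that reaches $X_f'\cup R$ with probability $1$. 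Second, the pumping is not forced by Player 1 at all: whenever the shadow play reaches some $r_i$ but the real $M_S$-configuration is still below the threshold $s_i(d(\epsilon))$ (with $d(\epsilon)$ chosen from finite-horizon truncations of the $\epsilon$-optimal strategies $\astrat_i^\epsilon$ inside the good colored subsystems), the combined strategy simply rewinds the shadow to the earlier, smaller uncolored configuration and keeps playing $\astrat$. Each such rewind weakly increases all counters and strictly increases the pumped component $k(i)$, and because $\astrat$ reaches $X_f'\cup R$ almost surely on every retry, almost every play either hits $X$ directly or eventually exceeds some threshold $s_i(d(\epsilon))$ with $i\in J$, after which $\astrat_i^\epsilon$ delivers probability $\ge 1-2\epsilon$; averaging gives the claim. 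This retry-and-almost-sure-restart structure is exactly what replaces your unproved bound on the mass lost to incomplete pumps, so to complete your proof you would essentially have to rediscover it (or something equivalent).
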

\begin{proof}
We use the assumed family of strategies on $M_{S'}$ that witnesses the property
$\Prob^+(M_{S'}, c_0', \Interp{\eventually X'}) = 1$ to 
synthesize a family of strategies on $M_S$ that witnesses
$\Prob^+(M_S, c_0, \Interp{\eventually X}) = 1$.

First we establish some basic properties of the system $S'$. It is a 1-VASS-MDP
of dimension $d-1$ with initial configuration $c_0'$, and consists of several
parts.
The initial uncolored part induces a finite-state MDP. Moreover, $S'$ contains finitely many
subsystems of distinct colors, where each subsystem is a 1-VASS-MDP
of dimension $d-1$ obtained from $S$ by projecting out one component of the
integer vector. For color $i$, let $k(i)$ be the projected component of the
vector (see line 9 of the algorithm).
Each colored subsystem of dimension $d-1$ induces an MDP that may be
infinite-state (unless $d=1$, in which case it is finite-state). 

Note that colored subsystems are not reachable from each
other, i.e., a color, once reached, is preserved.
Each colored subsystem has its own initial configuration (created in lines 11-12 of
Alg.~\ref{alg:limitsure}). Let $m$ be the number of colors in $S'$ and
$r_i$ the initial configuration of the subsystem of color $i$ 
(where $0 \le i \le m-1$).

Let's now consider only those colored subsystems in which the target set $X'$
can be reached limit-surely, i.e., let 
$J = \{i:0\le i \le m-1\ |\ \Prob^+(M_{S'}, r_i, \Interp{\eventually X'}) = 1\}$
be the set of good colors and let
$R = \{r_j\ |\ j \in J\}$, and $\bar{R} = \{r_j\ |\ j \notin J\}$.

Further, let $X_f'$ be the restriction of $X'$ to the finite uncolored part of
$S'$ (i.e., only those parts added in line 25 of Alg.~\ref{alg:limitsure}). 

We now establish the existence of certain strategies in subsystems of $S'$.
These will later serve as building blocks for our strategies on $M_S$.

Since we assumed that $\Prob^+(M_{S'}, c_0', \Interp{\eventually X'}) = 1$, 
there exists a family of strategies that makes the probability of reaching
$X'$ arbitrarily close to one. In particular, they must also make the 
probability of reaching configurations in $\bar{R}$ arbitrarily close to zero.
Thus we obtain $\Prob^+(M_{S'}, c_0', \Interp{\eventually X_f' \cup R}) = 1$,
i.e., we can limit-surely reach $X_f' \cup R$. Since, for this objective, 
only the finite uncolored part of $M_{S'}$ is relevant, this is a problem for a
finite-state MDP and limit-surely and almost-surely coincide.
So there exists a partial strategy $\astrat$, for the uncolored part of $M_{S'}$,
such that, starting in $c_0'$, we almost-surely reach $X_f' \cup R$, i.e.,
$\Prob(M_{S'}, c_0', \astrat, \Interp{\eventually X_f' \cup R}) = 1$.

In each of the good colored subsystems we can limit-surely reach $X'$, i.e., 
for every $r_j \in R$ we have 
$\Prob^+(M_{S'}, r_i, \Interp{\eventually X'}) = 1$. 
So for every $\epsilon >0$ there exists a strategy $\astrat_i^\epsilon$ such that
$\Prob(M_{S'}, r_i, \astrat_i^\epsilon, \Interp{\eventually X'}) \ge 1-\epsilon$.
Consider the computation tree of the game on $M_{S'}$ from $r_i$ when playing according to
$\astrat_i^\epsilon$ and its restriction to some finite depth $d$.
Let $\Prob_d(M_{S'}, r_i, \astrat_i^\epsilon, \Interp{\eventually X'})$ be the
probability that the objective is reached already during the first $d$ steps
of the game. We have
$\Prob_d(M_{S'}, r_i, \astrat_i^\epsilon, \Interp{\eventually X'}) \le \Prob(M_{S'}, r_i, \astrat_i^\epsilon, \Interp{\eventually X'})$, 
but
$\lim_{d \rightarrow \infty} \Prob_d(M_{S'},
r_i, \astrat_i^\epsilon, \Interp{\eventually X'}) = 
\Prob(M_{S'}, r_i, \astrat_i^\epsilon, \Interp{\eventually X'}) \ge
1-\epsilon$.
Thus for every color $i \in J$ and every $\epsilon >0$ there exists a
number $d(i,\epsilon)$ s.t. 
$\Prob_{d(i,\epsilon)}(M_{S'}, r_i, \astrat_i^\epsilon, \Interp{\eventually
X'}) \ge 1-2\epsilon$.

Since configurations of $M_{S'}$ are obtained by projecting configurations
of $M_S$, we can go the reverse direction by replacing the missing component in
an $M_{S'}$ configuration by a given number. Given an $M_{S'}$-configuration $r_i$ and
a number $d(i,\epsilon)$ we obtain an $M_S$-configuration $s_i(d(i,\epsilon))$ by replacing the
missing $k(i)$-th component of $r_i$ by $d(i,\epsilon)$.
Let $\alpha \in \nat$ be the maximal constant appearing in any transition in
$S$, i.e., the maximal possible change in any counter in a single step.
Since a single step in $M_S$ can only change a counter by $\le \alpha$, the $k(i)$-th
component of $s_i(\alpha * d(i,\epsilon))$ cannot be exhausted during the first $d(i,\epsilon)$ steps
of the game on $M_S$ starting at $s_i(\alpha * d(i,\epsilon))$.
Thus we can use the same strategy $\astrat_i^\epsilon$ in the game from
$s_i(\alpha * d(i,\epsilon))$
on $M_S$ and obtain 
$\Prob(M_S, s_i(\alpha * d(i,\epsilon)), \astrat_i^\epsilon, \Interp{\eventually X}) \ge 1-2\epsilon$.
Intuitively, the number $\alpha * d(i,\epsilon)$ is big enough to allow playing the
game for sufficiently many steps to make the probability of success close to $1$. 

Using the strategy $\astrat$ above and the strategies $\astrat_i^\epsilon$, we
now define a new family of strategies $\astrat_\epsilon$ for every $\epsilon >0$ for the game on $M_S$
from $c_0$. Given $\epsilon >0$, we let $d(\epsilon) = \alpha * \max_{i \in J}
d(i,\epsilon)$ (a number that is big enough for each projected component).

Playing from $c_0$ in $M_S$, the strategy $\astrat_\epsilon$ behaves as follows.
First it plays like strategy $\astrat$ in the corresponding game from $c_0'$
on $M_{S'}$. (Function $\lambda$ connects the corresponding configurations in the
two games.)
When the game in $M_{S'}$ reaches a configuration $r_i$ then there are two cases:
If the configuration in $M_S$ is $\ge s_i(d(\epsilon))$ then $\astrat_\epsilon$
henceforth plays like $\astrat_i^\epsilon$, which ensures to reach the target $X$
with probability $\ge 1-2\epsilon$.
Otherwise, the configuration in $M_S$ is still too small to switch to
$\astrat_i^\epsilon$.
In this case, $\astrat_\epsilon$ continues to play like $\astrat$ plays from
the previously visited smaller configuration in the uncolored part of $M_{S'}$ 
(see line 7 of the algorithm). This is possible, because the game is monotone
and larger configurations always benefit Player 1. So the game on $M_S$
continues with a configuration that is larger (at least on component $k(i)$) than
the corresponding game on $M_{S'}$, i.e., component $k(i)$ is pumped.
Since we know that $\astrat$ on $M_{S'}$ will almost surely visit $X_f'$ or $R$,
we obtain that $\astrat_\epsilon$ on $M_S$ will almost surely eventually visit
$X$ or some configuration $\ge s_i(d(\epsilon))$ for $i \in J$ (and from there
achieve to reach the target with probability $\ge 1-2\epsilon$).
Since every weighted average of probabilities $\ge 1-2\epsilon$ is still
$\ge 1-2\epsilon$, we obtain $\Prob(M_S, c_0, \astrat_\epsilon, \Interp{\eventually X}) \ge
1-2\epsilon$ and thus $\Prob^+(M_S, c_0, \Interp{\eventually X}) = 1$.
\qed
\end{proof}

\subsection{Proof of Theorem \ref{thm-limitsure-reach-1VAMDP}}

\begin{proof}
Let $S=\tuple{Q,Q_1,Q_P,T,\tau}$ be 1-VASS-MDP of dimension $d > 0$ with initial
configuration $c_0 = \tuple{q_0,\vec{v}}$ and $X\subseteq Q$ a set of target states.
We show decidability of $\mathbb{P}^+(M_S, c_0, \Interp{\eventually X}) = 1$ by
induction on $d$.
\paragraph{Base case}
$d=0$. If $S$ has $0$ counters then $M_S$ is a finite-state MDP and thus limit
sure reachability coincides with almost sure reachability, 
which is decidable. 

\paragraph{Inductive step.}
We apply Algorithm~\ref{alg:limitsure}, which terminates by
Lemma~\ref{lem:alg-limitsure-termination}, and obtain 
a new instance of the 1-VASS-MDP limit sure reachability problem of dimension
$d-1$: $S'=\tuple{Q',Q_1',Q_P',T',\tau'}$ with initial configuration 
$c_0' = \tuple{q_0',\vec{0}}$ and set of target states $X' \subseteq Q'$.
By Lemma~\ref{lem:limitsure_oldtonew} and Lemma~\ref{lem:limitsure_newtoold},
we have $\mathbb{P}^+(M_S, c_0, \Interp{\eventually X}) = 1 
\ \Leftrightarrow\ \mathbb{P}^+(M_{S'}, c_0', \Interp{\eventually X'}) = 1$.
By induction hypothesis, $\mathbb{P}^+(M_{S'}, c_0', \Interp{\eventually X'}) = 1$
is decidable and the result follows.
\qed
\end{proof}


\end{document}